\documentclass[11pt,titlepage=off,abstract=on]{scrartcl}
\usepackage[utf8]{inputenc}
\usepackage[
    paper=letterpaper,
    left=1in,
    right=1in,
    bottom=1in, 
    top=1in,
    footskip=0.33in,
    bindingoffset=0mm
]{geometry}

\usepackage[dvipsnames]{xcolor}
\definecolor{color3}{RGB}{140,29,49}
\definecolor{color2}{RGB}{31,107,76}
\definecolor{color1}{RGB}{29,66,91}
\usepackage[colorlinks=true,linkcolor=color1,urlcolor=color2,citecolor=color3,anchorcolor=green,pdfusetitle]{hyperref}

\let\latextextsuperscript\textsuperscript
\AtBeginDocument{\let\textsuperscript\latextextsuperscript}

\usepackage{amsmath,amsthm,amsfonts,mathtools,stmaryrd}
\usepackage{newtxtext,newtxmath,microtype,bbm,dsfont}
\usepackage{booktabs,multirow,makecell}

\usepackage[font=normalfont,labelfont=bf,format=plain]{caption}

\input{braket.tex}

\usepackage{algpseudocode,algorithm}
\algnewcommand\Yield{\textbf{yield\ }}
\algnewcommand\Break{\textbf{break}}

\displaywidowpenalty=1000
\widowpenalty=1000
\clubpenalty=1000

\usepackage{cleveref}
\crefname{lemma}{lemma}{lemmas}
\crefname{proposition}{proposition}{propositions}
\crefname{definition}{definition}{definitions}
\crefname{theorem}{theorem}{theorems}
\crefname{conjecture}{conjecture}{conjectures}
\crefname{corollary}{corollary}{corollaries}
\crefname{example}{example}{examples}
\crefname{section}{section}{sections}
\crefname{appendix}{appendix}{appendices}
\crefname{figure}{fig.}{figs.}
\crefname{equation}{eq.}{eqs.}
\crefname{table}{table}{tables}
\crefname{item}{property}{properties}
\crefname{remark}{remark}{remarks}

\newtheorem{theorem}{Theorem}
\newtheorem{definition}[theorem]{Definition}

\newtheorem{lemma}[theorem]{Lemma}

\usepackage[backend=bibtex8,style=alphabetic,sorting=nyt,doi=false,isbn=false,url=false,maxbibnames=20,maxcitenames=2]{biblatex}

\newbibmacro{string+doi}[1]{\iffieldundef{doi}{#1}{\href{https://dx.doi.org/\thefield{doi}}{#1}}}
\DeclareFieldFormat{title}{\usebibmacro{string+doi}{\mkbibemph{#1}}}
\DeclareFieldFormat[article]{title}{\usebibmacro{string+doi}{\mkbibquote{#1}}}
\DeclareFieldFormat[incollection]{title}{\usebibmacro{string+doi}{\mkbibquote{#1}}}

\newcommand{\bC}{\field{C}}
\newcommand{\bZ}{\field{Z}}
\newcommand{\bN}{\field{N}}
\newcommand{\bF}{\field{F}}
\newcommand{\bR}{\field{R}}

\newcommand{\cA}{\mathcal{A}}

\newcommand{\cD}{\mathcal{D}}

\newcommand{\cG}{\mathcal{G}}
\newcommand{\cH}{\mathcal{H}}

\newcommand{\cM}{\mathcal{M}}
\newcommand{\cN}{\mathcal{N}}

\newcommand{\cP}{\mathcal{P}}

\newcommand{\cS}{\mathcal{S}}

\newcommand{\ox}{\otimes}
\newcommand{\one}{\mathbbm{1}}
\DeclareMathOperator{\tr}{tr}
\newcommand{\id}{\mathrm{id}}
\newcommand{\eps}{\varepsilon}
\DeclareMathOperator{\supp}{supp}
\newcommand{\LC}{\mathrm{LC}}
\newcommand{\bX}{\bar{X}}
\newcommand{\barZ}{\bar{Z}}
\newcommand{\bz}{\bar{0}}
\newcommand{\bo}{\bar{1}}

\newcommand{\ii}{\mathbbm i}
\newcommand{\s}[1]{\mathrm{#1}}

\usepackage{xspace,xparse}
\usepackage{graphicx,graphbox}
\usepackage{grffile}

\DeclareDocumentCommand{\graph}{s m O{\textwidth}}{%
\IfValueTF{#1}{%
\includegraphics[align=c,width=#3]{graphs/#2.pdf}
}{%
\includegraphics[width=#3]{graphs/#2.pdf}
}
}

\DeclareDocumentCommand{\$}{m}{\texttt{#1}\xspace}
\usepackage{xcolor,tikz}
\usepackage{subcaption}

\usetikzlibrary{shapes}

\DeclareDocumentCommand{\surfacePlot}{ m m O{1} O{1} o }{%
\begin{tikzpicture}[
    x=#1,y=#1,
    label/.style={
        scale=#3,
        circle,
        fill=white,
        fill opacity=.4,
        text opacity=1,
        draw=none
    },
    plotlabel/.style={
        scale=#3,
        rectangle,
        fill=white,
        fill opacity=.4,
        text opacity=1,
        draw=none,
        align=center
    }
]
    \node at (0,0) {\includegraphics[width=#1]{#2}};
    \node[label] at (.045,.46) {$z$};
    \node[label] at (-.43,-.205) {$x$};
    \node[label] at (.43,-.205) {$y$};
    \IfValueT{#5}{%
        \node[plotlabel] at (0,-0.35) {#5};
    }
    #4
\end{tikzpicture}
}
\DeclareDocumentCommand{\subplotM}{ m m O{1} O{vs.single.png} o }{%
\begin{subfigure}[b]{#2\textwidth}
    \surfacePlot{\textwidth}{graphs/#1.#4}[#3][1][#5]
\end{subfigure}
}
\DeclareDocumentCommand{\subplotA}{ m o }{%
\subplotM{#1}{0.32}[.9][vs.single.png][\IfValueTF{#2}{#2}{\${#1}}]
}
\DeclareDocumentCommand{\subplotAA}{ m o }{\subplotM{#1}{0.32}[.9][png][#2]}
\DeclareDocumentCommand{\subplotAAsmall}{ m o }{\subplotM{#1}{0.24}[.8][png][#2]}
\DeclareDocumentCommand{\subplotB}{ m o }{%
\subplotM{#1}{0.15}[.7][vs.single.png][\IfValueTF{#2}{#2}{\${#1}}]
}
\DeclareDocumentCommand{\subplotCC}{ m o }{\subplotM{#1}{0.48}[.9][png][#2]}

\DeclareDocumentCommand{\subplotD}{ o m }{%
\subplotM{#2}{0.18}[.7][vs.single.png][\IfValueTF{#1}{#1}{\${#2}}]
}

\DeclareRobustCommand{\smallColorScale}{%
    \vspace*{-2mm}\hspace*{-1mm}\begin{tikzpicture}[x=179.5,y=179.5,rotate=270,transform shape]
    \node[] at (-.6,0) {\includegraphics[height=7cm,width=0.3cm]{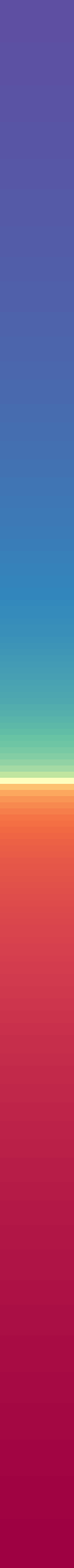}};
        \foreach \i/\l in {-4/-0.1,-3/,-2/-0.05,-1/,0/0,1/,2/0.05,3/,4/0.1}{
            \draw (-0.64,\i*.552/4) -- (-0.61,\i*.552/4) node[overlay,above,rotate=90,yshift=5] {\footnotesize{$\l$}};
        }
    \end{tikzpicture}
}

\usepackage{pgfplots}

\usepackage[section]{placeins} 
\usepackage{setspace}

\bibliography{references}

\setkomafont{title}{\normalfont\bfseries}
\title{Error Thresholds for\\ Arbitrary Pauli Noise}

\usepackage{authblk}

\author[1]{Johannes Bausch}
\affil[1]{CQIF, DAMTP, University of Cambridge, UK. \texttt{jkrb2@cam.ac.uk}}
\author[2,3]{Felix Leditzky}
\affil[2]{Department of Mathematics \& IQUIST, University of Illinois at Urbana-Champaign, USA. \texttt{leditzky@illinois.edu}}
\affil[3]{JILA \& CTQM, University of Colorado Boulder, USA.}
\usepackage{etoolbox}
\patchcmd\abstract{\small}{\normalfont}{}{}

\date{\today}

\definecolor{envcolor}{rgb}{0,0.72,0.5}

\newcommand\sys{_\mathrm{A}}
\newcommand\env{_\mathrm{R}}
\newcommand\tot{_\mathrm{AR}}
\newcommand\systosys{_{\mathrm A \shortrightarrow \mathrm A}}
\newcommand\systoenv{_{\mathrm A \shortrightarrow \mathrm R}}
\newcommand\envtosys{_{\mathrm R \shortrightarrow \mathrm A}}
\newcommand\envtoenv{_{\mathrm R \shortrightarrow \mathrm R}}

\DeclareMathOperator{\Aut}{Aut}
\DeclareMathOperator{\Stab}{Stab}
\DeclareMathOperator{\Can}{Can}

\newcommand\field{\mathds}
\newcommand\alg{\textsc}
\newcommand\tool{\texttt}

\DeclareDocumentCommand{\modfrac}{m m}{%
    \raisebox{3pt}{$#1$\!}\big/\raisebox{-3pt}{$#2$}
}

\begin{document}
\thispagestyle{empty}
\maketitle

\begin{abstract}
    The error threshold of a one-parameter family of quantum channels is defined as the largest noise level such that the quantum capacity of the channel remains positive.
    This in turn guarantees the existence of a quantum error correction code for noise modeled by that channel.
    Discretizing the single-qubit errors leads to the important family of Pauli quantum channels;
    curiously, multipartite entangled states can increase the threshold of these channels beyond the so-called hashing bound, an effect termed superadditivity of coherent information.
    In this work, we divide the simplex of Pauli channels into one-parameter families and compute numerical lower bounds on their error thresholds.
    We find substantial increases of error thresholds relative to the hashing bound for large regions in the Pauli simplex corresponding to biased noise, which is a realistic noise model in promising quantum computing architectures.
    The error thresholds are computed on the family of graph states, a special type of stabilizer state.

    In order to determine the coherent information of a graph state, we devise an algorithm that exploits the symmetries of the underlying graph resulting in a substantial computational speed-up.
    This algorithm uses tools from computational group theory and allows us to consider symmetric graph states on a large number of vertices.
    Our algorithm works particularly well for repetition codes and concatenated repetition codes (or cat codes), for which our results provide the first comprehensive study of superadditivity for arbitrary Pauli channels.
    In addition, we identify a novel family of quantum codes based on tree graphs.
    The error thresholds of these tree graph states outperform repetition and cat codes in large regions of the Pauli simplex, and hence form a new code family with desirable error correction properties.
\end{abstract}

\clearpage
\linespread{1.00} 
\section{Introduction and Summary of Results}
\subsection{Background}

It is widely believed that quantum error correction is a necessary requirement for quantum computers to maintain coherence in computations.
Quantum error-correcting codes protect logical quantum information from environmental noise by encoding it in larger physical systems in a redundant way.
Based on the particular architecture of the quantum computing device different mathematical models, or \emph{quantum channels}, are used to describe this environmental noise.
The analysis of quantum error correction is greatly simplified by the fact that the full unitary group of possible errors affecting a physical qubit can be discretized to the group of Pauli errors \cite{Shor95,Steane96}: a bit flip $X$, a phase flip $Z$, or a combined bit-phase flip $Y$, where $X$, $Y$ and $Z$ are the well-known Pauli matrices.
The most extensively studied noise model is depolarizing noise, in which the three Pauli errors occur with equal probabilities.
However, in many promising quantum computing architectures such as superconducting qubits \cite{superconducting}, quantum dots \cite{quantumdots}, or trapped ions \cite{trappedions}, the actual noise model is biased towards dephasing noise consisting of only one type of Pauli error.
These channels are examples of so-called Pauli channels (see \Cref{eq:pauli-channel-main}), on which we focus in this paper.

A common assumption in the noise model is that each physical qubit is independently affected by the same type of noise modeled by a quantum channel, that is, no memory effects between qubits are considered.
In this model, commonly referred to as the independent and identically distributed (i.i.d.) noise model, the ability to protect a quantum system (and the quantum information stored in it) from environmental noise is quantified by the \emph{quantum capacity} $Q(\cN)$ of a quantum channel $\cN$.
Positive quantum capacity of a particular quantum channel implies that error correction against the i.i.d.~noise modeled by that channel is in principle possible.
For a one-parameter family $x\mapsto \cN_x$ of quantum channels (such as depolarizing noise above), this leads to the question of determining the \emph{error threshold} (or noise threshold), defined as the supremum over all noise levels $x$ such that the quantum capacity remains positive,  $Q(\cN_x)>0$.
From the perspective of quantum error correction, an error threshold of $x_0$ for the i.i.d.~noise model defined by $x\mapsto \cN_x$ means that for all noise levels $x\leq x_0$ there exists a quantum error-correcting code whose decoding error can be made arbitrarily small by increasing the blocklength of the code.

The quantum capacity theorem \cite{Sch96,SN96,Llo97,Sho02,Dev05} gives a regularized entropic formula for the quantum capacity.
One consequence of this theorem is that positive quantum capacity of a channel can be certified by finding an entangled state that retains coherence after part of it has been affected by the noise.
Here, the coherence is measured by an entropic quantity called the \emph{coherent information}.
Evaluating the coherent information for one copy of a Pauli channel leads to a quantum capacity threshold given by the so-called \emph{hashing bound} (\cite{BDSW96}; see \Cref{sec:numerical-methods} for a discussion).
There are examples of topological quantum error-correcting codes achieving this hashing bound for certain Pauli channels \cite{fujii2012error}.
However, in contrast to Shannon's classical theory of information \cite{Sha48}, it does \emph{not} suffice to compute the coherent information with respect to one copy of the channel, since entangled multipartite quantum states acted upon by i.i.d.~copies of a quantum channel may maintain more coherence \cite{SS96,DSS98,SS07,FW08,CEMOPS15,LLS18}.
This effect is called \emph{superadditivity} of coherent information, and leads to an increased threshold of certain Pauli channels beyond the hashing bound.
The superadditivity is a direct consequence of the multipartite entanglement present in the quantum state.
However, finding suitable multipartite entangled states is challenging due to the exponential scaling of the dimension of the underlying Hilbert space.

\subsection{Summary of Main Results}
In this work we study the quantum capacity thresholds of qubit Pauli channels 
\begin{align}
    \rho\mapsto p_0 \rho + p_1 X\rho X + p_2 Y\rho Y + p_3 Z\rho Z
    \label{eq:pauli-channel-main}
\end{align}
where $(p_0,p_1,p_2,p_3)$ is a probability distribution, corresponding to the Pauli error model in terms of the Pauli matrices $X,Y,Z$ introduced above.
To this end, we evaluate the coherent information of a Pauli channel acting on graph states \cite{Briegel01,Hein04}, a special type of stabilizer states \cite{Got97}.
We find large threshold increases in the entire Pauli channel simplex, in particular for biased noise models where one type of Pauli error dominates over the other ones.
This noise regime is of particular interest as many proposed architectures for quantum computation---such as superconducting qubits \cite{superconducting}, quantum dots \cite{quantumdots}, or trapped ions \cite{trappedions,Ospelkaus2008}---have noise biased towards one of the Pauli errors.
We identify a novel family of quantum codes defined in terms of tree graphs with superadditive coherent information and an increased error threshold.
In large regions of the Pauli channel simplex these new codes outperform the known families of (concatenated) repetition codes.

To obtain our results, we make use of the special error-correcting properties of graph states that simplify the computation of the coherent information under Pauli noise \cite{HDB05}.
We devise an algorithm that exploits the symmetries of the underlying graph, providing a substantial speed-up in the computation of the coherent information for suitable graphs.
The algorithm relies on tools from computational group theory, in particular strong generating sets for permutation groups, and makes heavy use of the theory of group actions.
The resulting speed-up allows us to consider quantum states on a large number of constituent systems.
Furthermore, we obtain algebraic expressions for the coherent information in terms of the probability distribution $(p_0,p_1,p_2,p_3)$ characterizing the Pauli channel in \eqref{eq:pauli-channel-main}, which can then easily be evaluated numerically for the entire Pauli channel simplex.

Our algorithm is particularly well-suited for the class of repetition codes and concatenated repetition codes (or \emph{cat codes}).
These codes are examples of \emph{degenerate} codes where multiple errors are mapped to the same syndrome, and have long been known to increase the thresholds of certain Pauli channels such as the depolarizing channel or the independent bit and phase flip channel \cite{SS96,DSS98,SS07,FW08}.
Our paper comprises the first comprehensive study of the thresholds of these degenerate codes for \emph{arbitrary} Pauli channels.
We also identify a novel type of quantum codes given by tree graph states.
These codes outperform the best repetition and cat codes in large regions in the Pauli channel simplex, and provide new examples of quantum codes with favorable error-correcting properties.

The speed-up in computing the coherent information of graph states resulting from exploiting graph state symmetries is significant for the code families mentioned above.
For e.g.\ repetition codes, the runtime is reduced from $\Omega(4^k)$ to $\mathrm O(k^4)$ in the number of system qubits $k$, and similar speed-ups are obtained for cat codes and tree codes.

While superadditivity of coherent information has been studied for specific examples of Pauli channels such as unbiased depolarizing noise, the BB84 channel (corresponding to independent bit and phase flips), and 2-Pauli noise (where only two out of the three possible Pauli errors occur) \cite{DSS98,SS07,FW08}, we consider arbitrary Pauli channels in the whole channel simplex and find codes based on graph states with much more significant superadditivity; these findings are summarized in  \Cref{tab:summary-num-thresholds} in \Cref{sec:results} and \Cref{fig:summary-thresholds} in \Cref{sec:discussion}.
Furthermore, our analysis allows us to determine what type of code (or graph family) performs best for a given i.i.d.\ Pauli noise model, as indicated by the colored regions in \Cref{fig:code-familiy-comparison} in \Cref{sec:discussion}.

In summary, our results provide a comprehensive study of error thresholds for Pauli channels, and we identified a new code family with desirable error-correcting properties based on tree graphs.
Further use of our techniques to exploit graph symmetries may lead to the design of more examples of interesting error-correcting codes for realistic biased Pauli noise, which is typically found in promising noisy intermediate-scale quantum (NISQ) devices---we refer to \Cref{sec:nisq} for a discussion of the relevance of our results for this regime.

\subsection{Structure of this Paper}

Our paper is structured as follows.
\Cref{sec:preliminaries} contains necessary background information about the quantum capacity of a quantum channel, superadditivity of coherent information, stabilizer and graph states, and their coherence properties under Pauli noise.
In \Cref{sec:symmetries} we derive our main technical tool, an algorithm to compute the coherent information of graph states under Pauli noise that exploits the symmetries of the graph (\Cref{alg:is-canonical,alg:canonical-colorings,alg:symm-lambda}).
In \Cref{sec:numerical-methods} we describe the numerical methods used throughout our study.
\Cref{sec:results} contains the main results of this paper, a detailed analysis of the error thresholds of Pauli channels using graph states, and an analysis of rates close to the noise threshold.
We discuss repetition codes (\Cref{sec:rep-codes}), concatenated repetition codes (\Cref{sec:cat-codes}), and a novel code family based on tree graphs (\Cref{sec:tree-codes}).
We also carry out an exhaustive search over connected graphs of up to eight vertices in \Cref{sec:exhaustive}.
We conclude in \Cref{sec:discussion} with a discussion of our results and how they are relevant for realistic noise models in near-term quantum devices, and we mention open questions that are subject to further study.

%
%
%

\section{Preliminaries}\label{sec:preliminaries}

\subsection{Notation}

We briefly define the notation used throughout this paper.
A quantum system $A$ is associated with a finite-dimensional Hilbert space $\cH_\s{A}$, and multipartite systems $ABC\dots$ are described by the tensor product $\cH_\s{A}\ox\cH_\s{B}\ox\cH_\s{C}\ox\dots$ of the respective Hilbert spaces.
We denote by $|A|=\dim\cH_\s{A}$ the dimension of a quantum system $A$.
A quantum state (or density operator) $\rho_\s{A}$ on $A$ is a linear positive semidefinite operator with unit trace.
A pure quantum state $\psi_\s{A}$ is a quantum state of rank 1, which can be associated with a normalized vector (or ket) $\ket\psi_\s{A}\in\cH_\s{A}$ such that $\psi_\s{A} = \ketbra \psi_\s{A}$.
A quantum channel $\cN\colon A\to B$ is a linear, completely positive and trace-preserving map from the algebra of linear operators on $\cH_\s{A}$ to the algebra of linear operators on $\cH_\s{B}$.
We denote by $\one_\s{A}$ the identity operator on $\cH_\s{A}$, and by $\id_\s{A}\colon A\to A$ the identity map on the algebra of linear operators on $\cH_\s{A}$.
All logarithms are taken to base $2$.
For $n\in\bN$, we use the notation $[n]\coloneqq \lbrace 1,\dots,n\rbrace$.
For a vertex set $V=[n]$ and a subset $U\subset V$, the notation $O^U$ indicates that the single-qubit operator $O$ acts on the vertices contained in $U$, e.g., for $U=\{ 2,3\}$ and $|V|=5$ we have $O^U = \one_1\ox O_2\ox O_3\ox \one_4 \ox \one_5$.
For an arbitrary set $X$ we identify subsets $Y\subset X$ with binary vectors $\tilde{X}\in\bF_2^{|X|}$, where $\tilde{X}_i = 1$ if $i\in X$ and $0$ otherwise.
In slight abuse of notation, we use the same symbol for subsets and binary vector representations.

\subsection{Quantum Capacity}\label{sec:quantum-capacity}
In information-theoretic terms, the quantum capacity $Q(\cN)$ of a quantum channel $\cN\colon A\to B$ characterizes the channel's ability to faithfully transmit quantum information.
Here, we define it operationally through the task of entanglement generation \cite{Dev05}.
In this task, Alice prepares a bipartite state $\rho_{\s{R}\s{A}^n}$ in her lab, and sends the $\s{A}^n$-part to Bob through $n$ copies of the quantum channel $\cN$. 
Bob applies a local decoding operation $\cD$ to his share of the state in order to bring the total state $(\id\env \ox \cD\circ \cN^{\ox n})(\rho_{\s{R}\s{A}^n})$ close to a maximally entangled state $\Phi^+$ up to some error $\eps$ defined in terms of a suitable distance measure.
Informally, the quantum capacity is defined as the (logarithm of the) largest Schmidt Rank of $\Phi^+$ per channel copy such that $\eps$ vanishes asymptotically in the limit $n\to\infty$.

Formally, let $\eps\in(0,1)$ and $\ket{\Phi^+}_\s{RR'} = |R|^{-1/2} \sum_{i=0}^{|R|-1} \ket i\env\ket{i}_\s{R'}$ be a maximally entangled state on a bipartite quantum system.
An \emph{$\eps$-entanglement generating code} for a quantum channel $\cN\colon A\to B$ is a pair $(\rho_\s{RA},\cD)$ consisting of an arbitrary quantum state $\rho_\s{RA}$ and a decoding quantum channel $\cD\colon B\to R'$ such that
\begin{align*}
F\left(\Phi^+_\s{RR'}, (\id_\s{R} \ox \cD\circ \cN)(\rho_\s{RA}) \right) \geq 1-\eps,
\end{align*}
where $F(\rho,\sigma) = \|\sqrt{\rho}\sqrt{\sigma}\|_1$ denotes the fidelity of two quantum states $\rho$ and $\sigma$.
Let $r(n,\eps)$ denote the largest dimension $|R|$ of $R$ such that there exists an $\eps$-entanglement generating code for $\cN^{\ox n}$.
Then the quantum capacity $Q(\cN)$ is defined as
\begin{align*}
Q(\cN) = \inf_{\eps> 0} \liminf_{n\to\infty} \frac{1}{n}\log r(n,\eps).
\end{align*}

A positive quantum capacity means that it is theoretically possible to faithfully transmit quantum information through the channel at a positive rate.
Alternatively, $Q(\cN)>0$ means that a quantum system can be protected from environmental noise modeled by $\cN$ assuming that the noise acts independently and identically distributed (i.i.d.) on each qubit.
Both points of view lead to the fundamental questions of determining the value of $Q(\cN)$ for a given noise model $\cN$, and more generally whether we have $Q(\cN)>0$.

In information-theoretic terms, these questions have found a partial answer in the following \emph{coding theorem} providing a regularized entropic formula for the quantum capacity \cite{Sch96,SN96,Llo97,Sho02,Dev05}:
\begin{align}
Q(\cN) = \sup_{n\in\bN} \frac{1}{n} I_c(\cN^{\ox n}),\label{eq:quantum-capacity}
\end{align}
where the \emph{channel coherent information} is defined as
\begin{align}
I_c(\cN) &= \max_{\ket{\psi}_\s{RA}} I_c(\psi,\cN) = \max_{\ket{\psi}_\s{RA}} \left[ S(\cN(\psi\sys)) - S(\cN(\psi_\s{RA})) \right],
\label{eq:coherent-information}
\end{align}
with $S(\rho)=-\tr\rho\log\rho$ denoting the von Neumann entropy of a density operator $\rho$.
In this paper, we refer to the pure input state $\ket{\psi}_\s{RA}$ in \Cref{eq:coherent-information} simply as a \emph{quantum code}.\footnote{\label{fn:quantum-code}The quantum state $\psi_\s{RA}$ appearing on the right-hand side of \Cref{eq:coherent-information} is directly related to an actual entanglement generation code as follows: Assume that the state $|\psi_n\rangle_\s{RA^n}\in \cH_\s{R}\otimes \cH_\s{A}^{\otimes n}$ has positive coherent information with respect to $n$ copies of the channel $\cN$, i.e., $I_c(\psi_n,\cN^{\otimes n}) > 0$.
To obtain an entanglement generation code, Alice prepares $k$ independent and identically distributed (i.i.d.)~copies of $\psi_n$, for each copy sharing the $\s{A}^n$ systems with Bob through the channel $\cN^{\otimes n}$, after which they share the state $\sigma_{\s{RB}^n}^{\otimes k} = \left[(\id_\s{R}\otimes \cN^{\otimes n})(\psi_n)\right]^{\otimes k}$. The parameters $n$ and $k$ are referred to as the \emph{inner} and \emph{outer} blocklength of the code, respectively. Provided that the outer blocklength $k$ is large enough, an entanglement generation code with rate arbitrarily close to $\frac{1}{n}I_c(\psi_n,\cN^{\otimes n})$ and arbitrarily small error can be obtained by suitably choosing a random subspace in a typical subspace defined in terms of the shared state $\sigma_{\s{RB}^n}^{\otimes k}$ \cite{HHWY08}.}

The regularization in \Cref{eq:quantum-capacity} is necessary since the channel coherent information is known to be superadditive: there exist channels $\cN$ and quantum codes $\psi_{\s{R}\s{A}^n}$ such that \cite{SS96,DSS98,SS07,FW08,CEMOPS15,LLS18}
\begin{align}
\frac{1}{n}I_c(\psi,\cN^{\ox n}) > I_c(\cN).
\label{eq:superadditivity}
\end{align}

A well-known quantum channel exhibiting superadditivity of coherent information as in \Cref{eq:superadditivity} is the qubit depolarizing channel \cite{SS96,DSS98}, defined for $p\in[0,1]$ as
\begin{align}\label{eq:depolarizing-channel}
\cD_p(\rho) = (1-p)\rho + \frac{p}{3}(X\rho X + Y\rho Y + Z\rho Z),
\end{align}
where $X,Y,Z$ are the Pauli matrices.
The depolarizing channel belongs to the class of \emph{Pauli channels} $\cN_{\mathbf{p}}$, defined in terms of a probability distribution $\mathbf{p} = (p_0,p_1,p_2,p_3)$ as
\begin{align}
\cN_{\mathbf{p}}(\rho) = p_0 \rho + p_1 X\rho X + p_2 Y\rho Y + p_3 Z\rho Z.\label{eq:pauli-channel}
\end{align}
Superadditivity of coherent information has also been demonstrated for other channels in this class, such as the BB84 channel with $\mathbf{p} = ((1-p)^2,p-p^2,p^2,p-p^2)$ or the Two-Pauli channel with $\mathbf{p} = (1-p,p/2,0,p/2)$ \cite{SS07,FW08}.
One of the main goals of this paper is a comprehensive study of superadditivity for the class of Pauli channels.
We note here that for dephasing-type noise of the form $\cN^D_p(\rho)= (1-p)\rho + p D\rho D$ with $D\in\lbrace X,Y,Z\rbrace$, the coherent information is additive and hence equal to the quantum capacity \cite{DS05}.

Superadditivity of the coherent information renders the quantum capacity of general quantum channels intractable to compute.
As a result, the quantum capacity of Pauli channels is unknown except for a few special cases, and one needs to resort to finding lower and upper bounds on $Q(\cN)$, e.g., in order to certify that $Q(\cN) >0$.
A helpful tool in this task is the direct part of the coding theorem \Cref{eq:quantum-capacity}, which states that
\begin{align}
Q(\cN) \geq \max_{\ket{\psi}_{\s{R}\s{A}^n}} \frac{1}{n} I_c(\psi_{\s{R}\s{A}^n},\cN^{\ox n})\quad\text{for any $n\in\bN$}.\label{eq:coh-info-lower-bound}
\end{align}
Note that this inequality, a direct consequence of \Cref{eq:quantum-capacity,eq:coherent-information}, allows us to obtain lower bounds on the quantum capacity of a channel by testing single input states or quantum codes. 
This is because a particular quantum code achieving positive coherent information on the right-hand side of \Cref{eq:coh-info-lower-bound} gives rise to an entanglement generation code via the procedure outlined in \Cref{fn:quantum-code}, and thus provides a lower bound on the quantum capacity equal to the best entanglement generation rate achievable for this particular channel.
In particular, for a one-parameter family $x\mapsto \cN_x$ of channels, \Cref{eq:coh-info-lower-bound} allows us to find a lower bound on the \emph{threshold} $x_0$ defined in the introduction, i.e., the supremum over all $x$ such that $Q(\cN_x) >0$.

Unfortunately, solving the optimization problem on the right-hand side of \Cref{eq:coh-info-lower-bound} is generally infeasible due to the exponential growth in $n$ of the number of parameters of $\psi_{\s{R}\s{A}^n}$.
Instead, one may try to restrict the optimization to a smaller class of multipartite quantum states $\ket{\psi}_{\s{R}\s{A}^n}$ that is numerically tractable.
For example, in our previous paper \cite{BL18} we optimized the coherent information in \Cref{eq:coh-info-lower-bound} over neural network states, a class of multipartite quantum states with polynomially many degrees of freedom \cite{CT17}.
For sufficiently low $n\in\bN$, this ansatz proved successful in efficiently approximating the channel coherent information for noise models such as the generalized amplitude damping channel \cite{myatt2000decoherence,turchette2000,chirolli2008,zou2017} or 
the dephrasure channel \cite{LLS18}.

In this work, we focus on the class of Pauli channels as defined in \Cref{eq:pauli-channel}.
The form of these channels naturally suggests a restriction in the maximization in \Cref{eq:coh-info-lower-bound} to the family of \emph{stabilizer states}, which are states defined in terms of their invariance properties under (an Abelian group of) many-qubit Pauli operators.
In addition, for a special type of stabilizer states called \emph{graph states} the coherent information \Cref{eq:coherent-information} turns into a ``classical'' quantity that can be computed more efficiently.
This is the starting point for our analysis of Pauli channels and explained in the next section, in which we first review stabilizer and graph states along with some of their useful properties.

\subsection{Graph States}\label{sec:graph-states}

Before we define and discuss graph states, the main object of study in this paper, we briefly introduce stabilizer groups and stabilizer states \cite{Got97}.

Let $\cP=\{ I,X,Y,Z\}$ be the set of Pauli matrices including the identity $I$ on $\bC^2$, and define the $n$-qubit Pauli group (or simply Pauli group)
\begin{align*}
\cP_n = \left\{ \ii^k \sigma_1\ox \dots \ox \sigma_n\colon \sigma_i\in\cP, k=0,1,2,3 \right\}.
\end{align*}
A \emph{stabilizer group} $\cS$ is an Abelian subgroup of $\cP_n$ that does not contain $-I^{\ox n}$.
Since operators in $\cS$ commute pairwise, they can be simultaneously diagonalized.
The \emph{stabilizer subspace} $\cH_\cS$ is defined as the joint eigenspace of the operators in $\cS$ corresponding to the eigenvalue 1: 
\begin{align*}
\cH_\cS = \left\{ \ket{\psi}\in (\bC^2)^{\ox n}\colon S \ket{\psi} = \ket{\psi} \text{ for all }S\in\cS \right\}.
\end{align*}
Let $S_1,\dots,S_r$ with $r\leq n$ be a (minimal) set of \emph{stabilizer generators} for the stabilizer group $\cS$.
Then it can be shown that $\dim \cH_\cS = 2^k$ where $k = n-r$. 
In the language of quantum error-correcting codes, the \emph{stabilizer code} $\cS$ (or $\cH_\cS$) encodes $k$ logical qubits into $n$ physical qubits, and is commonly denoted as $[n,k]$.\footnote{We will not be concerned with the \emph{distance} of a stabilizer code.}
Let $N_{\cP_n}(\cS) = \lbrace P\in\cP_n\colon P\cS P^\dagger \subset \cS\rbrace$ be the \emph{normalizer} of $\cS$ in $\cP_n$.
The stabilizer group $\cS$ is a normal subgroup of $N_{\cP_n}(\cS)$, and the factor group $\modfrac{ N_{\cP_n}(\cS) }{\cS}$ is isomorphic to $\cP_k$, i.e., one can choose $\bX_i,\barZ_i\in \modfrac{ N_{\cP_n}(\cS) }{\cS}$ for $i=1,\dots,k$ satisfying the usual relations for Pauli matrices.
The operators $\bX_i,\barZ_i$ are called \emph{logical operators}; they commute with all stabilizers in $\cS$, but act non-trivially on the code space $\cH_\cS$.
If $r = n$, then we may choose a pure state $\ket{\psi}$ such that $\cH_S=\operatorname{span}(\ket{\psi})$, and we refer to $\ket{\psi}$ as a \emph{stabilizer state}.

\subsubsection{Definition and Properties}\label{sec:graph-states-definition}
We now define graph states.
Let $\Gamma=(V,E)$ be a simple undirected graph with vertex set $V=\{ 1,\dots,n\}$ and edge set $E$, and identify each vertex with a qubit associated with a Hilbert space $\bC^2$.
The \emph{graph state} $\ket\Gamma \in (\bC^2)^{\ox n}$ is defined as
\begin{align*}
\ket\Gamma = \prod_{e\in E} CZ_e \ket+^{\ox n},
\end{align*}
where $\ket+=(\ket0+\ket1)/\sqrt{2}$ and $CZ_e$ denotes the controlled-$Z$ gate $\ketbra{0}_i\ox \one_j + \ketbra{1}_i\ox Z_j$ acting on the qubits $i$ and $j$ connected by the edge $e\in E$.
Equivalently, the graph state $\ket\Gamma$ can be regarded as a stabilizer state defined by the $n$ stabilizer generators 
\begin{align}
S_i = X^i Z^{N_i}\quad\text{for $i=1,\dots,n$},\label{eq:graph-state-stabilizers}
\end{align} 
where $N_i\coloneqq \{ k\in V\colon \{ k,i\} \in E\}$ is the neighborhood of the vertex $i\in V$ consisting of all vertices $k\in V$ adjacent to $i$.
In \Cref{eq:graph-state-stabilizers} and throughout the paper, the notation $O^U$ for a subset $U\subset V$ indicates that the single-qubit operator $O$ acts on the vertices contained in $U$, e.g., for $U=\{ 2,3\}$ and $|V|=5$ we have $O^U = \one_1\ox O_2\ox O_3\ox \one_4 \ox \one_5$.

By the above observation, every graph state is a stabilizer state.
Conversely, any stabilizer state is local-Clifford-equivalent to a graph state, which we show in detail in \Cref{app:LU-equivalence}.
Here, the set of \emph{local Clifford operations} is defined as those local unitaries $U_1\ox\dots\ox U_n$ mapping the Pauli group $\cP_n$ into itself.

The action of any local Clifford operation on a graph state can be decomposed into a series of so-called \emph{local complementations} (LC) \cite{vandennest04}.
The local complementation $\LC_i$ of a graph $\Gamma=(V,E)$ with respect to a vertex $i\in V$ is obtained by replacing the subgraph induced by the vertex neighborhood $N_i$ with its complement.
For example, the graphs \raisebox{-2pt}{\begin{tikzpicture}[scale=0.35]\draw[fill] (0,0) circle(0.5ex); \draw[fill] (1,0) circle(0.5ex); \draw[fill] (0,1) circle(0.5ex); \draw[fill] (1,1) circle(0.5ex); \draw (0,0) -- (1,0); \draw (0,0) -- (1,1); \draw (0,0) -- (0,1); \draw (0,1) -- (1,0); \draw (1,0) -- (1,1);\end{tikzpicture}}
and
\raisebox{-2pt}{\begin{tikzpicture}[scale=0.35]\draw[fill] (0,0) circle(0.5ex); \draw[fill] (1,0) circle(0.5ex); \draw[fill] (0,1) circle(0.5ex); \draw[fill] (1,1) circle(0.5ex); \draw (0,0) -- (1,0); \draw (0,0) -- (1,1); \draw (0,0) -- (0,1); \draw (0,1) -- (1,1);\end{tikzpicture}}
can be obtained from each other by a local complementation with respect to the bottom-left vertex.
If a quantum channel $\cN$ is \emph{covariant} with respect to local Clifford operations $C$, i.e.,
\begin{align*}
    \cN^{\ox n} (C \cdot C^\dagger) = C \cN^{\ox n} (\cdot) C^\dagger \qquad\text{ for all local Cliffords $C$},
\end{align*}
then two graph states with LC-equivalent graphs $\Gamma$ and $\Gamma' = \LC_i(\Gamma)$ have the same coherent information, $I_c(|\Gamma\rangle,\cN^{\ox n}) = I_c(|\Gamma'\rangle,\cN^{\ox n})$.
This holds in particular for the depolarizing channel $\cN_{\mathbf{p}}$ with $\mathbf{p}=(1-p,p/3,p/3,p/3)$, which is in fact covariant with respect to any local unitary operation.
Hence, a maximization of the coherent information of a depolarizing channel over all graph states corresponds to a maximization over all stabilizer states.

Graph states form a rich class of multipartite quantum states with many appealing properties; in the following we briefly collect the most important ones needed in this paper.
A comprehensive review of graph states and their properties can be found in \cite{graphstatesReview06}.

\paragraph{Graph State Basis.}
A helpful tool for working with graph states is the so-called \textit{graph state basis} for $(\bC^2)^{\ox n}$.
For a given graph $\Gamma=(V,E)$ with associated graph state $\ket{\Gamma}$, it is defined as $\mathcal{G}(\Gamma) = \{ \ket{U}\}_{U\subset V}$, where
\begin{align*}
\ket{U} &= Z^U \ket{\Gamma}.
\end{align*}
To see that $\mathcal{G}(\Gamma)$ is indeed a basis for $(\bC^2)^{\ox n}$, observe that for $U,U'\subset V$ with $U\neq U'$ we have $S=(U\cup U')\setminus (U\cap U')\neq \emptyset$.
Hence, the operator $O=Z^U Z^{U'}=Z^S$ has at least one $Z$ operator acting on some qubit, and consequently $\bra{\Gamma} O \ket{\Gamma} = 0$ \cite{Hel14}.
This can be checked by noting that $Z^i$ commutes with any controlled-$Z$ gate involving the qubit $i$, and $CZ^2=\one$. 
It then follows for $O=Z^S$ with $S\neq 0$ that 
\begin{align*}
\bra{\Gamma} O \ket{\Gamma} = \bra{+}^{\ox n} \left(\prod\nolimits_{e\in E} CZ_e\right) O \left(\prod\nolimits_{e\in E} CZ_e\right) \ket+^{\ox n}= \bra{+}^{\ox n} O \ket+^{\ox n}= 0.
\end{align*}

\paragraph{Partial Trace Formula.}
Consider a graph $\Gamma\tot$ on a vertex set $V=V\sys\cup V\env$ consisting of vertices (qubits) $V\sys$ and $V\env$ belonging to quantum systems $A$ and $R$, respectively.
We denote by $\Gamma\systosys$ the induced subgraph on $V\sys$ of $\Gamma\tot$.
There is a simple formula for the mixed marginal $\rho\sys = \tr\env \Gamma\tot$, which should not be confused with the pure subgraph state $\ket{\Gamma\systosys}$.
To state the formula, we consider a block form of the adjacency matrix of $\Gamma\tot$ (which we denote by the same symbol in slight abuse of notation):
\begin{align}
\Gamma\tot = \begin{pmatrix}
\Gamma\systosys & \Gamma\systoenv\\ \Gamma\systoenv^T & \Gamma\envtoenv
\end{pmatrix},
\label{eq:bipartite-adjacency-matrix}
\end{align}
where $\Gamma\systoenv$ is a $|V\sys|\times |V\env|$-matrix encoding the edges between the vertex sets $V\sys$ and $V\env$.
Throughout the paper, for an arbitrary set $X$ we identify subsets $Y\subset X$ with binary vectors $\tilde{Y}\in\bF_2^{|X|}$, where $\tilde{Y}_i = 1$ if $i\in Y$ and $0$ otherwise.
In slight abuse of notation, we henceforth use the same symbol for subsets and binary vector representations.
Abbreviating $\Gamma'=\Gamma\systoenv$, the following formula for the mixed marginal $\rho\sys$ on the vertices belonging to $A$ is proved in \cite{graphstatesReview06}:
\begin{align}
\rho\sys = \frac{1}{2^{|V\env|}} \sum_{R'\subset V\env} \ketbra{\Gamma' R'}\sys,\label{eq:graph-state-marginal}
\end{align}
where $\Gamma'R'\subset V\sys$ is understood as matrix multiplication modulo 2, and the state $\ket{\Gamma' R'}\sys =Z^{\Gamma'R'}\ket{\Gamma\systosys}$ is an element of $\mathcal{G}(\Gamma\systosys)$.
Equation \eqref{eq:graph-state-marginal} shows that the marginal $\rho\sys$ of a graph state $\ket{\Gamma\tot}$ is diagonal in a preferred basis, the graph state basis $\mathcal{G}(\Gamma\systosys)$ of the subgraph $\Gamma\systosys$.

\subsubsection{Decoherence under Pauli Noise}
We now review the decoherence properties of graph states under Pauli noise discussed in \cite{HDB05}.	
The crucial observation is that graph states translate any Pauli error into $Z$-type errors: If qubit $i$ is affected by an $X$ error, then
\begin{align*}
X^i \ket\Gamma = X^i S_i \ket\Gamma = X^i X^i Z^{N_i} \ket\Gamma = Z^{N_i} \ket\Gamma = \ket{N_i},
\end{align*}
where we used that $S_i=X^i Z^{N_i}$ stabilizes $\ket\Gamma$.
Furthermore,
\begin{align*}
Y^i \ket\Gamma = -\ii Z^i X^i \ket\Gamma = -\ii Z^i \ket{N_i} = -\ii \ket{N_i \oplus i},
\end{align*}
where $\oplus$ is addition modulo 2 and we identify a subset $U\subset V$ with its binary representation.

The above observation shows that under conjugation with a Pauli error $E\in\cP$ the graph state $\ketbra\Gamma$ is mapped onto some $\ketbra U$, where $\ket U$ is the element of the graph state basis $\mathcal{G}(\Gamma)$ corresponding to the subset $U\subset V$.
This holds more generally for any $n$-qubit Pauli group element $E_n\in\cP_n$.
Consider now a Pauli channel 
\begin{align*}
\cN_{\mathbf{p}}(\rho) = p_0 \rho + p_1 X\rho X + p_2 Y\rho Y + p_3 Z\rho Z,
\end{align*}
defined in terms of a probability distribution $\mathbf{p}=(p_0,p_1,p_2,p_3)$.
Then it follows from above that under the i.i.d.~action of a Pauli channel $\cN_{\mathbf{p}}$ the channel output $\sigma = \cN_{\mathbf{p}}^{\ox n}(\Gamma)$ is diagonal with respect to $\mathcal{G}(\Gamma)$, i.e., $\sigma = \sum_{U\subset V} \lambda_U \ketbra U$ for some probability distribution $\{ \lambda_U\}_{U\subset V}$.
A compact formula for the coefficients $\lambda_U$ is derived in \cite{HDB05}.

In this paper, we use a simple generalization of this formula to the situation where the (i.i.d.~action of the) Pauli noise only affects a subset of the qubits.
To this end, consider a graph $\Gamma\tot=(V\sys\cup V\env,E)$ whose vertices are partitioned into $k\equiv k\sys$ system vertices (qubits) $V\sys$ and $k\env$ environment vertices (qubits) $V\env$.
By the above discussion, the state $\sigma_\s{BR} = (\cN_{\mathbf{p}}^{\ox k} \ox \id\env)(\Gamma\tot)$ is again diagonal in the graph state basis $\mathcal{G}(\Gamma\tot)$,
\begin{align}\label{eq:lambda-state-sum}
\sigma_\s{BR} = \sum_{U\subset V\sys\cup V\env} \lambda_U \ketbra{U}.
\end{align}
Setting $q_i = p_i/p_0$ for $i=1,2,3$, we have the following formula for $\lambda_U$ with $U\subset V$ as a straight-forward generalization of the one reported in \cite{HDB05}:
\begin{align}\label{eq:lambda-vector}
\lambda_U = p_0^{k}\sum_{(U_1,U_2,U_3)\in \cM(U)} q_1^{|U_1|} q_2^{|U_2|} q_3^{|U_3|},
\end{align}
where $\cM(U)$ is the set of all triples $(U_1,U_2,U_3)$ satisfying
\begin{flalign}
    \begin{aligned}
    U_i &\subset V\sys &\text{for $i=1,2,3$,}\\
    U_i \cap U_j &= \emptyset &\text{for $1\leq i<j\leq 3$,}\\
    U &= \Gamma\tot (U_1 \oplus U_2) \oplus U_2 \oplus U_3.
    \end{aligned}
    \label{eq:lambdaU-U-condition}
\end{flalign}
Here and throughout the paper, $\oplus$ denotes addition mod 2.
Note that in \Cref{eq:lambdaU-U-condition} we interpret the $U_i$ as subsets of $V=V\sys\cup V\env$, which amounts to padding the binary vectors of length $k\sys$ with $k\env$ $0$'s.
The resulting subset $U$ has support in $V\env$ whenever there is an edge in $\Gamma\tot$ from a vertex in $U_1\oplus U_2$ to a vertex in $V\env$ (that is, the off-diagonal block $\Gamma\systoenv$ in \Cref{eq:bipartite-adjacency-matrix} has an entry $1$ in a row corresponding to a vertex in $U_1\oplus U_2$).

\subsubsection{The Partial Trace of a Decohered Graph State}\label{sec:lambda-state-tr-sum}

We now derive a formula for taking the partial trace of a graph state after the i.i.d.~action of a Pauli channel.
More precisely, let $\Gamma\tot$ be a graph on a vertex set $V=V\sys\cup V\env$ with $k\equiv k\sys=|V\sys|$ and $k\env=|V\env|$, and let $\cN_{\mathbf{p}}$ be a Pauli channel.
Then by the above discussion, we have 
\begin{align}
    \sigma_\s{BR} = \cN_{\mathbf{p}}^{\ox k}(\Gamma\tot) = \sum_{U\subset V} \lambda_U \ketbra U,\label{eq:sigmaRB}
\end{align}
where the coefficients $\lambda_U$ are given by \Cref{eq:lambda-vector}.

We now apply the partial trace formula \Cref{eq:graph-state-marginal} to \Cref{eq:sigmaRB}.
Abbreviating $\Gamma'=\Gamma\systoenv$, and for a subset $U\subset V=V\sys\cup V\env$ denoting by $U\sys$ its restriction to $V\sys$, we have:
\begin{align}
    \sigma_{\s{B}} &= \tr\env \sigma_\s{BR}\notag\\
    &= \sum_{U\subset V} \lambda_U \tr\env \ketbra{U}\notag\\
    &= \sum_{U\subset V} \lambda_U \tr\env Z^U\ketbra{\Gamma\tot}Z^U\notag\\
    &= \sum_{U\subset V} \lambda_U Z^{U\sys} \tr\env (\ketbra{\Gamma\tot}) Z^{U\sys}\notag\\
    &= \frac{1}{2^{k\env}} \sum_{\substack{U\subset V\\R'\subset V\env}} \lambda_U Z^{U\sys\oplus\Gamma'R'} \ketbra{\Gamma\systosys} Z^{U\sys\oplus\Gamma'R'},
    \label{eq:intermediate-partial-trace}
\end{align}
where we used \Cref{eq:graph-state-marginal} in the last step.
This can be rewritten as 
\begin{align}\label{eq:lambda-state-tr-sum}
\sigma_{\s{B}} = \sum_{W\subset V\sys} \lambda_W \ketbra{W},
\end{align}
where
\begin{align}\label{eq:lambda-vec-tr}
    \lambda_W &= \frac{1}{2^{k\env}} \sum_{(U,R')\in \cM(W)} \lambda_U,\\
    \cM(W) &= \{ (U,R')\colon U\subset V, R'\subset V\env, U\sys \oplus \Gamma'R' = W\}.
\end{align}
Similar to $\sigma_\s{BR}$ above, the operator $\sigma_{\s{B}}$ is again diagonal in a preferred basis, the graph state basis $\cG(\Gamma\systosys)$ of the subgraph $\Gamma\systosys$.
    
\subsection{Coherent Information of Decohered Graph States}\label{sec:decohered-graph-states-ci}

Recall from \Cref{sec:quantum-capacity} that the coherent information of a quantum code $\ket{\psi}\tot$ and a channel $\cN\colon A\to B$ is given by
\begin{align}
    I_c(\psi,\cN) = S(\cN(\psi\sys)) - S(\cN(\psi\tot)).\label{eq:coherent-information-alg}
\end{align}
Consider now a graph state $\ket{\Gamma\tot}$ defined in terms of a graph $\Gamma\tot = (V\sys\cup V\env,E)$ with $k\equiv k\sys = |V\sys|$ and $k\env = |V\env|$, and a Pauli channel $\cN_{\mathbf{p}}$.
The formulas \eqref{eq:lambda-state-sum} for $\sigma_\s{BR} = \cN_{\mathbf{p}}^{\ox k}(\Gamma\tot)$ and \eqref{eq:lambda-vec-tr} for $\sigma_{\s{B}} = \tr\env\sigma_\s{BR}$ show that these operators are diagonal with respect to the graph state bases $\cG(\Gamma\tot)$ and $\cG(\Gamma\systosys)$, respectively.
These observations yield a ``classical'' algorithm for computing the coherent information \Cref{eq:coherent-information-alg}, which is described in \Cref{alg:coherent-information-full-algorithm} and explained in detail in the following paragraph.
A MATLAB implementation of \Cref{alg:coherent-information-full-algorithm} can be found at \cite{CImatlab}.
A similar algorithm to efficiently compute the coherent information of Pauli channels acting on quantum codes diagonal in the graph state basis of a given graph was derived in \cite{graphstatebasis}.

To start \Cref{alg:coherent-information-full-algorithm}, we obtain in \Cref{line:getUsubsets} the subsets $U_i\subset V$ satisfying the conditions in \Cref{eq:lambdaU-U-condition} as matrices whose columns are the binary vector representations of $U_i$.
We store their cardinality $|U_i|$ in vectors $u_i$ in \Cref{line:getUsubsetsCard}, and determine the matrix $U=\Gamma(U_1\oplus U_2) \oplus U_2 \oplus U_3$ in \Cref{line:Usubsets}, assigning the first $k\sys$ rows of $U$ to a matrix $U_a$ in \Cref{line:UsubsetsA}.
The columns of $U$ and $U_a$ are binary vectors indexing subsets of $V$, and in \Cref{line:binary-reps} we convert them to decimal numbers indicating their corresponding position in coefficient vectors $\lambda$ and $\mu$ for the states $\sigma_\s{BR}=(\cN_{\mathbf{p}}^{\ox k}\ox \id\env)(\Gamma\tot)$ and $\omega_{\s{B}} = \cN_{\mathbf{p}}^{\ox k}(\Gamma\systosys)$, respectively.\footnote{Note that the state $\omega_{\s{B}}$, obtained from applying the channel to all qubits of the \emph{pure} graph state $\ket{\Gamma\systosys}$ on the vertices $V\sys$ corresponding to the subgraph $\Gamma\systosys$, is an auxiliary state that does not feature itself in the formula for the coherent information.}
In \Cref{line:coefficients} we compute a vector $c$ whose entries are the coefficients $p_0^k q_1^{|U_1|} q_2^{|U_2|} q_3^{|U_3|}$ appearing in \Cref{eq:lambda-vector}.
Note that $q_1\cdot q_2$ denotes entry-wise multiplication.
If one of the $q_i$ is zero, we eliminate it from the equation altogether in order to avoid computing $0^0$.
For each element in $c$, we then add up the contributions to the coefficient $\lambda_U$ indexed by the subset $U\subset V$ in \Cref{line:lambda-loop}.
In the same loop, we also add up the contributions to the coefficient of $U\sys$ in the marginal $\sigma_{\s{B}}$ in \Cref{line:lambdaA-loop}.
This procedure computes the eigenvalues $\lambda$ and $\mu$ of the states $\sigma_\s{BR}=(\cN_{\mathbf{p}}^{\ox k}\ox \id\env)(\Gamma\tot)$ and $\omega_{\s{B}} = \cN_{\mathbf{p}}^{\ox k}(\Gamma\systosys)$, which are diagonal in their respective graph state bases $\cG(\Gamma\tot)$ and $\cG(\Gamma\systosys)$.

In the second part of \Cref{alg:coherent-information-full-algorithm} we compute the marginal $\sigma_{\s{B}}$ from the intermediate state $\omega_{\s{B}}$ based on the following observation: using \Cref{eq:intermediate-partial-trace}, we can write $\sigma_{\s{B}}$ as
\begin{align*}
    \sigma_{\s{B}} &= \frac{1}{2^{k\env}} \sum_{R'\subset V\env} Z^{\Gamma' R'} \left( \sum_{U\subset V} \lambda_U Z^{U\sys} \ketbra{\Gamma\sys} Z^{U\sys} \right) Z^{\Gamma' R'}\\
    &= \frac{1}{2^{k\env}} \sum_{R'\subset V\env} Z^{\Gamma' R'} \left( \sum_{U\subset V} \lambda_U \ketbra{U\sys}\right)Z^{\Gamma' R'}\\
    &= \frac{1}{2^{k\env}} \sum_{R'\subset V\env} Z^{\Gamma' R'} \omega_{\s{B}} Z^{\Gamma' R'}.
\end{align*}
We have
\begin{align*}
\ketbra{U\sys}\mapsto Z^{\Gamma'R'}\ketbra{U\sys}Z^{\Gamma'R'} = \ketbra{U\sys \oplus \Gamma'R'},
\end{align*}
which amounts to a permutation $A'\mapsto A' \oplus \Gamma'R'$ of the subsets $A'\subset V\sys$ and hence a permutation of the coefficients $\lambda\sys$ of $\sigma_{\s{B}}$ indexed by $A'\subset V\sys$.
To use this observation, in \Cref{line:A-subsets,line:R-subsets} we define $A'$ and $R'$ to be matrices whose columns are the subsets of $V\sys$ and $V\env$ in binary vector representation, respectively.
Furthermore, in \Cref{line:Delta} we define $\Delta$ to be the subsets of $V\sys$ obtained from moving the $V\env$-subsets $R'$ into $V\sys$ via $\Gamma'=\Gamma\systoenv$, i.e., along the edges connecting vertices in $V\sys$ to vertices in $V\env$.

For a fixed column $\Delta_j$ (corresponding to a particular subset of $V\env$ moved into $V\sys$), we then compute the map $A' \mapsto \Delta_j \oplus A'$ for all $A'\subset V\sys$ in \Cref{line:R-permutation}, and convert the resulting matrix to a decimal representation of the corresponding subsets in $V\sys$ in \Cref{line:convert-to-permutation}.
The result of this operation is a vector $V_{\mathrm{ind}}$ which can be interpreted as a permutation in $S_k$ with $k= k\sys=|V\sys|$.
The corresponding permutations of the eigenvalue vector $\mu$ of $\omega_{\s{B}}$ are then added to $\lambda\sys$ (the eigenvalue vector of $\sigma_{\s{B}}$) in \Cref{line:add-permuted-vectors}.
Finally, in \Cref{line:compute-entropies} we compute the Shannon entropies of the resulting eigenvalue vectors $\lambda\sys$ and $\lambda$ to obtain the coherent information of $\sigma_\s{BR}$.

While \Cref{alg:coherent-information-full-algorithm} is a ``classical'' algorithm computing the probability distributions corresponding to the diagonal states $\sigma_\s{BR}$ and $\sigma_{\s{B}}$, it nevertheless has exponential scaling due to the loops over subsets of $V\sys$ and $V\env$.
In the next section, we deal with this problem by exploiting the symmetries in the underlying graph $\Gamma\tot$ with the goal of speeding up the computation of the coherent information.

\begin{algorithm}[t]
\begin{algorithmic}[1]
\Function{$I_c$}{$\ket{\Gamma},\cN_{\mathbf{p}}^{\ox k}$}
\State $q_i \gets p_i/p_0$ for $i=1,2,3$ \label{line:q_i}
\State $(U_1,U_2,U_3) \gets \Call{GetUSubsets}{k,k\env}$ \label{line:getUsubsets}
\State $(u_1,u_2,u_3) \gets \Call{GetUSubsetsCard}{k}$ \label{line:getUsubsetsCard}
\State $U \gets \Gamma(U_1 \oplus U_2) \oplus U_2 \oplus U_3$ \label{line:Usubsets}
\State $U_a \gets U_{1:k,:}$ \Comment{$U_{1:k,:}$ denotes the submatrix of the first $k$ rows and all columns.} \label{line:UsubsetsA}
\State $v \gets \Call{BinaryToDecimal}{U}$, $w \gets \Call{BinaryToDecimal}{U_a}$ \label{line:binary-reps}
\State $c \gets p_0^k \left( q_1^{u_1} \cdot q_2^{u_2} \cdot q_3^{u_3}\right)$ \label{line:coefficients}
\State $\lambda \gets 0^{2^{k+k\env}}$, $\mu \gets 0^{2^k}$ \Comment{$0^n$ denotes a zero-vector of length $n$.}
\State $\lambda\sys \gets 0^{2^k}$

\For{$j \in \{ 1, \ldots, |c| \}$}
    \State $\lambda_{v_j} \gets \lambda_{v_j} + c_j$ \label{line:lambda-loop}
    \State $\mu_{w_j} \gets \mu_{w_j} + c_j$ \label{line:lambdaA-loop}
\EndFor

\State $A' \gets \Call{Subsets}{k}$ \label{line:A-subsets}
\State $R' \gets \Call{Subsets}{k\env}$ \label{line:R-subsets}
\State $\Delta \gets \Gamma' R' \mod 2$ \label{line:Delta}

\For{$j \in \{ 1, \ldots, 2^{k\env} \}$}
    \State $V_{\mathrm{str}} \gets \Delta_{:,j} \oplus A'$ \label{line:R-permutation}
    \State $V_{\mathrm{ind}} \gets \Call{BinaryToDecimal}{V_{\mathrm{str}}}$ \label{line:convert-to-permutation}
    \State $\lambda\sys \gets \lambda\sys + 2^{-k\env}\Call{PermVector}{\mu,V_{\mathrm{ind}}}$ \label{line:add-permuted-vectors}
\EndFor

\State\Return $\frac{1}{k}(\Call{ShannonEntropy}{\lambda\sys} - \Call{ShannonEntropy}{\lambda})$ \label{line:compute-entropies}
\EndFunction
\end{algorithmic}
\caption{Algorithm to compute the coherent information $I_c(\ket{\Gamma},\cN_{\mathbf{p}}^{\ox k})$ of a graph state $\ket{\Gamma}\tot$ defined in terms of a graph $\Gamma=(V\sys\cup V\env,E)$ with $k\equiv k\sys=|V\sys|$, $k\env = |V\env|$, and a Pauli channel $\cN_{\mathbf{p}}$ with $\mathbf{p}=(p_0,p_1,p_2,p_3)$.
We denote the adjacency matrix of the graph $\Gamma$ by the same symbol, and set $\Gamma'=\Gamma\systoenv$ (see \Cref{eq:bipartite-adjacency-matrix}).
Unless stated otherwise, binary operations on vectors and matrices are carried out entry-wise.
The subroutines \textproc{GetUSubsets}, \textproc{GetUSubsetsCard}, \textproc{BinaryToDecimal}, \textproc{Subsets}, \textproc{PermVector}, and \textproc{ShannonEntropy} are defined in \Cref{app:subroutines-for-full-alg}.
Due to the concurrent write operations in \Cref{line:lambda-loop,line:lambdaA-loop}, parallelising the trivial inner loop does not necessarily lead to a runtime speedup.
}
\label{alg:coherent-information-full-algorithm}
\end{algorithm}

\section{Exploiting Graph Symmetries}\label{sec:symmetries}
One crucial observation for quantum codes based on graph states is that we can exploit the symmetries of the underlying graph state in the calculation of the coherent information.
Consider e.g.\ a graph corresponding to a repetition code with $k\sys=5$ system vertices and $k\env=1$ environment vertex colored in white:\footnote{For brevity and consistency we will call this a \${1-in-5} code; see \Cref{sec:rep-codes}.}
\begin{align}\label{eq:1-in-5}
    \Gamma = \graph*{1-in-5-small}[7cm]
\end{align}
This graph is evidently invariant under any permutation of the vertices $\{2,3,4,5\}$.
How is this reflected in the calculation of the coherent information?
We will see that we can indeed exploit the graph's symmetries to significantly speed up the performance of calculating the coherent information for a graph state.

In the following sections, we will summarize the graph-theoretic background necessary, and refer the reader to any standard textbook (e.g.\ \cite{Trudeau1993,Diestel2010}) on graphs wherever we do not explicitly cite any sources.

\subsection{Automorphisms of Colored Graphs}\label{sec:automorphisms}
In this section we collect some fundamental facts of graph symmetries and introduce the notation used throughout the paper.
To start, we analyse the \${1-in-5}-graph in \Cref{eq:1-in-5}.
As mentioned, any interchange of the vertices $2$ to $5$ yields precisely the same graph;
if we were to ignore the colors this symmetry would also include vertex $6$.
However, we use the color to signify the difference between a system vertex---which is passed through the channel---and an environment vertex, which is not; we do not want to be able to interchange it with the system.

The \emph{colored graph's automorphism group} is the right concept to capture this type of symmetry.
In the example in \Cref{eq:1-in-5}, it is generated by a list of transpositions,
\begin{align}\label{eq:1-in-5-autG}
    \Aut(\Gamma) = \langle (23),(34),(45) \rangle \subset S_6,
\end{align}
where $S_6$ is the finite symmetric group of $6$ elements.
More generally, for a graph state we define it to be as follows.
\begin{definition}[Automorphism Group for Graph State]\label{def:automorphism-group}
Let $\ket{\Gamma}$ be a graph state with underlying graph $\Gamma=(V,E)$ and $|V|=k\tot$, such that the vertex set $V=(v_1,\ldots,v_{k\tot})$ has a specific ordering.
Let $1 \le k\sys < k\tot$. Then
\[
    \Aut(\ket{\Gamma},k\sys) := \Aut(\Gamma(k\sys)),
\]
where $\Gamma(k\sys)$ is the vertex-colored graph of $\Gamma$ such that
\[
    \mathrm{color}(v_i) = \begin{cases}
    \text{black} & i \le k\sys \\
    \text{white} & \text{otherwise.}
    \end{cases}
\]
\end{definition}

We note that it is generally \emph{not} the case that, just by coloring the environment vertices in disctinct colors, the automorphism group factors across this bipartition; a counterexample is a graph state such as
\[
    \graph*{3-2-line}[7cm],
\]
with automorphism group $\langle(45)(23)\rangle$.

\subsection{Cosets of Graph Colorings}\label{sec:cosets}
Let $\Gamma=(V,E)$ be a graph.
To visualize a vertex subset $U \subset V$ we can give them a distinct color, just as we marked the subset of environment vertices in white in the last section.
We give an example; if $\Gamma$ is the six-vertex graph in \Cref{eq:1-in-5} with the environment vertex marked in white, we want to select the vertex subset $U=\{2,3\}$, which yields the following colored graph:
\begin{align}
    \phantom{''}\Gamma' = \graph*{col-ex1}[7cm]
    \label{eq:gamma-prime}
\end{align}
To simplify notation, we write $\Gamma' = \Gamma(k\sys,U)$, indicating that all vertices with index beyond $k\sys$ are white, and further all vertices in $U$ have an extra set color.

However, under the graph's automorphism group $\Aut(\Gamma(k\sys))=\langle(23),(24),(25)\rangle$ which allows us to treat the originally-black leaves $2,\dots,5$ interchangeably, this subset coloring is equivalent to marking a different set of vertices such as
\[
    \phantom{'}\Gamma'' = \graph*{col-ex2}[7cm].
\]
Here, we call two colorings equivalent if there exists a permutation $\tau \in \Aut(\Gamma(k\sys))$ such that $\tau(\Gamma')=\Gamma''$, where we define $\tau(\Gamma') \coloneqq (V',E')$ with $V' = \lbrace v_{\tau(i)} \colon v_i \in V \rbrace$, and $E'$ is defined analogously.

In this case we only wanted to color a subset in one color; 
in general, we can obviously allow multiple colors, in which case the graph coloring is simply a list of associations such as $\{ 2 \rightarrow 1, 3 \rightarrow 1, 4 \rightarrow 2 \}\subset V\times\field N$.
The second element is an arbitrary indexing of colors (e.g.\ 1 is green, 2 is blue etc.).
As mentioned, applying a coloring to a graph $\Gamma(k\sys,C)$ is then done in such a way that the environment vertices are marked in a different color.
Continuing with our example, this means that if $6\in U$, then
\[
    \Gamma''' = \graph*{col-ex3}[7cm]
\]
where the environment vertex is filled out in white.
As vertex $6$ cannot be interchanged with any of the system vertices, $\Gamma'''$ is not equivalent to $\Gamma'$.
For consistency  with the graph state terminology and clarity of presentation we will generally call a graph $\Gamma(k\sys)$ \emph{uncolored}---despite the fact that the environment vertices are singled out; in contrast, $\Gamma(k\sys,C)$ is \emph{colored}.
Likewise, we refer to the (strictly speaking 2-colored) automorphism group $\Aut(\Gamma(k\sys))$ simply as the automorphism group of a graph state.

Coming back to our example, how many different equivalent graph colorings for $V$ are there?
To answer this question rigorosly, we define the following quantity.
\begin{definition}[Number of Equivalent Graph Colorings]\label{def:coloring-multiplicity}
For a graph state $\ket{\Gamma}$ with underlying graph $\Gamma=(V,E)$ and $k\sys < k\tot = |V|$, and a coloring $C \subset V \times \field N$, the number of equivalent graph colorings for $C$ is defined as
\[
\#(\Gamma,k\sys,C) \coloneqq \left| \{ \Gamma' \colon \exists\, \tau \in \Aut(\Gamma(k\sys)) \text{ s.t.~} \tau(\Gamma') = \Gamma(k\sys,C) \} \right|.
\]
\end{definition}
In our example graph $\Gamma'$ in \Cref{eq:gamma-prime}, it is easy to see that $\#(\Gamma,5,U) = 6$.
But how many is it in general?
This question is answered most concisely using the theory of group actions.
A (left) action of a group $G$ on a set $X$ is a map $G \times X \ni (g,x)\mapsto g \cdot x$ satisfying $g\cdot (h\cdot x) = (gh)\cdot x$ for all $g,h\in G$, $x\in X$, and $e\cdot x = x$ for all $x\in X$, where $e\in G$ denotes the identity element in $G$.
Whenever there is no source of confusion we simply write $gx\equiv g\cdot x$ for $g\in G$ and $x\in X$.
We use the notation $_GX$ to indicate an action of $G$ on $X$.
For each $x\in X$ we define the \emph{orbit} $Gx=\lbrace g x\colon g\in G\rbrace$ and the \emph{stabilizer group} $G_x = \lbrace g\in G\colon g x = x\rbrace\leq G$.
We will make heavy use of the \emph{orbit-stabilizer theorem} relating the two objects:
\begin{align}
    |Gx| = [G:G_x] = \frac{|G|}{|G_x|},
    \label{eq:orbit-stabilizer-theorem}
\end{align}
and the elements of the orbit $Gx$ are in bijective correspondence with a transversal of $G_x$ in $G$.
The set of orbits of points $x\in X$ under the action of $G$ forms a partition of $X$ which we denote by $\modfrac{X}{G}$.

Returning to graph colorings, we now consider the action of the group $G=\Aut(\Gamma(k\sys))$ on the set of graph colorings, and the stabilizer group $\Stab(G,C)=G_C$ of a coloring $C$, i.e., the subgroup of $G$ such that for all elements $\tau \in \Stab(G,C)$ we have $\tau(C) = C$.
The orbit-stabilizer theorem \eqref{eq:orbit-stabilizer-theorem} now gives the following result.
\begin{lemma}\label{lem:coloring-multiplicity}
    Let $G=\Aut(\Gamma(k\sys))$ be the graph state's automorphism group.
    Then $\#(\Gamma,k\sys,C) = |G| / |\Stab(G,C)|$.
\end{lemma}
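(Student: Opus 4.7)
The plan is to recognize the quantity $\#(\Gamma,k\sys,C)$ as the cardinality of a $G$-orbit, and then invoke the orbit–stabilizer theorem \eqref{eq:orbit-stabilizer-theorem} directly.

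First, I would set up the group action carefully. The group $G=\Aut(\Gamma(k\sys))$ permutes the vertex set $V$ while preserving both the edge set $E$ and the system/environment bipartition. Given any coloring $C\subset V\times\bN$ and any $\tau\in G$, let $\tau\cdot C=\{(\tau(v),c):(v,c)\in C\}$. Since $\tau$ is a bijection on $V$, this is again a coloring, and one checks that $\tau\cdot(\sigma\cdot C)=(\tau\sigma)\cdot C$ and $e\cdot C=C$, so we indeed have an action of $G$ on the set $X$ of colorings of $V$. Equivalently, $\tau$ acts on the colored graph $\Gamma(k\sys,C)$ by $\tau\cdot \Gamma(k\sys,C)=\Gamma(k\sys,\tau\cdot C)$; note that the underlying uncolored graph $\Gamma(k\sys)$ is unchanged because $\tau$ was chosen from its automorphism group.

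Next I would identify $\#(\Gamma,k\sys,C)$ with $|GC|$. By \Cref{def:coloring-multiplicity}, $\#(\Gamma,k\sys,C)$ counts colored graphs $\Gamma'$ for which some $\tau\in G$ satisfies $\tau(\Gamma')=\Gamma(k\sys,C)$. Every such $\Gamma'$ is of the form $\Gamma(k\sys,C')$ with $\tau\cdot C'=C$, i.e.\ $C'=\tau^{-1}\cdot C\in GC$; conversely, every element of $GC$ determines such a $\Gamma'$. Hence $\#(\Gamma,k\sys,C)=|GC|$. Moreover, by definition $\Stab(G,C)=G_C$, the point stabilizer of $C$ under this action.

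Finally, the orbit–stabilizer theorem \eqref{eq:orbit-stabilizer-theorem} gives
\[
\#(\Gamma,k\sys,C)=|GC|=\frac{|G|}{|G_C|}=\frac{|G|}{|\Stab(G,C)|},
\]
completing the proof. There is no real obstacle here; the only point that requires a moment of care is verifying that the action on colorings is well-defined and that the stabilizer of the coloring $C$ (as a subset of $V\times\bN$) coincides with the stabilizer of the colored graph $\Gamma(k\sys,C)$, which follows because the $G$-action fixes the underlying graph $\Gamma(k\sys)$ setwise by construction.
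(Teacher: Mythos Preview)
Your proposal is correct and follows exactly the paper's approach: the paper simply states that the result is a direct consequence of the orbit--stabilizer theorem \eqref{eq:orbit-stabilizer-theorem} applied to the action of $G=\Aut(\Gamma(k\sys))$ on colorings, and you have merely spelled out the identification $\#(\Gamma,k\sys,C)=|GC|$ and $\Stab(G,C)=G_C$ that makes this invocation precise.
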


Now that we know which graph colorings are equivalent, we are interested in the question whether we can enumerate all non-equivalent graph colorings.
To formalize this, let $[c]^k$ be the set of all possible colorings of a graph $\Gamma$, where either $k=k\sys$ (padded with no colors for the environment vertices), or $k=k\tot$.
In view of the discussion of group actions above, we observe the following:
\begin{lemma}\label{lem:all-graph-colorings}
The set $\Xi(\Gamma,c,k)$ of non-equivalent colorings of $c$ colors of a graph $\Gamma$ is equal to the set of orbits of all colorings under the action of the graph state's automorphism group.
Formally,
\[
\Xi(\Gamma,c,k) = \modfrac{[c]^k}{\Aut(\Gamma(k\sys)).}
\]
\end{lemma}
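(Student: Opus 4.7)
The plan is to unpack the definitions and observe that the stated identity is essentially a tautology once the correct group action is in place. First, I would make explicit the action of $G = \Aut(\Gamma(k\sys))$ on the set of colorings $[c]^k$: given a coloring $C \colon V \to [c]$ (viewed equivalently as an element of $[c]^k$), and a permutation $\tau \in G$, the action is $(\tau \cdot C)(v) = C(\tau^{-1}(v))$, i.e.\ relabeling vertices by $\tau$ and pulling the color assignment along. A brief check shows this satisfies the two group action axioms ($e \cdot C = C$ and $(\tau\sigma)\cdot C = \tau\cdot(\sigma\cdot C)$), so $[c]^k$ is a legitimate $G$-set.

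Next, I would verify that this action coincides with the one implicit in the equivalence relation used to define $\Xi(\Gamma,c,k)$. Recall that two colorings $C, C'$ were called equivalent whenever there exists $\tau \in \Aut(\Gamma(k\sys))$ with $\tau(\Gamma(k\sys,C)) = \Gamma(k\sys,C')$. Since colors are carried along with vertices under $\tau$, this condition is literally $\tau \cdot C = C'$, which is exactly the orbit equivalence relation for the action above. Hence the equivalence classes of colorings are precisely the orbits $G \cdot C$.

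Finally I would conclude: by definition $\Xi(\Gamma,c,k)$ is the set of equivalence classes under this relation, while $\modfrac{[c]^k}{\Aut(\Gamma(k\sys))}$ is, by standard notation, the set of orbits of $G$ acting on $[c]^k$. These two sets are identical, which proves the lemma.

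I do not expect any real obstacle here; the statement is essentially a restatement of the definition of orbit space in the language already developed for graph colorings. The only subtlety worth flagging is to confirm that the coloring of the environment vertices (fixed to white) is respected by the action---this holds because $\tau \in \Aut(\Gamma(k\sys))$ by definition preserves the black/white 2-coloring that distinguishes system from environment vertices, so the induced action on $[c]^k$ is well-defined whether we take $k = k\sys$ (padding environment colors trivially) or $k = k\tot$.
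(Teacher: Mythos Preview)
Your proposal is correct and mirrors the paper's treatment: the paper states this lemma as an immediate observation (``In view of the discussion of group actions above, we observe the following'') without giving any formal proof, precisely because it is the definitional unpacking you describe. Your explicit verification that the equivalence relation on colorings coincides with orbit equivalence under the $\Aut(\Gamma(k\sys))$-action is exactly the content the paper leaves implicit.
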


We will often drop the third argument in the coloring set $\Xi(\Gamma,c)=\Xi(\Gamma,c,k\sys)$ for brevity.
Can we explicitly list all colorings for a graph state?
It turns out that this question is closely linked to the active field of research of finding canonical representatives within each coset $\omega\in\Xi(\Gamma,c,k)$.
This in turn is closely related to the field of graph isomorphisms, i.e.\ deciding whether two graphs $\Gamma_1$ and $\Gamma_2$ are equivalent under vertex permutations.

\begin{definition}[Canonical Map and Coloring]
A canonical map for a graph $\Gamma$ and $c$ colors is a map
\[
    \Can\colon\Xi(\Gamma,c,k) \longrightarrow [c]^k,
\]
such that $\Can(\omega) = C_\omega$ for some coloring $C_\omega \in \omega$; we also call $\Can(\omega)$ a canonical representative of the coloring $\omega$.
A canonical coloring is then the set of canonical representatives $\Can(\Xi(\Gamma,c,k))$.
\end{definition}

Given the last definition, an immediate question is whether we can efficiently list such a canonical coloring.
Assuming a canonical map $\Can$ is available, it is obvious that for small graphs a filtering approach---iterating over all colorings $[c]^k$ and collecting the results in a hash set---is a viable option.
However, this method scales exponentially in $k\sys$: for e.g.\ $c=4$ and $k=20$, we have to filter $2^{40}$ different colorings.
Another caveat is that this only works as long as the hash set of previously-determined cosets fits into the working memory, as we need to check whether a specific coloring has been found before.

A different problem is that $\Can$ might not, in fact, be available, or very expensive to compute (as in our case).
It turns out that a series of significantly faster algorithms are known to address this problem: one by \textcite{McKay2014} which we call \alg{VColG},\footnote{This is derived from \tool{nauty}'s toolkit \cite{McKay2014}, which provides a program for enumerating vertex colorings.} and another algorithm by \textcite{Borie2013} which we denote \alg{SGSColG}.
Both are based on checking whether a coloring is already canonical or not, which is much easier from a computational perspective.
For the sake of completeness we explain \alg{SGSColG} in detail in the following section.

\subsection{Canonical Colorings}\label{sec:canonical-colorings}
The algorithm \alg{SGSColG} described by \textcite{Borie2013} is based on introducing a total order on the colorings, which provides a way of testing whether a specific choice of order is already maximal.
In conjunction with a breadth-first search respecting said order, this allows for early pruning of what is essentially a backtracking algorithm.
Because it is significantly more memory-efficient to traverse the colorings depth-first, we alter the algorithm by exchanging the tree traversal method in the following;
the rest of the procedure closely follows \cite{Borie2013}.
As \citeauthor{Borie2013} mentions, this algorithm is an extension of McKay's canonical graph labeling algorithm described in \cite{Hartke2009}.

The automorphism group $G=\Aut(\Gamma)$ of a graph $\Gamma=(V,E)$ is a permutation group, i.e.\ a subgroup of the symmetric group $S_{|V|}$.
For a permutation group $G\leq S_n$ acting on $[n]$, a \emph{base} is a set of points $\lbrace \beta_1,\dots,\beta_m\rbrace$ with $\beta_j\in [n]$ such that the identity in $G$ is the only group element stabilizing all base points.
Given a base $\lbrace\beta_1,\ldots,\beta_m\rbrace$, we can define the \emph{group stabilizer chain}\footnote{We write $G_i \leq G_{i-1}$ since the subgroup $G_{i-1}$ can in fact equal $G_i$; however, one can discard equal groups as redundant in the following algorithms.} 
\begin{align}\label{eq:stabilizer-chain}
    \{ \id \} &= G_m \leq G_{m-1} \leq \ldots \leq G_1 \leq G_0 = G \\
\intertext{where}
    G_i &= \{ \tau \in G : \tau(\beta_j)=\beta_j\ \forall j \le i \}. \nonumber
\end{align}

In words, the stabilizer chain is a sequence of nested subgroups of $G$, such that each subgroup $G_i$ leaves all base points $\beta_1,\ldots,\beta_i$ invariant.
Since the action base $\{\beta_i\}$ is not unique the stabilizer chain is not unique either; 
unless stated otherwise, in the following we use the \emph{standard base} $\lbrace \beta_1,\dots,\beta_n\rbrace$ with $\beta_i=i$, where $i$ corresponds to the graph's $i$\textsuperscript{th} vertex, and such that the system vertices are sorted to the front.

As shown in the definition below, a stabilizer chain naturally yields a list of generators for the group which respect a specific ordering of the points called a Strong Generating Set.
It has many uses in computational group theory algorithms and can be efficiently calculated using the Schreier-Sims algorithm \cite{Seress2003}.
\begin{definition}[Strong Generating Set]\label{def:strong-generating-set}
Let $G$ be a finite permutation group with stabilizer chain corresponding to some action base $\{ \beta_i \}$ as in \Cref{eq:stabilizer-chain}.
Then a strong generating set $S\subseteq G$ is a set of generators for $G$ satisfying $\langle S \cap G_i \rangle = G_i$, i.e., the $i$\textsuperscript{th} stabilizer in the chain is generated by $S\cap G_i$ for all $i$.
The strong generating set inherits its base-dependence from the stabilizer chain.
\end{definition}
Using such a strong generating set, it is straightforward to calculate the orbit of a point $\beta_{i+1}$ under the action of the stabilizer group $G_i$.
In turn, this allows us to determine a transversal for $\modfrac{G_i}{G_{i+1}}$ in each link of the stabilizer chain in \Cref{eq:stabilizer-chain} relative to a base $\lbrace \beta_1,\dots,\beta_m\rbrace$.
This can be obtained via the bijection between the orbit $\lbrace \sigma \beta_{i+1}\colon \sigma\in G_i\rbrace = \lbrace o_1,\dots,o_l\rbrace$ and the cosets of $\modfrac{G_i}{G_{i+1}}$: a transversal of the latter is obtained from cycling through $o_1,\dots,o_l$ and for each $o_j$ selecting an element $\tau_{ij}\in G_i$ such that $\tau_{ij}\beta_{i+1} = o_j$ for all $1\leq j\leq l = [G_i : G_{i+1}]$.

For our applications, we specialize this to the standard base $\beta_i = i$: 
For all $j \geq i+1$ in the orbit $G_i(i+1)$, we can track which permutation $\tau_{ij} \in G_i$ maps $i+1\mapsto j$, i.e.\ $\tau_{ij}(i+1) = j$.
This yields a transversal for $\modfrac{G_i}{G_{i+1}}$ with respect to the stabilizer chain of $G$ relative to the standard base, which is an example of a so-called Strong Generating System.
\begin{definition}[Strong Generating System]\label{def:strong-generating-system}
A strong generating system for a finite permutation group with respect to some base $\{ \beta_i \}$ is a list of transversals for a strong generating set of the stabilizer chain $\{ G_i \}_i$ for base $\{ \beta_i \}$.
For the standard base $\beta_i = i$ we denote the strong generating system with $T(G):=\{ t_1, \ldots, t_k \}$,
where each $t_i$ is a list of permutations representing the transversal of\, $\modfrac{G_{i-1}}{G_i}$.
\end{definition}

\newcommand{\lex}{_\mathrm{lex}}
A strong generating system will be tremendously useful when deciding whether a graph coloring is already canonical.
In order to exploit it, we define a specific canonical function $\Can\lex$: a coloring is canonical under $\Can\lex$ if it is maximal within its coset, with respect to the natural lexicographical ordering.

\begin{algorithm}[t]
\begin{algorithmic}[1]
\Function{\alg{IsCanonical}}{$C\in \field N^k$, $T=\{t_1, \ldots, t_k \}:\text{transversal}$}
    \State $queue \gets \{ C \}$
    \For{$i \in \{1, \ldots, k\}$}
        \State $set \gets \{\}$
        \For{$w \in queue$}
            \For{$ child \in t_i \cdot w$}
            \Comment{elementwise image of $w$}
                \If{$c_{1:i} <\lex child_{1:i}$}
                \Comment{compare prefix of length $i$}
                    \State \Return \texttt{false}
                \EndIf
                \If{$c_{1:i} = child_{1:i}$}
                    \State $set = set \cup \{ child \}$
                \EndIf
            \EndFor
        \EndFor
        \State $queue \gets set$
    \EndFor
    
    \State \Return \texttt{true}
\EndFunction
\end{algorithmic}
\caption{Is given graph coloring canonical with respect to $\Can\lex$?}\label{alg:is-canonical}
\end{algorithm}
More specifically, for two strings of colors $v,w \in [c]^k$ we define $v <\lex w$ to be true if $v$ comes before $w$ with respect to a standard lexicographical ordering; $\le\lex$ is defined analogously.
This ordering allows us to define a subroutine \alg{IsCanonical} given in \Cref{alg:is-canonical}, and enables us to determine if a given coloring is maximal with respect to $\Can\lex$.

The intuition behind the algorithm is the following:
Instead of computing the entire orbit of a given coloring $C$, we explore the orbit of the prefix of $C$ up to some $i$, denoted $C_{1:i}$.
In order to do so, we use a transversal $t_i$ which applies permutations from the full group $G$ on positions $1,\ldots,i$ only, where we require the indices $C_i$ to be in little-Endian order.
The way a stabilizer chain acts thus respects lexicographic number ordering, as the high-Endian bits (i.e.\ those with higher indices) are acted on and compared first.
As an example, consider the number string $23070$, indexed in little-Endian order such that the lowest index $1$ points to the rightmost $0$, and the highest index $5$ points to the leftmost $2$.
Assuming $G=S_5$, we have that $t_4=\modfrac{G_4}{G_5} = G_4$, which in the standard basis simply contains the transposition $(45)$.
Applying it to $23070$ yields $32070$, which is larger in lexicographical order; thus we know $23070$ was not the canonical element in the orbit of $S_5$, despite not having seen the maximal element---i.e.\ in this case, it sufficed to only explore the prefix $23$ first.

For such a child element $child$, if $child_{1:i} >\lex C_{1:i}$, then already $child >\lex C$ and $C$ was not canonical.
If $child_{1:i} <\lex C_{1:i}$, we drop said child (because no permutation of the lower-Endian sites can ever make it larger than $C$) and proceed with the next element.
Finally, if the prefixes are equal, we cannot say anything: we have to check further bits and enqueue the child to be processed by $t_{i+1}$ etc.
The benefit is clear: instead of calculating the full group $G$ and checking all possible permutations of $C$ under it---where we note that $|G|$ can be enormous---we at most calculate the orbit of $C$ under $G$; the worst-case complexity of the algorithm thus has the promise to be significantly better than a na\"ive approach.
Finally, we only need to keep track of a much smaller list of elements within each loop since we compare the element's lexicographical order position-by-position; this means we try to only ever explore a small portion of the orbit of $C$.

The $\alg{IsCanonical}$ subroutine allows us to compute one representative of each canonical coloring using any backtracking technique.
\citeauthor{Borie2013} used a breadth-first approach; we found that the unpredictable memory consumption and the associated memory allocations made this very inefficient in practice, so we replace it with depth-first pruning, given in \Cref{alg:canonical-colorings}.
\begin{algorithm}[t]
\begin{algorithmic}[1]
\Function{ColoringChildren}{$C \in \field N^k$, $c \in \field N$}
    \For{$i\in \{ k, \ldots, 1 \}$}
        \If{$c_i \neq 0$}
            \If{$C_i < c$}
                \State \Yield $C + \ket{i}$
                \Comment{unit vector $\ket{i}$ of length $k\sys$}
            \EndIf
            \State \Break
        \EndIf
        \State \Yield $C + \ket{i}$
    \EndFor
\EndFunction
\Function{ExploreColoringsDF}{$C \in \field N^k$, $root \in \field N^{k}$, \alg{AcceptBranch}: function}
    \For{$child \in \Call{ColoringChildren}{C, root}$}
        \If{$ \Call{AcceptBranch}{child} $}
            \State \Yield $child$
            \State \Call{ExploreColoringsDF}{$C, child$, \alg{Prune}}
        \EndIf
    \EndFor
\EndFunction
\Function{CanonicalColorings}{$\Gamma\colon \text{graph}$, $k\sys\in\field N$, $c \in \field N$}
    \State $G\sys \gets \Aut(\Gamma(k\sys))|\sys$
    \Comment{action restricted to system vertices}
    \State $T \gets$ \Call{StrongGeneratingSystem}{$G\sys$}
    \Comment{from \Cref{def:strong-generating-system}}
    \State $root \gets (0, \ldots, 0)$
    \Comment{vector length $k\sys$}
    \State \alg{AcceptBranch} $\gets$ \Call{IsCanonical}{$\cdot$, $T$}
    \Comment{from \Cref{alg:is-canonical}}
    \State \Call{ExploreColoringsDF}{$root$, \alg{Prune}}
\EndFunction
\end{algorithmic}
\caption{All canonical colorings of $c$ colors for graph $\Gamma$.}\label{alg:canonical-colorings}
\end{algorithm}

\subsection{Canonical Images for Colored Graphs}\label{sec:canonical-images}
A closely-related question to the one answered in \alg{IsCanonical} in \Cref{alg:is-canonical}---namely whether a given graph's coloring $C$ is already canonically-ordered---is whether we can map $C$ to its canonical image, i.e.\ whether we can compute $\Can(C)$ efficiently.
At first glance this problem might look like it is independent of the graph $\Gamma$ that gives rise to the permutation group $G$ based on which we want to decide canonicity.
In fact, it turns out that there is a package for \tool{GAP} that does just that, based on a given permutation group \cite{Jefferson2019}.
The fundamental issue with this approach is that it is slower than making use of the extra information available---the graph---and isomorphism-based algorithms developed for the latter.\footnote{%
Of course another factor is that \tool{GAP} is not as fast as \tool{C}.
However, even a re-implementation will be unlikely to perform as well as the graph-based approach we are about to introduce.
}

The way we address this particular question is by drawing on graph isomorphism research, in particular the question of finding canonical images of colored graphs.
Indeed, the way many graph isomorphism tests work in practice is by defining a canonical function on graphs, just like for colorings; we then have that two graphs $\Gamma_1 \simeq \Gamma_2$ if and only if their canonical images $\Can(\Gamma_1) = \Can(\Gamma_2)$, where for simplicity of notation we use $\Can$ as for colorings (which will never be ambiguous).
For colored graphs, this of course implies both that the uncolored graph is canonically mapped, and that the coloring itself is in a specific order dictated by $\Can$.

There is a series of tools that can be used to calculate such canonical graph images, foremost \tool{nauty} and \tool{Traces} by \textcite{McKay2014} and \tool{bliss} by \textcite{Junttila2007}, to which e.g.\ \tool{IGraph} is linked.\footnote{%
Other tools are \tool{saucy} and \tool{conauto}; we used neither of them.}
All of them use rigorously-proven underlying algorithms for finding canonical graph images, and we refer the reader to the respective papers for more details.
While they solve a seemingly larger problem---graph isomorphisms instead of canonical colorings---we found this approach to be extremely fast even for moderately-sized graphs of up to 50 vertices.
For a discussion on the tested graphs vs.\ size of their respective automorphism groups we refer the reader to \Cref{sec:results}.

For a given colored graph $\Gamma$ with $c$ colors, the canonical graph image approach gives us a map
\begin{align}
    \Gamma(k\sys,C) \longmapsto \Can(\Gamma)(k\sys,\Can(C)) 
    \quad\text{where}\quad
    \Can(C) \in \Xi(\Gamma,c,k).
    \label{eq:-can-map}
\end{align}
Here we assumed that the graph canonical map $\Can(\Gamma)$ respects the partition between system and environment vertices (which one can usually achieve in practice).
\Cref{eq:-can-map} thus induces a function
\begin{align}\label{eq:canonical-image}
    \alg{CanonicalImage}_{\Gamma(k\sys)} \colon \field [c]^k \longrightarrow \Xi(\Gamma,k\sys,c).
\end{align}
In case the graph in the context is clear we leave out the subscript, and
we emphasize that, for our purposes, it will not be important that \alg{CanonicalImage}$(C)=\Can\lex(C)$.

\subsection{Homomorphic Group Actions}\label{sec:homomorphic-groups}

The final ingredient for the main result of this section, an algorithm computing the coherent information of a graph state that exploits the graph's symmetries, is the concept of homomorphic group actions.
Following \cite{Kerber1999}, two group actions ${}_GX$ and ${}_HY$ are called \emph{homomorphic} if there exists an epimorphism $\eta\colon G \longrightarrow H$ (i.e., a surjective group homomorphism) and a surjective map $\theta\colon  X \longrightarrow Y$, such that $\theta(\tau x) = \eta(\tau) \theta(x)$ for all $\tau\in G$ and $x \in X$.
In this situation we have the following result.
\begin{lemma}[\cite{Kerber1999}]\label{lem:homomorphic-group-action}
Let ${}_GX$ and ${}_HY$ be homomorphic group actions under $\theta$, and let $T$ be a transversal of $\modfrac{Y}{H}$. Then for all $\omega \in \modfrac{X}{G}$ there exists a unique $y\in T$ such that $\omega \cap \theta^{-1}(y) \neq \emptyset$.
\end{lemma}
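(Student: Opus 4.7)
The plan is to reduce the statement to the observation that $\theta$ descends to a well-defined surjection on orbit spaces $\modfrac{X}{G} \to \modfrac{Y}{H}$, after which the conclusion is immediate from the definition of a transversal.

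First I would verify that $\theta$ pushes forward $G$-orbits to $H$-orbits. Given $\omega \in \modfrac{X}{G}$, pick any $x \in \omega$ so that $\omega = Gx$. Then using the intertwining relation $\theta(\tau x) = \eta(\tau)\theta(x)$, one computes
\[
    \theta(\omega) = \{ \theta(\tau x) : \tau \in G \} = \{ \eta(\tau) \theta(x) : \tau \in G \} = \eta(G)\,\theta(x) = H\,\theta(x),
\]
where the last equality uses that $\eta$ is an epimorphism. Hence $\theta(\omega)$ is exactly one $H$-orbit in $Y$, independent of the choice of $x$.

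Next I would use the transversal property to extract existence and uniqueness. Since $T$ contains exactly one representative of each $H$-orbit, there is a unique $y \in T$ with $y \in \theta(\omega)$. Existence of an $x \in \omega$ with $\theta(x) = y$ gives $\omega \cap \theta^{-1}(y) \neq \emptyset$. For uniqueness, suppose some other $y' \in T$ also satisfies $\omega \cap \theta^{-1}(y') \neq \emptyset$. Then $y' = \theta(x')$ for some $x' \in \omega$, so $y' \in \theta(\omega) = H\theta(x) = Hy$; being in the same $H$-orbit and both lying in the transversal $T$ forces $y = y'$.

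There is no real obstacle here: the only step requiring a moment's care is checking that $\theta(\omega)$ is a single $H$-orbit rather than a union of several, and this is precisely where surjectivity of $\eta$ (the "epi" in epimorphism) is used. Everything else is a direct unpacking of the definitions of orbit, transversal, and homomorphic action.
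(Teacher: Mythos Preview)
Your proof is correct. Note that the paper does not actually prove this lemma itself; it merely states it with a citation to \cite{Kerber1999}, so there is no in-paper argument to compare against. Your approach---showing that $\theta$ induces a well-defined map $\modfrac{X}{G}\to\modfrac{Y}{H}$ by computing $\theta(Gx)=\eta(G)\theta(x)=H\theta(x)$ via surjectivity of $\eta$, and then invoking the defining property of a transversal---is the standard one and is complete as written.
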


In other words, \Cref{lem:homomorphic-group-action} says that if a map $\theta$ induces a group action homomorphism, it is never the case that elements from an orbit $\omega\in\modfrac{X}{G}$ are mapped to different orbits in $\modfrac{Y}{H}$.
Of course it is still possible that two distinct $\omega,\omega'\in \modfrac{X}{G}$ are mapped to the same orbit $\tau \in \modfrac{Y}{H}$.

\subsection{Coherent Information of Decohered Graph States with Symmetries}\label{sec:symmetric-lambda}

As we have seen in the last section, for each of the $2^{k\sys}$ coefficients $\lambda_U$ in \Cref{eq:lambda-vector}, we need to evaluate a sum over $2^{k\sys}$ terms; and another sum over $2^{k\sys}$ elements in \Cref{eq:lambda-state-tr-sum} to calculate the marginal $\sigma_\mathrm{B}$; the algorithm runtime thus scales as $\Omega(4^{k\sys})$.
However, if the graph $\Gamma=(V,E)$ has symmetries, many of the $\lambda_U$ will be identical.
Furthermore, many of the triples $(U_1,U_2,U_3)\in\mathcal M(U)$ defined in terms of the vertex subsets $U_i\subset V$ will have the same cardinalities.
The expression of the full state $\sigma_\mathrm{BR}$ in \Cref{eq:lambda-state-sum} is thus highly redundant.
We will exploit this redundancy induced by the graph's symmetries to significantly speed up the calculation of the $\lambda$-vectors in \Cref{eq:lambda-vector,eq:lambda-state-tr-sum}.

Take a graph $\Gamma=(V,E)$ where the vertex set is partitioned into system and environment vertices $V = V\sys \cup V\env$ with $|V\sys|=k\sys$, $|V\env|=k\env$, and for a Pauli channel with noise parameters $(p_0, p_1, p_2, p_3)$ for which $p_0 = 1 - \sum_i p_i$ let $q_i := p_i / p_0$.
For some $U\subset V$, we then have
\[
    \lambda_U = p_0^{k\sys} \sum_{(U_1,U_2,U_3)\in\cM(U)} q_1^{|U_1|} q_2^{|U_2|} q_3^{|U_3|},
\]
where $\cM(U)$ is defined in \eqref{eq:lambdaU-U-condition}.
Each triple $(U_1,U_2,U_3)$ of disjoint subsets of $V$ can be identified with a base $4$ string via the map
\begin{align}\label{eq:U-color-identification}
    (U_1,U_2,U_3) \overset\sim\longmapsto 1\cdot U_1 + 2 \cdot U_2 + 3 \cdot U_3 \in [4]^{k\sys}.
\end{align}
Let $G = \Aut(\Gamma(k\sys))$, which acts on $[c]^k$ by permuting indices.
Using the identification in \Cref{eq:U-color-identification}, we can define a map
\begin{align}\label{eq:group-homom-theta}
    \theta' \colon [4]^{k\sys} \longrightarrow [2]^{k\tot},\quad
    \theta'(s) =  \Gamma(U_1 \oplus U_2) \oplus U_2 \oplus U_3
\end{align}
where we implicitly assume the $U_i$ to be padded with zeros within the brackets.
This leads to the following technical lemma.

\begin{lemma}\label{lem:theta-bijection}
The map  $\theta \coloneqq\theta'(\cdot)|\sys$ with $\theta'$ as in \Cref{eq:group-homom-theta} induces a surjective map between $X\coloneqq\modfrac{[4]^{k\sys}}{G}$ and $Y\coloneqq\modfrac{[2]^{k\sys}}{G}$.
Moreover, letting $n(U)\coloneqq | \theta^{-1}(U)|$, we have for all $\tau\in Y$ that there exists an $n_0\in \bN$ such that $n(U) = n_0$ for all $U\in\tau$.
\end{lemma}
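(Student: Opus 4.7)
The natural strategy is to show that $\theta$ is $G$-equivariant with respect to the natural actions of $G = \Aut(\Gamma(k\sys))$ on both $[4]^{k\sys}$ (by permuting positions of the base-$4$ string, equivalently by simultaneously permuting the three subsets $U_1, U_2, U_3$) and on $[2]^{k\sys}$ (by permuting subsets of $V\sys$). Equivariance places us precisely in the homomorphic group actions setting of \Cref{lem:homomorphic-group-action} with $\eta = \id_G$, which immediately implies that the induced map $\bar\theta\colon X \to Y$ is well-defined on orbits.

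The main verification is therefore to check that $\theta(\sigma \cdot s) = \sigma|\sys \cdot \theta(s)$ for every $\sigma \in G$ and every $s \in [4]^{k\sys}$ identified with a pairwise-disjoint triple $(U_1, U_2, U_3)$. Two facts need to be combined: (i) since $\sigma$ is a graph automorphism of $\Gamma$, its permutation matrix commutes with the adjacency matrix, so $\Gamma\, \sigma(u) = \sigma(\Gamma u)$ for all $u \in \bF_2^{k\tot}$; and (ii) since $\sigma \in \Aut(\Gamma(k\sys))$ preserves the color partition $V = V\sys \cup V\env$, restriction to $V\sys$ intertwines the two $G$-actions, $(\sigma \cdot v)|\sys = \sigma|\sys \cdot (v|\sys)$. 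Using the $\bF_2$-linearity of $\sigma$ to push it through $\oplus$, together with (i), yields $\theta'(\sigma \cdot s) = \sigma \cdot \theta'(s)$; restricting via (ii) then gives the claim.

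Given equivariance, the two remaining assertions are essentially formal. Surjectivity of $\bar\theta$ reduces to surjectivity of $\theta$ itself: for any target $W \subset V\sys$, choosing $U_1 = U_2 = \emptyset$ and $U_3 = W$ produces $\theta'(s) = W$ (supported only on $V\sys$), hence $\theta(s) = W$. For orbit-constancy of $n$, let $W, W' \in [2]^{k\sys}$ lie in a common $G$-orbit via $W' = \sigma|\sys \cdot W$; equivariance then exhibits a bijection $\theta^{-1}(W) \to \theta^{-1}(W')$ sending $s \mapsto \sigma \cdot s$, so $n(W) = n(W')$. The only genuine subtlety, and the main thing to get right, is fact (ii): working with the uncolored automorphism group $\Aut(\Gamma)$ would in general allow permutations sending system vertices into $V\env$, which would destroy equivariance after restriction---this is precisely why the colored automorphism group $\Aut(\Gamma(k\sys))$ is the correct object throughout.
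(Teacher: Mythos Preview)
Your proof is correct. The first half---establishing $G$-equivariance of $\theta'$ via $\sigma\Gamma\sigma^{-1}=\Gamma$, then invoking \Cref{lem:homomorphic-group-action} with $\eta=\id$ together with the fact that restriction $\cdot|\sys$ commutes with the $G$-action---is essentially identical to the paper's argument, and your explicit preimage $(U_1,U_2,U_3)=(\emptyset,\emptyset,W)$ for surjectivity is the natural choice the paper leaves implicit.

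Where you diverge is in the second claim. You argue directly that equivariance furnishes a bijection $\theta^{-1}(W)\to\theta^{-1}(\sigma|\sys W)$ via $s\mapsto\sigma s$, so fiber sizes are constant along $G$-orbits. This is entirely valid and more economical. The paper instead proves a finer statement: for each orbit $\omega\in X$ mapping into $\tau\in Y$, it shows via the chain $G_x\le G_{\theta(x)}\le G$ and a transversal count that \emph{every} element of $\tau$ is hit by exactly $|\omega|/|\tau|$ elements of $\omega$, and then sums over contributing $\omega$ to recover $n(U)$. This uniform-covering multiplicity $|\omega|/|\tau|$ is not part of the lemma's statement, but it is precisely the weight that appears in the expression for $\lambda_\tau$ in \Cref{th:symm-lambda-state-sum}, whose proof cites the lemma for this fact. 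So your argument proves the lemma as stated, while the paper's longer route simultaneously establishes the key ingredient for the next theorem.
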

\begin{proof}
We first note that $\theta$ is trivially surjective.
Now let $\sigma\in G$. Then for some $s\simeq (U_1,U_2,U_3)\in[4]^{k\sys}$, we have
\begin{align*}
    \theta'(\sigma s) & =\Gamma(\sigma U_1 \oplus \sigma U_2) \oplus \sigma U_2 \oplus \sigma U_3 \\
    &= \sigma \Gamma \sigma^{-1}(\sigma U_1 \oplus \sigma U_2) \oplus \sigma U_2 \oplus \sigma U_3 \\
    &= \sigma \left[ \Gamma(U_1 \oplus U_2) \oplus U_2 \oplus U_3 \right] \\
    &= \eta(\sigma) \theta'(s)
\end{align*}
for $\eta\equiv\id$, which is a trivial epimorphism $\eta\colon G \longmapsto G$.
Hence, $_G[4]^{k\sys}$ and $_G[2]^{k\sys}$ are homomorphic group actions via $\theta$, and the first claim follows from \Cref{lem:homomorphic-group-action} and the fact that restriction $\cdot|\sys$ and $G$ commute.

We now prove the second claim of the lemma.
By \Cref{lem:homomorphic-group-action}, for a fixed $\omega\in X$ we have $\theta(\omega)\subset\tau$ for a particular $\tau\in Y$.
In fact, $\omega$ covers $\tau$ uniformly via $\theta$.
To see this, let $x$ be a representative of $\omega$, i.e., $\omega = \{ \sigma x\colon \sigma\in G\}$, and let $\sigma\in G_x$, the stabilizer group of $x$.
Then, $\theta(x) = \theta(\sigma x) = \sigma \theta(x)$, and hence also $\sigma\in G_{\theta(x)}$.
This means that we have a chain of subgroups $G_x\leq G_{\theta(x)}\leq G$.
Define now $p = [G_{\theta(x)}:G_x]$ and $q=[G:G_{\theta(x)}]$, and choose left transversals $\{  a_1 G_x,\dots, a_p G_x\}$ of $G_x$ in $G_{\theta(x)}$ and $\{  b_1 G_{\theta(x)}, \dots,  b_q G_{\theta(x)}\}$ of $G_{\theta(x)}$ in $G$.
Then $\{  b_i a_j G_x\colon i\in [q], j\in [p]\}$ is a left transversal of $G_x$ in $G$, and we have
\begin{align}
    \theta(b_i a_j x) = b_i a_j \theta(x) = b_i \theta(x)\quad \text{for all $i\in[q]$, $j\in[p]$},
    \label{eq:cosets-theta}
\end{align}
since $a_j\in G_{\theta(x)}$.
The left transversal $\{  b_i a_j G_x\colon i\in [q], j\in [p]\}$ is in bijective correspondence with the elements of $\omega$ by the orbit-stabilizer theorem, and so is $\{  b_1 G_{\theta(x)}, \dots,  b_q G_{\theta(x)}\}$ with the elements of $\tau$.
\Cref{eq:cosets-theta} shows that each element $b_i\theta(x)$ in $\tau$ is hit by exactly $p$ elements $b_i a_j x$ in $\omega$, and the orbit-stabilizer theorem implies that
\begin{align*}
    p = [G_{\theta(x)}:G_x] = \frac{|G_{\theta(x)}|}{|G_x|} = \frac{\modfrac{|G|}{|G_x|}}{\modfrac{|G|}{|G_{\theta(x)}|}} = \frac{|\omega|}{|\tau|}.
\end{align*}
The proof is concluded by noting that for any $\tau\in Y$ and $U\in \tau$ we have
\begin{align*}
    n(U) = \frac{1}{|\tau|}\sum_{\substack{\omega\in X \\ \omega\cap\theta^{-1}(U)\neq \emptyset}} |\omega|,
\end{align*}
which again follows from \Cref{lem:homomorphic-group-action}.
\end{proof}

This finally leads us to the following main result, which allows us to calculate $\lambda_U$ in a symmetry-aware fashion.
\begin{theorem}\label{th:symm-lambda-state-sum}
Let $X:=\modfrac{[4]^{k\sys}}{G}$, and $ Y':=\modfrac{[2]^{k\tot}}{G}$, where $G=\Aut(\Gamma(k\sys))$.
Then \Cref{eq:lambda-state-sum} becomes
\begin{align*}
    \sigma_\s{BR} =& \sum_{\tau \in Y'} \lambda_\tau \Pi_\tau \\
\intertext{where}
    \lambda_\tau :=&\ p_0^{k\sys} \sum_{\substack{\omega \in X\ \\ \theta'(\omega)\subset \tau}}
    \frac{|\omega|}{|\tau|} q_1^{|U_1|} q_2^{|U_2|} q_3^{|U_3|},\\
    \Pi_\tau :=& \sum_{U \in \tau} \ketbra{U}.
\end{align*}
\end{theorem}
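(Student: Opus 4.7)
The plan is to start from the explicit decomposition in \Cref{eq:lambda-state-sum,eq:lambda-vector}, namely
\[
    \sigma_\s{RB} = \sum_{U\subset V} \lambda_U \ketbra{U},\qquad
    \lambda_U = p_0^{k\sys}\!\!\sum_{(U_1,U_2,U_3)\in\cM(U)} q_1^{|U_1|} q_2^{|U_2|} q_3^{|U_3|},
\]
regroup the outer sum according to the $G$-orbits $\tau\in Y'=\modfrac{[2]^{k\tot}}{G}$, and show that (i) each $\lambda_U$ is constant along an orbit $\tau$ and (ii) its value coincides with the formula claimed for $\lambda_\tau$. The two partitions then give
$\sigma_\s{RB}=\sum_{\tau\in Y'}\lambda_\tau\sum_{U\in\tau}\ketbra{U}=\sum_{\tau\in Y'}\lambda_\tau\Pi_\tau$.

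For step (i), I would invoke the identification \Cref{eq:U-color-identification} and the observation already established inside the proof of \Cref{lem:theta-bijection}, namely that $\theta'(\sigma s)=\sigma\,\theta'(s)$ for every $\sigma\in G$ and $s\in[4]^{k\sys}$. This equivariance gives a bijection $\cM(U)\to\cM(\sigma U)$ sending $(U_1,U_2,U_3)\mapsto(\sigma U_1,\sigma U_2,\sigma U_3)$, which preserves the cardinalities $|U_i|$ and hence the monomial weights. Consequently $\lambda_U=\lambda_{\sigma U}$, so $\lambda$ is constant on every $G$-orbit in $[2]^{k\tot}$. This also shows that $\theta'$ indeed induces a (surjective) map $X\to Y'$, so the sum $\sum_{\omega\in X,\;\theta'(\omega)\subset\tau}$ in the claim is well defined.

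For step (ii), I would pick any $U\in\tau$, write $\cM(U)=\theta'^{-1}(U)$ (now viewing $\theta'$ as a map into $[2]^{k\tot}$), and partition it according to the $G$-orbits $\omega\in X$ that intersect $\theta'^{-1}(U)$. Because the monomial $q_1^{|U_1|} q_2^{|U_2|} q_3^{|U_3|}$ only depends on the orbit $\omega$, I can pull it out of the inner sum and obtain
\[
    \lambda_U = p_0^{k\sys}\!\!\sum_{\substack{\omega\in X\\ \theta'(\omega)\subset\tau}}\bigl|\omega\cap\theta'^{-1}(U)\bigr|\,q_1^{|U_1|}q_2^{|U_2|}q_3^{|U_3|}.
\]
The remaining task is to identify $|\omega\cap\theta'^{-1}(U)|$ with $|\omega|/|\tau|$. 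This is precisely the uniform-cover argument carried out in the proof of \Cref{lem:theta-bijection}: picking $x\in\omega$ with $\theta'(x)=U$, the chain of stabilizers $G_x\le G_{\theta'(x)}\le G$ together with the orbit--stabilizer theorem \eqref{eq:orbit-stabilizer-theorem} yields $[G_{\theta'(x)}:G_x]=|\omega|/|\tau|$, and this index equals exactly the number of transversal representatives $b_i a_j x$ of $G_x$ in $G$ that satisfy $\theta'(b_i a_j x)=U$. Substituting this count gives the stated formula for $\lambda_\tau$.

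The conceptually delicate step -- and the main thing to get right -- is this last counting argument, because \Cref{lem:theta-bijection} was phrased for $\theta=\theta'(\cdot)|\sys$ into $Y=\modfrac{[2]^{k\sys}}{G}$, while here we need the analogous fact for the unrestricted map $\theta'$ into $Y'=\modfrac{[2]^{k\tot}}{G}$. I would therefore briefly note that the proof of \Cref{lem:theta-bijection} uses only $G$-equivariance of the map and the orbit--stabilizer theorem, both of which hold verbatim for $\theta'$; everything else (equivariance under the trivial epimorphism $\eta=\id$, bijection between orbits and transversals, uniform covering) then transfers without change, completing the proof.
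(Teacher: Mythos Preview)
Your proposal is correct and follows essentially the same route as the paper: show $\lambda_U$ is constant on $G$-orbits via equivariance of $\theta'$, regroup $\sum_U$ as $\sum_{\tau\in Y'}\sum_{U\in\tau}$, and then invoke the uniform-covering count $|\omega\cap\theta'^{-1}(U)|=|\omega|/|\tau|$ from \Cref{lem:theta-bijection}. Your explicit remark that \Cref{lem:theta-bijection} is stated for $\theta=\theta'(\cdot)|\sys$ rather than $\theta'$, and that the stabilizer-chain argument carries over verbatim, is a point the paper glosses over but uses implicitly.
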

\begin{proof}
Since $|\sigma U_i|=|U_i|$ for $\sigma\in G$, we have that $\lambda_{\sigma U}=\lambda_U$ for any $\sigma\in G$.
We can therefore partition $Y'$ into the orbits $\tau\in Y'$ induced by the graph's automorphism group $G$, i.e.
\[
    \sum_U \lambda_U \ketbra{U} = \sum_{\tau\in Y'} \sum_{U\in\tau} \lambda_U \ketbra{U} = \sum_{\tau\in Y'} \lambda_\tau \Pi_\tau.
\]
By \Cref{lem:theta-bijection} we further know that, for each $\omega\in X$ mapped into $\tau\in Y'$ via $\theta'$, there exist precisely $|\omega|/|\tau|$ elements in $\omega$ that map to the same $U\in\tau$.
The claim follows.
\end{proof}
We emphasize that the sum in the expression for $\lambda_\tau$ in \Cref{th:symm-lambda-state-sum} can be calculated over a traversal of $X$, and it is irrelevant which one we choose.
Indeed, as mentioned at the end of \Cref{sec:canonical-images}, this precisely corresponds to the choice of coloring iterator which can follow a different ordering than the canonical map used to collect the nonzero terms within $\lambda_\tau$.

Each $\lambda_U$ in \Cref{eq:lambda-state-sum} thus occurs with a certain multiplicity $|\tau|$ and weight $|\omega|/|\tau|$; calculating the polynomial only once and keeping track of either quantity thus has the promise of saving a tremendous amount of computational time and memory.
Effectively, we have constructed a sparse representation of $\sigma_\s{BR}$, which compresses identical expressions within the sum that defines $\lambda_U$ into one term, \emph{and} compresses repeated $\lambda_U$.

Yet in order to make full use of the symmetries, we also need to be able to efficiently calculate the partial trace of $\sigma_\s{BR}$.
Showing how this can be achieved given the sparse representation of $\sigma_\s{BR}$ is our next main result.
\newcommand\lpre{\lambda^\mathrm{pre}}
\newcommand\mpre{\mu^\mathrm{pre}}
\begin{theorem}\label{th:symm-lambda-state-tr-sum}
We use the same notation for $X$ and $Y'$ as in \Cref{th:symm-lambda-state-sum}, and we let $Y = \modfrac{[2]^{k\sys}}{G}$.
Then \Cref{eq:lambda-state-tr-sum} becomes
\begin{align*}
    \sigma_{\s{B}} =& \sum_{\tau \in Y} \sum_{b\in [2]^{k\env}} \lpre_{\tau,b} \,\Pi_\tau,
\intertext{where}
    \lpre_{\tau,b} :=& \sum_{\substack{\tau' \in Y' \\ \tau'|\sys = h(\tau,b)}}\!\!\!\! \frac{|\tau'|}{|\tau'|\sys|} \lambda_{\tau'}, \\
    h(a,b) :=&\ \Gamma\systoenv\cdot b \oplus a
    \quad\text{for}\quad
    \Gamma = \begin{pmatrix}
    \Gamma\systosys & \Gamma\systoenv \\
    \Gamma\envtosys & \Gamma\envtoenv
    \end{pmatrix}.
\end{align*}
\end{theorem}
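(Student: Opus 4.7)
The strategy closely parallels that of \Cref{th:symm-lambda-state-sum}. Starting from the sparse representation $\sigma_\s{RB} = \sum_{\tau' \in Y'} \lambda_{\tau'} \Pi_{\tau'}$ established there, I apply the partial-trace formula \eqref{eq:graph-state-marginal} to each graph-basis projector,
\[
\tr\env \ketbra{U} = \frac{1}{2^{k\env}} \sum_{b \in [2]^{k\env}} \ketbra{U\sys \oplus \Gamma\systoenv b},
\]
expand $\Pi_{\tau'} = \sum_{U \in \tau'} \ketbra{U}$, and swap sums. The remaining task is to recognize $\sum_{U \in \tau'} \ketbra{U\sys \oplus \Gamma\systoenv b}$ as a multiple of some $\Pi_\tau$ with $\tau \in Y$, and then to collect coefficients orbit by orbit.

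The technical heart is to promote the restriction map $\cdot|\sys \colon [2]^{k\tot} \to [2]^{k\sys}$ to a homomorphism of group actions in the sense of \Cref{lem:homomorphic-group-action}. Because $G = \Aut(\Gamma(k\sys))$ fixes the 2-coloring, every $\sigma \in G$ decomposes as $\sigma = \sigma\sys \oplus \sigma\env$, and $\sigma \Gamma \sigma^{-1} = \Gamma$ in block form yields the intertwining relation $\sigma\sys \Gamma\systoenv = \Gamma\systoenv \sigma\env$. This makes restriction $G$-equivariant, so it descends to a well-defined map $Y' \to Y$, $\tau' \mapsto \tau'|\sys$. The argument from the second half of the proof of \Cref{lem:theta-bijection}---the nested stabilizer chain $G_U \leq G_{U\sys} \leq G$ combined with the orbit--stabilizer theorem---then shows that restriction is uniformly $\frac{|\tau'|}{|\tau'|\sys|}$-to-one from $\tau'$ onto $\tau'|\sys$, and since translation by $\Gamma\systoenv b$ is a bijection of $[2]^{k\sys}$, this multiplicity persists after the shift.

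It remains to reorganize by target orbit $\tau \in Y$. Fixing a representative $W \in \tau$ and a shift $b$, the $\ketbra{W}$-coefficient of $\sigma_\s{B}$ collects exactly those $\tau' \in Y'$ for which $\tau'|\sys$ equals the orbit of $W \oplus \Gamma\systoenv b$; denoting this orbit by $h(\tau, b)$ produces the defining sum for $\lpre_{\tau, b}$, and summing the resulting contributions over $b \in [2]^{k\env}$ (together with the overall $2^{-k\env}$ from the partial trace) gives the coefficient of $\Pi_\tau$ as claimed. The main subtlety---and the bulk of the bookkeeping---is that the orbit of $W \oplus \Gamma\systoenv b$ genuinely depends on the chosen representative $W$, because translation by $\Gamma\systoenv b$ is not $G$-equivariant in isolation: the intertwining relation $\sigma\sys \Gamma\systoenv = \Gamma\systoenv \sigma\env$ couples the $G$-action on $b$ to that on $W$. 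Different choices of $W$ merely permute the $b$-fibers by a $\sigma\env$-action, so the outer sum over $b$ is representative-independent; verifying this consistency is where the intertwining relation earns its keep, and it is the only nontrivial step beyond transcribing the setup of \Cref{th:symm-lambda-state-sum} to the restricted action.
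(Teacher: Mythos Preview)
Your proposal is correct and follows essentially the same route as the paper. You start from the symmetry-reduced $\sigma_\s{RB}$ and push the partial trace through, whereas the paper starts from the unreduced partial-trace expression \eqref{eq:--symm-1} and then inserts the orbit decomposition of the $\lambda_U$; either way the computation hinges on the same two ingredients you isolate---the intertwining relation $\sigma\sys\,\Gamma\systoenv = \Gamma\systoenv\,\sigma\env$ (which the paper packages as the covariance $h(\sigma a,b)=\sigma\,h(a,\sigma^{-1}b)$) and the uniform $|\tau'|/|\tau'|\sys|$ covering of $\tau'|\sys$ by $\tau'$---and both proofs resolve the representative-dependence of $h(\tau,b)$ by observing that the outer sum over $b$ absorbs the induced $\sigma\env$-permutation.
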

\begin{proof}
    As a first step, we show that $h(a,b)$ transforms covariantly under the group's action on the system vertices.
    For some permutation $\sigma\in G$ and denoting with $[a,0]$ and $[0,b]$ the padded vectors of $a$ and $b$ in $[2]^{k\tot}$, restriction to the system vertices $\cdot|\sys$ and the group action of $G=\Aut(\Gamma(k\sys))$ commute, since system and environment vertices are never interchanged by construction of $G$.
    We thus have
    \begin{align*}
        h(\sigma(a),b) &= \Gamma\systoenv \cdot b \oplus \sigma(a) \\
        &= \left(\Gamma \cdot [0,b] \oplus \sigma[a,0]\right)|\sys \\
        &= \sigma(\sigma^{-1} \Gamma \cdot[0,b] \oplus [a,0] )|\sys \\
        &= \sigma(h(a,\sigma^{-1}(b))),
    \end{align*}
    where we used $\sigma^{-1}\Gamma \sigma = \Gamma$ in the last step.
    This shows that demanding $\tau'|\sys = h(\tau,b)$ is a well-defined expression, since 
    \begin{align*}
    \sum_{b\in[2]^{k\env}} \lpre_{\tau,\sigma^{-1}(b)}\Pi_\tau=\sum_{b\in[2]^{k\env}} \lpre_{\tau,b}\Pi_\tau.
    \end{align*}
    
    Our starting point is the partial trace algorithm for \Cref{eq:lambda-vec-tr,eq:lambda-state-tr-sum}, described in \Cref{sec:decohered-graph-states-ci}:
    \begin{align}\label{eq:--symm-1}
        \sigma_{\s{B}} = \sum_{a \in [2]^{k\sys}} \sum_{b \in [2]^{k\env}} \mpre_{a,b} \ketbra{a}
    \end{align}
    where
    \begin{align*}
        \mpre_{a,b} &= \sum_{ \substack{U \in [2]^{k\tot}\colon \\ U|\sys = h(a,b) }} \lambda_U \\
        &= \sum_{\tau' \in Y'} \sum_{\substack{U \in \tau' \\ U|\sys = h(a,b)}} \lambda_U  \\
        &= \sum_{\substack{\tau' \in Y'\colon \\ \tau'|\sys \ni h(a,b)}} \frac{|\tau'|\sys|}{|\tau'|\sys|}  \sum_{\substack{U \in \tau'\colon \\ U|\sys = h(a,b)}} \lambda_U \\
        &= \sum_{\substack{\tau' \in Y'\colon \\ \tau'|\sys \ni h(a,b)}} \frac{|\tau'|}{|\tau'|\sys|}
        \underbrace{\frac{1}{|\tau'|} \sum_{U\in \tau'} \lambda_U.}_{\equiv \lambda_{\tau'}}
    \end{align*}
    \Cref{eq:--symm-1} therefore becomes
    \begin{align*}
        \sigma_{\s{B}} &= \sum_{\tau \in Y} \sum_{a \in \tau} \sum_{b\in [2]^{k\env}}
        \sum_{\substack{\tau' \in Y' \\ \tau'|\sys \ni h(a,b)}} \frac{|\tau'|}{|\tau'|\sys|} \lambda_{\tau'}
    \end{align*}
    and the claim then follows from covariance of $h$ under $G$.
\end{proof}

Finally, we translate \Cref{th:symm-lambda-state-sum,th:symm-lambda-state-tr-sum} into an algorithm, using the coloring and canonical map sub-procedures developed in the last few sections.
The pseudocode listing can be found in \Cref{alg:symm-lambda}.
\begin{algorithm}[t]
\begin{algorithmic}[1]
\Function{SymmetricLambda}{$\Gamma=(V,E)$ : graph, $k\sys\in\field N$}
    \State $\Can \gets [(\cdot) \longmapsto \alg{CanonicalImage}_{\Gamma(k\sys)}(\cdot)]$
    \Comment{see \Cref{eq:canonical-image}}
    \State $\alg{Multiplicity} \gets [(\cdot) \longmapsto \#(\Gamma,k\sys,\cdot)]$
    \Comment{see \Cref{def:coloring-multiplicity} and \Cref{lem:coloring-multiplicity}}
    \State $\lambda, \lambda_a, \lpre \gets \{ \}$
    \Comment{empty hashtables of tuples}
    \For{$(U_1,U_2,U_3) \in \Call{CanonicalColorings}{\Gamma,k\sys,4}$}
    \Comment{see \Cref{alg:canonical-colorings}}
        \State $C = U_1 + 2U_2 + 3U_3$
        \Comment{coloring $\in[4]^{k\sys}$}
        \State $m_4 \gets$ \Call{Multiplicity}{$C$}
        \State $U \gets \Gamma(U_1 \oplus U_2) \oplus U_2 \oplus U_3$
        \State $m_2 \gets$ \Call{Multiplicity}{$U$}
        \State $m_2' \gets$ \Call{Multiplicity}{$U|\sys$}
        \State $idx \gets \Can(U)$
        \State $idx' \gets \Can(U|\sys)$
        \State $p \gets q_1^{|U_1|} q_2^{|U_2|} q_3^{|U_3|}$
        
        \State $(poly,m) \gets \&\lambda_{idx}$
        \Comment{reference to hashtable entry}
        \State $poly \gets poly + \frac{m_4}{m_2} p$
        \State $m \gets m_2$
        \Comment{$m=0$, or $m=m_2$ already}
        
        \State $(poly,m) \gets \&\lpre_{idx'}$
        \State $poly \gets poly + \frac{m_4}{m_2'} p$
        \State $m \gets m_2'$
        \Comment{$m=0$, or $m=m_2'$ already}
    \EndFor
    
    \For{$a \in \Call{CanonicalColorings}{\Gamma,k\sys,2}$}
        \For{$b \in [2]^{k\env}$}
            \State $b2a \gets \Gamma\envtosys\cdot b \oplus a$
            \State $idx \gets \Can(a)$
            \State $idx' \gets \Can(b2a)$
            \State $(poly,m) \gets \&\lambda_{a,idx}$
            \State $(poly',m') \gets \& \lpre_{idx'}$
            \State $poly \gets poly + poly'$
            \State $m \gets m'$
            \Comment{$m=0$, or $m=\alg{Multiplicity}(b2a)=m'$ already}
        \EndFor
    \EndFor
    
    \State \Return $(\lambda,\lambda_a)$
\EndFunction
\end{algorithmic}
\caption{Symmetry-aware $\sigma_{\s{BR}}$ and $\sigma_{\s{B}}$ from \Cref{th:symm-lambda-state-tr-sum,th:symm-lambda-state-sum}.
The $\lambda$ hashtables have a tuple value type $P \times \field N$, where the first component stores a polynomial in the $q_i$, and the second entry is its multiplicity.
Parallelising the inner loops leads to a speedup since the concurrent operations are much less costly than the rest of the loop body, in contrast to the non-symmetric algorithm in \Cref{alg:coherent-information-full-algorithm}.
The group orders necessary for $\#(\Gamma,k\sys,\cdot)$ in \textsc{Multiplicity} are calculated in \tool{nauty}.}\label{alg:symm-lambda}
\end{algorithm}

\section{Numerical Methods}\label{sec:numerical-methods}
As part of this project we developed a suite called \tool{CoffeeCode} (\cite{CoffeeCode19}) that implements \Cref{alg:symm-lambda,alg:canonical-colorings,alg:is-canonical} described in \Cref{sec:symmetries}.
For special cases we replace the coloring iterator \alg{SGSColG} with \alg{VColG}; we found that depending on the automorphism group of the graph one or the other performs faster.
A MATLAB implementation of \Cref{alg:coherent-information-full-algorithm} in \Cref{sec:graph-states} (which is not exploiting graph symmetries) is available at \cite{CImatlab}.

To obtain our main results in \Cref{sec:results} we used the suite \tool{CoffeeCode} in the following way.
Starting from a sparse expression for the $\lambda_U$ as explained in \Cref{sec:symmetric-lambda}, we can substitute the parameters $q_i$ with the channel's noise tuple $\mathbf p$ to obtain a numerical expression for the coherent information.
Of particular interest to us in this context is the question for which noise parameters of the associated Pauli channels the coherent information for a given graph state code is positive---i.e., determining the noise threshold above which the coherent information is zero.

As explained in \Cref{sec:quantum-capacity}, it is known that for specific Pauli channels various codes exhibit superadditivity.
For 1-parameter families $x\mapsto \cN_x$ of channels such as the depolarizing, BB84 or 2-Pauli channel, a special case of superadditivity of coherent information occurs when the \emph{noise threshold}, the largest $x$ such that $Q(\cN_x)>0$, is higher than the \emph{single-letter threshold}, the largest $x$ such that $I_c(\cN_x)>0$.
Naturally, what we mean by ``higher'' in this context depends on the ordering of a family of channels.
For instance, unbiased noise, modeled by the qubit depolarizing channel $\cD_p$ with parameters as given in \Cref{eq:depolarizing-channel}, $\mathbf p=(1-p,p/3,p/3,p/3)$ for some $p\in[0,1]$, exhibits a natural order: a larger $p$ implies a more likely $X, Y$ or $Z$ flip.
Similarly the 2-Pauli channel $\mathbf p=(1-p,p/2,0,p/2)$ or BB84 channel $\mathbf p = ((1-p)^2,p-p^2,p^2,p-p^2)$ are 1-parameter families with a natural ordering for which the definition of threshold makes intuitive sense.

The notion of threshold becomes more involved when looking at the general case of Pauli channels, where we have three independent noise parameters $p_1,p_2$ and $p_3$.
Can we define a meaningful ordering $\mathbf p>\mathbf p'$ for arbitrary Pauli channels?
The most natural choice appears to be an unbiased one which allows a comparison only if the \emph{ratio} of noise types is equivalent: if
\begin{align}\label{eq:x-parametrization}
    \mathbf p_x := \left(1-x, x p_1, x p_2, x p_3\right)
    \quad\text{for}\quad
    \sum\nolimits_i p_i = 1 \text{ and } x \in [0,1],
\end{align}
then we define $\mathbf p_x \geq\mathbf p_y$ if and only if $x \geq y$, and similarly for strict inequality; if the $p_i$ comprising $\mathbf p_x$ and $\mathbf p_y$ are not identical, the channels are incomparable.
This introduces a partial order on the set of all Pauli channels, with the intuitive understanding that a larger $x$ in $\mathbf p_x$ simply means ``more of the same noise''.
Increased thresholds for this parametrization thus only occur within families of Pauli channels $\cN_{\mathbf{p}_x}$ parametrized by $x$, and there is one family for each probability distribution $(p_1,p_2,p_3)$.
Note that the above partial ordering includes the depolarizing channel $x\mapsto (1-x,x/3,x/3,x/3)$ and the 2-Pauli channel $x\mapsto(1-x,x/2,0,x/2)$, but \emph{not} the BB84-channel $x\mapsto ((1-x)^2,x-x^2,x^2,x-x^2)$.

\subsection{Visualizing Thresholds}
\begin{figure}[tb]
    \centering
    \begin{tikzpicture}[scale=0.8]
    \node at (0,0) {\includegraphics[height=8cm]{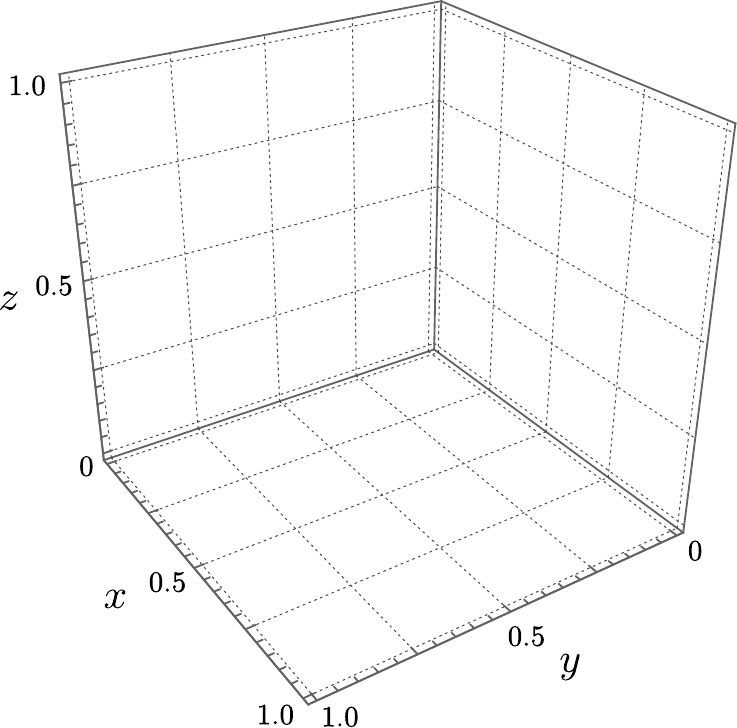}};
    \node[scale=.84,xshift=-43.2,yshift=-8.8] at (0,0) {\includegraphics[height=8cm]{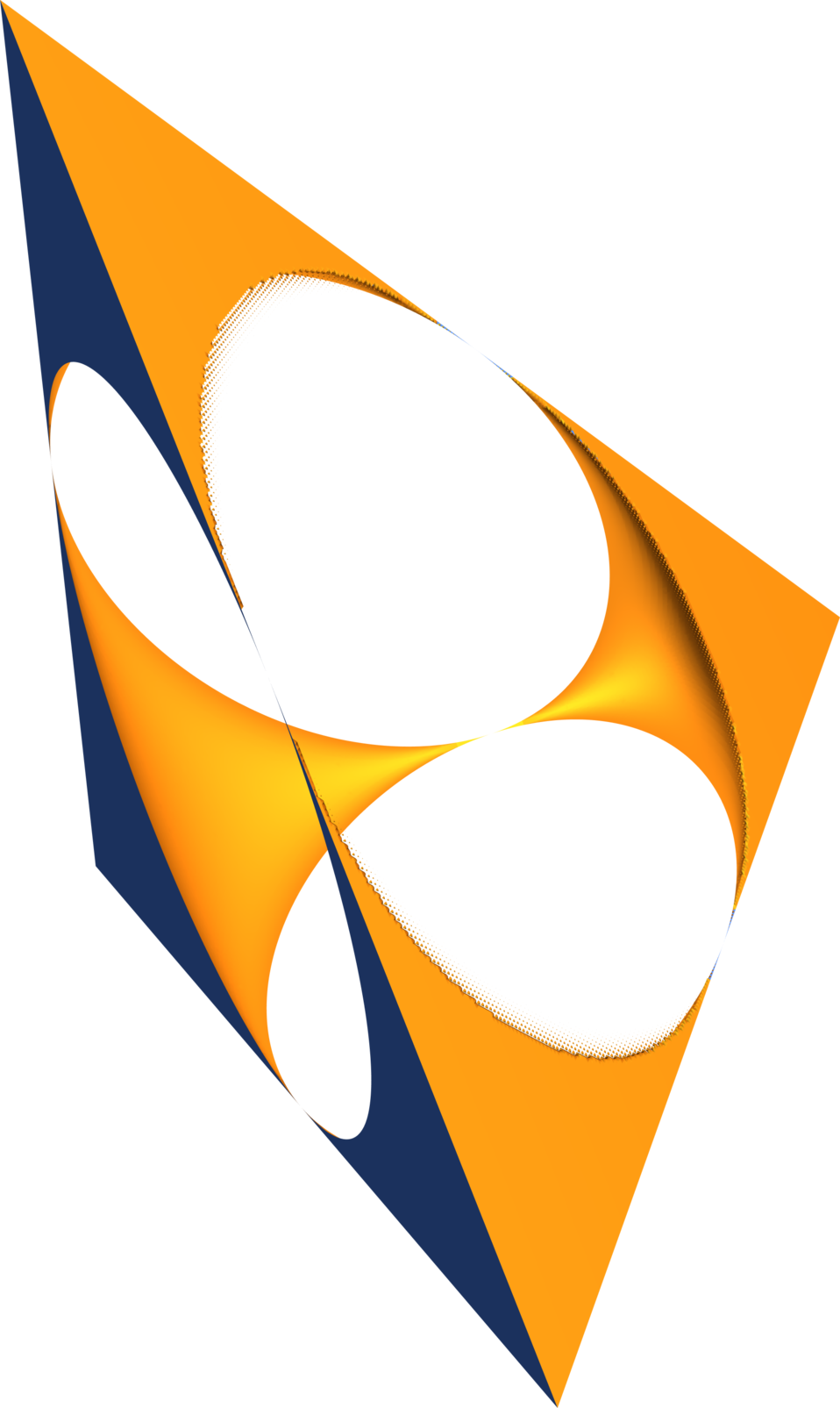}};
    \end{tikzpicture}
    \caption{Set of all non-antidegradable Pauli channels, as derived from \Cref{eq:pauli-antidegradable}.
    The sector close to $p_i=0$ in the lower left corner is equivalent to the three others: If e.g.\ $z$-type noise has $p_3 > 0.5$, i.e.\ $z$ flips occur with more than $50\%$ probability, one can first apply a $z$ flip, mapping that sector back to $p_3 < 0.5$.
    For this reason we restrict the analysis of this paper to the sector around $p_i=0$.
    }
    \label{fig:antidegradability}
\end{figure}

In order to visualize the coherent information threshold for a graph state code $\ket\Gamma$ for all possible Pauli channels, we chose the parametrization given in \Cref{eq:x-parametrization}.
The advantage of this parametrization is that for fixed $p_1,p_2,p_3$ the 1-parameter family $x\mapsto \cN_{\mathbf{p}_x}$ defined in terms of $\mathbf p_x = (1-x, x p_1, x p_2, x p_3)$ has a unique noise level $x_0$ such that the quantum capacity of $\cN_{\mathbf{p}_x}$ vanishes for all $x\geq x_0$, i.e., $Q(\cN_{\mathbf{p}_x}) = 0$ for all $x\geq x_0$.
This is proved in \Cref{lem:x-param-monotonicity} below.
The value $x_0$ thus serves as an \emph{upper bound} on the true noise threshold of the 1-parameter family $x\mapsto \cN_{\mathbf{p}_x}$.
In \Cref{sec:results} we determine \emph{lower bounds} on these noise thresholds.

To state the advertised lemma on the noise threshold upper bound $x_0$, we first recall that every quantum channel $\cN\colon A\to B$ can be written as $\cN(\rho_A) = \tr_E V\rho_{A} V^\dagger$ in terms of an isometry $V\colon \cH_A\to \cH_B\otimes \cH_E$ and an auxiliary Hilbert space $\cH_E$ usually referred to as the environment.
A complementary channel $\cN^c\colon A\to E$ modeling the leakage of information to the environment is defined as $\cN^c(\rho_A) = \tr_B V\rho_{A} V^\dagger$.
A channel $\cN\colon A\to B$ is called \emph{antidegradable}, if there exists another quantum channel $\cA\colon E\to B$ such that $\cN = \cA\circ\cN^c$.
For antidegradable channels $\cN$, we have $I_c(\sigma,\cN)\leq 0$ for any quantum state $\sigma$, and also $Q(\cN)=0$, which we show in \Cref{sec:proof-monotonicity}.

For qubit-qubit quantum channels $\cN$ there is a necessary and sufficient condition for antidegradability \cite{Myhr2009,symmetric2qubit}.
Specialized to Pauli channels $\cN_{\mathbf{p}}$ with $\mathbf{p}=(p_0,p_1,p_2,p_3)$, this condition reads as follows (see e.g.~\cite{PP16}):
\begin{align}
    1 \geq 2\left( p_0^2 + p_1^2 + p_2^2 + p_3^2 \right) - 8\sqrt{p_0p_1p_2p_3}
    \label{eq:pauli-antidegradable}
\end{align}

Checking the inequality \eqref{eq:pauli-antidegradable} in the whole probability simplex $\lbrace (p_1,p_2,p_3)\colon p_i\geq 0, \sum\nolimits_i p_i = 1\rbrace$ yields the region of antidegradable Pauli channels.
Since the quantum capacity is zero for those channels, we focus our analysis on the region of non-antidegradable Pauli channels, depicted in \Cref{fig:antidegradability}.
Evidently, this region consists of four symmetric `sectors', and we focus in the following on the sector closest to the origin.
This sector includes the set of channels $\cN_{\mathbf{p}_x}$ for $x\in[0,1/2]$, where $\mathbf{p}_x = \left(1-x, x p_1, x p_2, x p_3\right)$ is the parametrization of Pauli channels introduced in \cref{eq:x-parametrization}.
This follows since for any non-deterministic probability distribution $(p_1,p_2,p_3)$ the channel $\cN_{\mathbf{p}_{x}}$ becomes antidegradable for some $x<1/2$, which is the content of the following lemma.

\begin{lemma}\label{lem:x-param-monotonicity}
Let $(p_1,p_2,p_3)$ be a fixed probability distribution, and consider the 1-parameter family of channels $x\mapsto \cN_{\mathbf{p}_x}$ where $\mathbf{p}_x = \left(1-x, x p_1, x p_2, x p_3\right)$.
If $(p_1,p_2,p_3)$ is non-deterministic, i.e., $p_i=0$ for at most one $i\in \lbrace 1,2,3\rbrace$, then there exists $0<x_0<1/2$ such that $\cN_{\mathbf{p}_x}$ is antidegradable for all $x\in [x_0,1/2]$, and hence $Q(\cN_{\mathbf{p}_x}) = 0$ for all $x\in [x_0,1/2]$.
If $(p_1,p_2,p_3)$ is deterministic, i.e., $p_i=1$ for exactly one $i\in \lbrace 1,2,3\rbrace$, then $\cN_{\mathbf{p}_x}$ is antidegradable if and only if $x=1/2$.
\end{lemma}

The proof is given in \Cref{sec:proof-monotonicity}. 
The goal of our numerical analysis is to find the largest threshold $x_1$ of a collection of graph states $\ket{\Gamma\tot}$ for a fixed 1-parameter family $\cN_{\mathbf{p}_x}$ such that $x_1$ is as close as possible to the antidegradability point $x_0$ found in \Cref{lem:x-param-monotonicity}.
We first note that the function $x\mapsto I_c(\ket{\Gamma\tot}, \cN_{\mathbf{p}_x}^{\ox k\sys})$ is continuous on $[0,1]$ \cite{leung2009continuity}.
Moreover, for the quantum codes based on graph states studied in this paper, this function is typically monotonically decreasing for $x\in(0,1/2)$.\footnote{All quantum codes in our numerical analysis exhibited this strictly decreasing behavior; however, even in the absence of strict monotonicity of $x\mapsto I_c(\ket{\Gamma\tot}, \cN_{\mathbf{p}_x}^{\ox k\sys})$ our numerical method always finds \emph{some} non-trivial noise threshold.}
We can hence borrow a technique from raymarching computer graphic engines, which essentially performs binary search to zone in on a root of a function within the interval parametrizing the ray.
Instead of a cost $\propto 1/\epsilon$ in the precision $\epsilon$ of the threshold this approach reduces the number of required steps to $\propto \log(1/\epsilon)$.
For our purposes we chose $\epsilon=2^{-20}$, which yields approximately $7.5$ decimal digits of precision;
higher accuracy is readily reached.

To cover all Pauli channels, we choose a high resolution covering of their parameter space derived from spherical coordinates,
\[
    (p_1,p_2,p_3) \in \{ (\sin\theta \cos\phi, \sin\theta \sin\phi, \cos\theta)\colon \theta,\phi \in 0,\delta,2\delta,\ldots,\pi/2 \}.
\]
For $\delta=2^{-10}\pi$ this yields a net of $512\times512$ rays of length $1/2$ starting at the origin, each for a separate channel family.\footnote{We remark that this parametrization distributes overproportionally many points towards the north pole of the parameter space where $\cos\theta$ is small, i.e.\ small $Z$ error.}
The zero threshold of the coherent information thus yields a surface in the Pauli channel simplex, as e.g.\ depicted in \Cref{fig:surface-plot-expl}.
In order to visualize superadditivity, we additionally color each position on the surface depending on how far away the point lies from the single-letter threshold along the same ray.
For a detailed explanation of the surface plot types see \Cref{fig:surface-plot-expl,fig:all-rep-codes-swatch}.

The single-letter threshold for a general Pauli channel $\cN_{\mathbf p}$ with $\mathbf{p}= (p_0,p_1,p_2,p_3)$ is determined by the so-called \emph{hashing bound} \cite{BDSW96}, which gives the optimal single-letter coherent information $I_c(\cN_{\mathbf p})$ of a Pauli channel:
\begin{align}
    I_c(\cN_{\mathbf p}) = 1 - H(\mathbf{p}),\label{eq:hashing-bound}
\end{align}
where $H(\mathbf p) = -\sum_i p_i \log p_i$ is the Shannon entropy of $\mathbf p$.
The coherent information in \eqref{eq:hashing-bound} is achieved by the graph state $\ket\Gamma$ where $\Gamma= \raisebox{0pt}{\begin{tikzpicture}[scale=0.5]\draw[fill] (1,0) circle(1ex); \draw (0,0) -- (1,0); \draw[fill=white] (0,0) circle(1ex); \end{tikzpicture}}$ and $\cN_{\mathbf p}$ acts on the solid vertex.
For the $x$-parametrization $x\mapsto \mathbf{p}_x = \left(1-x, x p_1, x p_2, x p_3\right)$ with a fixed probability distribution $(p_1,p_2,p_3)$ introduced in \Cref{eq:x-parametrization}, the single-letter threshold of $\cN_{\mathbf{p}_x}$ is determined by the (unique) root of $$I_c(\cN_{\mathbf p_x})=1-H(\mathbf{p}_x) = 1-h(x)-x H(( p_1,p_2,p_3)),$$ where $h(x) = -x\log x - (1-x)\log (1-x)$ denotes the binary entropy.

\begin{figure}[t]
    \centering
    \surfacePlot{10cm}{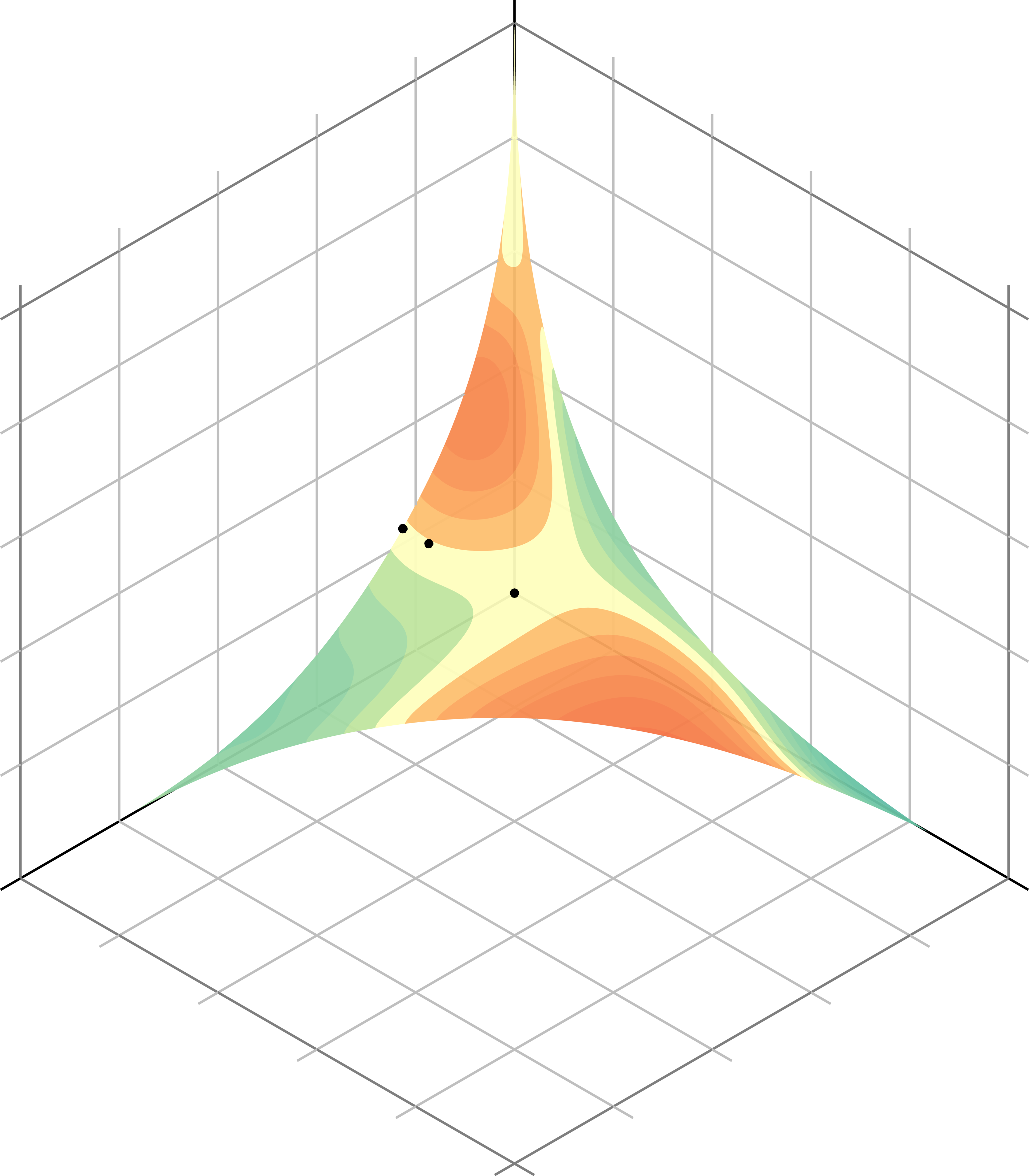}[1][%
        \node (dep) at (0,-.01) {};
        \node (bb) at (-.083,.0405) {};
        \node (pp) at (-.1085,.063) {};
        \node[scale=.5, fill=black, text=white, rectangle callout, callout absolute pointer={(dep)}, below of=dep] {dep};
        \node[scale=.5, fill=black, text=white, rectangle callout, callout absolute pointer={(bb)}, below of=bb] {BB84};
        \node[scale=.5, fill=black, text=white, rectangle callout, callout absolute pointer={(pp)}, above of=pp] {2P};
        \node[] at (-.6,0) {\includegraphics[width=.52cm]{graphs/colorbar.png}};
        \foreach \i/\l in {-4/-0.1,-3/-0.075,-2/-0.05,-1/-0.25,0/0,1/0.025,2/0.05,3/0.075,4/0.1}{
            \draw (-0.62,\i*.552/4) -- (-0.65,\i*.552/4) node[left] {$\l$};
        }
        \node[right] at (.5,-.3) {$0$};
        \node[right] at (.5,.26) {$0.5$};
        \node[left] at (-.5,-.3) {$0$};
        \node[left] at (-.02,-.58) {$0.5$};
        \node[right] at (.02,-.58) {$0.5$};
    ][]
    \caption{Surface plot of the tree graph code \${T$_{15}$} vs.\ the \${1-in-5} repetition code; the \${T$_{15}$}-code is shown in \Cref{fig:all-lvl-legend}, and the \${1-in-5}-code is shown in \Cref{eq:1-in-5}.
    Labeled are the points in the space of all Pauli channels where the families of depolarizing (dep), BB84 and 2-Pauli (2P) channels cut the threshold surface of the \${T$_{15}$} code.
    The color indicates the superadditivity of the code family along the $x$-parametrized ray starting at the origin according to \Cref{eq:x-parametrization}.
    The color scale for all plots of this type is identical throughout the paper, as well as the axes' extent; as such, to simplify the plots, we will generally leave out all but the axes labels.}
    \label{fig:surface-plot-expl}
\end{figure}

\section{Results}\label{sec:results}

In this section we present our main results of the paper.
Using the parametrization $x\mapsto (1-x,xp_1,xp_2,xp_3)$ for some probability distribution $(p_1,p_2,p_3)$ detailed in \Cref{sec:numerical-methods}, we investigate error thresholds for the quantum capacity of Pauli channels in the entire Pauli channel simplex.
Following the discussion in \Cref{sec:quantum-capacity}, this is achieved by choosing (families of) test input states $\psi_{RA}$ on $k\sys$ input qubits and determining the supremum over all $x$ for which $I_c(\psi,\cN_{\mathbf p_x}^{\ox k\sys}) > 0$.

In the following subsections we investigate three families of code states defined in terms of graph states.
The first one is the family of repetition codes (or GHZ states) $|0\rangle^{\ox n} + |1\rangle^{\ox n}$.
Despite being rather simplistic error-correcting codes, repetition codes have long been known to exhibit superadditivity of coherent information; in particular, they increase the error threshold for Pauli channels such as the depolarizing channel \cite{SS96,DSS98,SS07,FW08,BL18} or the BB84 (or independent bit and phase flip) channel \cite{SS07,FW08}, and non-Pauli channels such as
the dephrasure channel \cite{LLS18}.
The second family of codes we investigate comprises concatenated repetition codes, or \emph{cat codes}, which are obtained from concatenating a $Z$-type repetition code with an $X$-type repetition code.
With the right choice of repetition code lengths, these codes have been shown to increase the error thresholds of depolarizing and BB84 channel even further \cite{DSS98,SS07,FW08}.
Both repetition codes and cat codes have large symmetry groups, enabling us to obtain substantial speed-ups in computing their coherent information using \Cref{alg:is-canonical,alg:canonical-colorings,alg:symm-lambda} described in \Cref{sec:symmetries}.
Moreover, our analysis of these code families covers the \emph{entire} Pauli channel simplex, in contrast to previous works that only dealt with particular channels.
Finally, we identify a new family of codes based on tree graphs.
Their large symmetry groups again provide a speed-up via our algorithm, and the resulting error thresholds are better than the ones obtained from repetition and cat codes in large regions of the Pauli channel simplex.
In addition to determining the error thresholds of the above code families, we analyze their rate superadditivity below the threshold as well.

In \Cref{sec:exhaustive} we also carry out an exhaustive search on all graphs with up to 6 system vertices and 4 environment vertices.
However, due to the number of possible graph configurations this exhaustive search quickly becomes infeasible both in terms of runtime and memory.

\subsection{Repetition Codes}\label{sec:rep-codes}
\begin{figure}[t!]
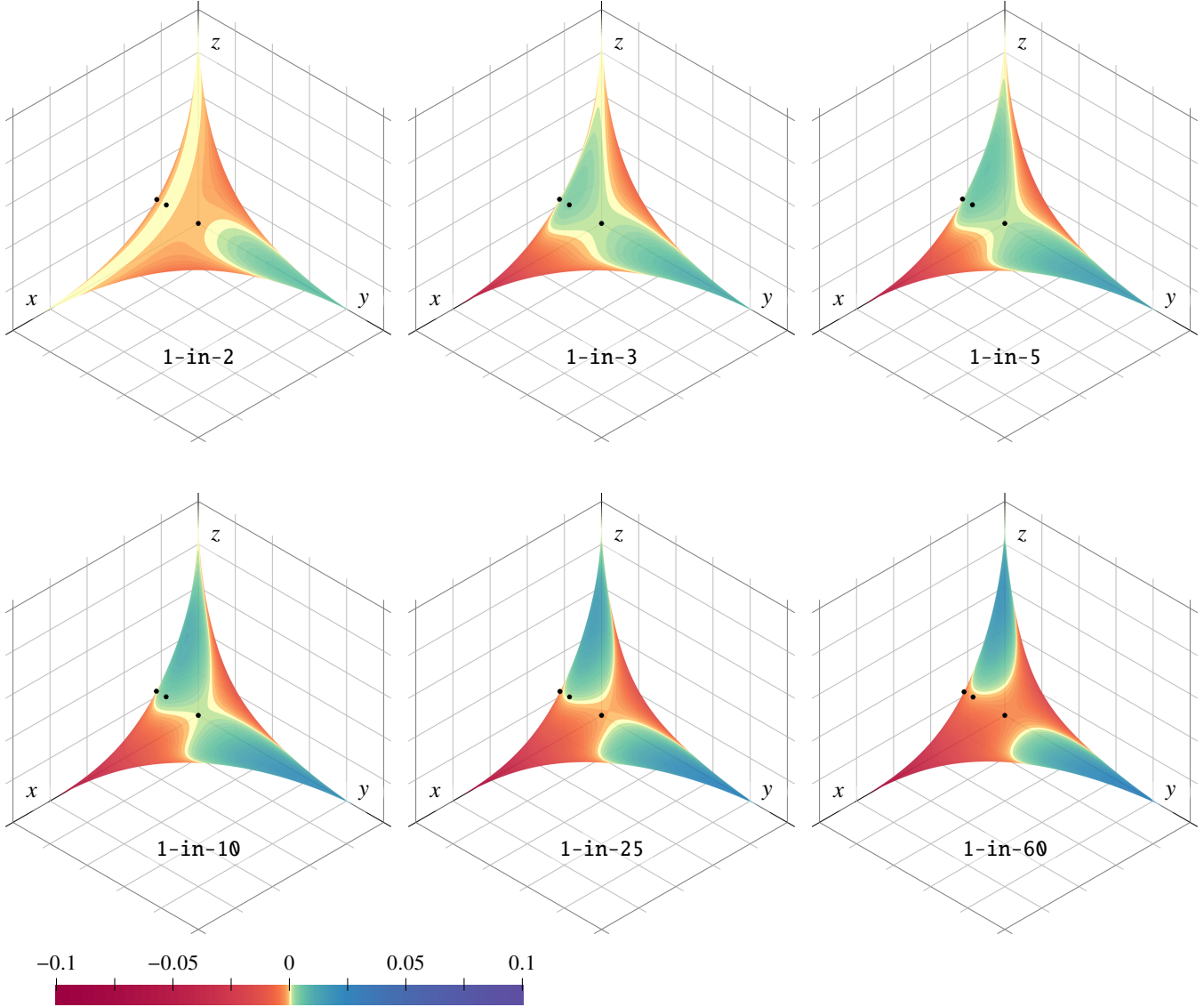

    \hspace*{-0.9cm}
    \begin{minipage}{18cm}
    \centering
    \subplotA{1-in-2}
    \subplotA{1-in-3}
    \subplotA{1-in-5}
    
    \subplotA{1-in-10}
    \subplotA{1-in-25}
    \subplotA{1-in-60}
    \end{minipage}
    \smallColorScale
    \caption{Thresholds for repetition codes; plotted are the codes $1$-in-$k\sys$ for $k\sys\in\{2, 3, 5, 10, 25, 60\}$ from top left to bottom right.
    The axes' extent is the interval $[0,1/2]$, and the three black dots (from center to left) are the loci of depolarizing, BB84, and 2-Pauli channel families, respectively (for details see \Cref{fig:surface-plot-expl}).
    The color scale indicates the distance between noise threshold and single-letter threshold.}
    \label{fig:thresholds-rep}
\end{figure}

\begin{figure}
    \subplotCC{all-rep.vs.single}[all \${1-in-$k$} vs.\ single-letter]
    \hfill
    \subplotCC{all-rep.vs.5-in-5}[all \${1-in-$k$} vs.\ \${5-in-5}]
    \smallColorScale
    \caption{All repetition codes up to $k\sys \le 60$.
    The axes' extent is the interval $[0,1/2]$, and the three black dots (from center to left) are the loci of depolarizing, BB84, and 2-Pauli channel families, respectively---for details see \Cref{fig:surface-plot-expl}.
    Shown is the relative threshold of the best code in the code family vs.\ the single-letter threshold on the left, and vs.\ the \${5-in-5} cat code (defined in \Cref{eq:5-in-5}) on the right.
    In both cases, the color scale indicates the distance between the respective thresholds.}
    \label{fig:all-rep-codes}
\end{figure}

\begin{figure}
    \surfacePlot{10cm}{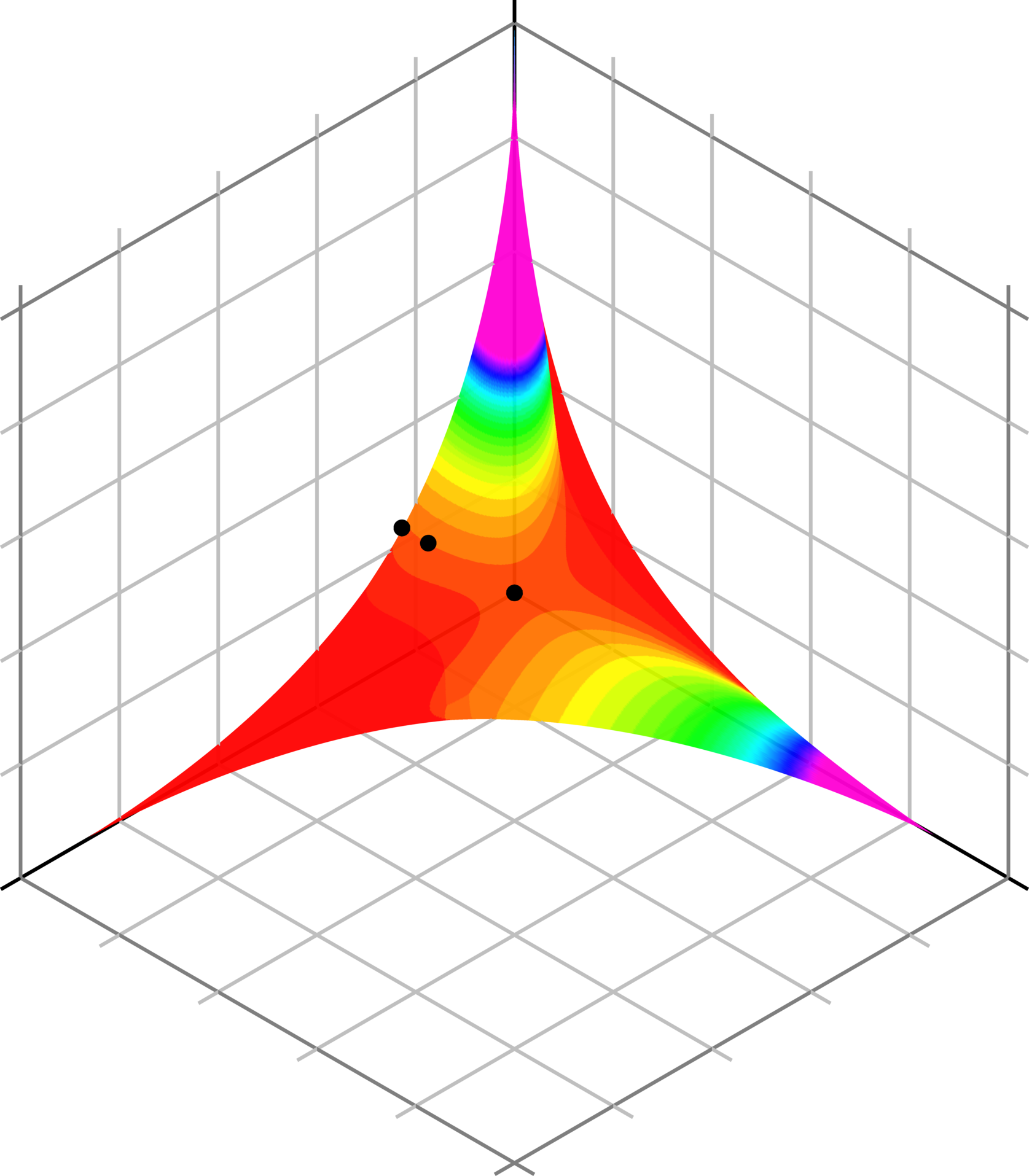}[1][%
        \node (dep) at (0,-.01) {};
        \node (bb) at (-.083,.0405) {};
        \node (pp) at (-.1085,.063) {};
        \node[scale=.5, fill=black, text=white, rectangle callout, callout absolute pointer={(dep)}, below of=dep] {dep};
        \node[scale=.5, fill=black, text=white, rectangle callout, callout absolute pointer={(bb)}, below of=bb] {BB84};
        \node[scale=.5, fill=black, text=white, rectangle callout, callout absolute pointer={(pp)}, above of=pp] {2P};
        \begin{scope}[scale=0.95]
        \foreach \y/\r/\g/\b\l in {2/255/0/3/1-in-2,3/255/18/0/,4/255/45/0/,5/255/66/0/1-in-5,6/255/88/0/,7/255/110/0/,8/255/137/0/,9/255/158/0/,10/255/180/0/1-in-10,11/255/201/0/,12/255/228/0/,13/255/250/0/,14/237/255/0/,15/214/255/0/,16/189/255/0/,17/167/255/0/,18/145/255/0/,19/123/255/0/,20/97/255/0/1-in-20,21/75/255/0/,22/54/255/0/,23/27/255/0/,24/5/255/0/,25/0/255/16/,26/0/255/37/,27/0/255/64/,28/0/255/86/,29/0/255/107/,30/0/255/129/1-in-30,31/0/255/161/,32/0/255/182/,33/0/255/204/,34/0/255/231/,35/0/255/252/,36/0/235/255/,37/0/213/255/,38/0/186/255/,39/0/164/255/,40/0/143/255/1-in-40,41/0/121/255/,42/0/94/255/,43/0/72/255/,44/0/50/255/,45/0/29/255/,46/0/1/255/,47/19/0/255/,48/41/0/255/,49/66/0/255/,50/90/0/255/1-in-50,51/112/0/255/,52/133/0/255/,53/161/0/255/,54/182/0/255/,55/204/0/255/,56/226/0/255/,57/253/0/255/,58/255/0/234/,59/255/0/212/,60/255/0/191/1-in-60}{%
            \definecolor{swatchcolor}{RGB}{\r,\g,\b} 
            \draw[fill=swatchcolor,draw=white] (-.65,\y*.02-.65) rectangle (-.72,\y*.02+.02-.65);
            \ifx\l\empty
            \else
            \draw[black] (-.7,\y*0.02+0.01-.65) -- (-.76,\y*0.02+0.01-.65) node[left] {\${\l}};
            \fi
        }
        \end{scope}
        \node[right] at (.5,-.3) {$0$};
        \node[right] at (.5,.26) {$0.5$};
        \node[left] at (-.5,-.3) {$0$};
        \node[left] at (-.02,-.58) {$0.5$};
        \node[right] at (.02,-.58) {$0.5$};
    ]
    \caption{Continuation of \Cref{fig:all-rep-codes}.
    The plot shows the same surface as before, only this time indexed by colors referring to the highest CI threshold repetition code, starting at red for \${1-in-2} up to magenta for \${1-in-60}.}
    \label{fig:all-rep-codes-swatch}
\end{figure}

Repetition codes are the most simple type of error correcting code.
Represented as a graph state, they correspond to star graphs with $k\tot$ vertices; one ray of the star is the environment vertex.
The \${1-in-5} code in \Cref{eq:1-in-5} is an example of a repetition code.

It is immediately obvious that star graphs have a large symmetry group.
If $\Gamma_n$ denotes a \${1-in-$n$} repetition code, and using the same vertex enumeration as in \Cref{eq:1-in-5}---i.e.\ such that the root has index $1$, and the environment vertex has index $n$---its automorphism group is given by $\Aut(\Gamma_n) = S_{\{2,\ldots,n-1\}}$, and hence $|\Aut(\Gamma_n)| = (n-2)!$.
It is this fact together with the very simple orderings of the canonical image and coloring algorithms in \Cref{sec:symmetric-lambda} that make repetition codes particularly amenable to our numerical methods, allowing us to evaluate codes on $\approx100$ system qubits in total without much difficulty.

In \Cref{fig:thresholds-rep}, we plot the coherent information threshold for various repetition codes.
Visible is how for growing $k\sys$ the code shifts from superadditivity in the $Y$ and $Z$-directions to superadditivity in the $X$ and $Z$-directions.
However, it is known that for unbiased noise---i.e.\ a depolarizing channel---the \${1-in-5} code has the optimal threshold under all repetition codes.
We also point out that while the smaller codes (apart from the $1$-in-$2$ code) are not symmetric, for growing $k\sys$ the code family has an emerging approximate mirror symmetry across the plane dividing the $X$ and $Z$ axes.

Treating all repetition codes together as one ``family'' of codes, we compare the largest thresholds of this code family with the single-letter thresholds and a \${5-in-5} code (see \Cref{sec:cat-codes}) in \Cref{fig:all-rep-codes}.
The left-hand plot of \Cref{fig:all-rep-codes} shows that repetition codes have better thresholds than the single-letter threshold for $Z$-and $Y$-biased noise and unbiased noise, but not for $X$-biased noise.
Moreover, the right-hand plot shows that repetition codes provide advantage over the \${5-in-5} code for all biased noise types, but not for the region around the depolarizing channel (unbiased noise), as observed in \cite{DSS98}.
It is also interesting to see which repetition code is optimal for which type of noise shifts when the noise becomes more and more biased to one type, see \Cref{fig:all-rep-codes-swatch}.

We note here that the apparent difference between the three types of biased noise in \Cref{fig:thresholds-rep,fig:all-rep-codes,fig:all-rep-codes-swatch} (as well as all plots in the subsequent sections) is merely due to the particular basis choice in the definition of graph states (see \Cref{sec:graph-states-definition}).
To see this, note that for a given graph $\Gamma=(V,E)$ the stabilizer generators $S_i$ of the associated graph state can be chosen as
\begin{align*}
    S_i = R^i T^{N_i}
\end{align*}
for \emph{any} choice of $R,T\in\lbrace X,Y,Z\rbrace$ with $R\neq T$; different choices for $R$ and $T$ yield different relabelings of the axes in e.g.~\Cref{fig:thresholds-rep}.
In this paper, we use the canonical choices $R=X$ and $T=Z$.

\subsection{Concatenated Codes}\label{sec:cat-codes}
\begin{figure}
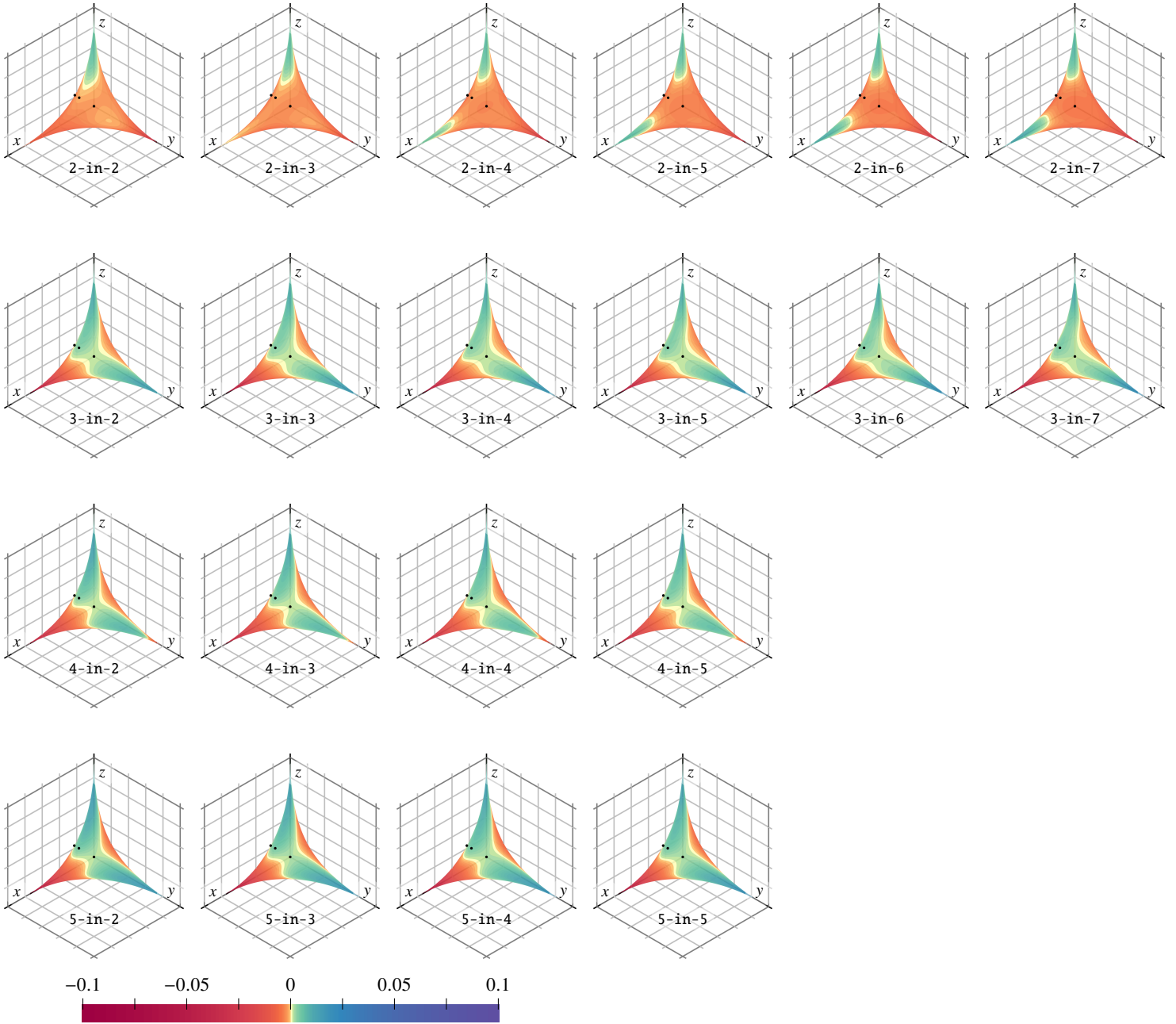

    \hspace{-1.45cm}
    \begin{minipage}{20cm}
    \subplotB{2-in-2}
    \subplotB{2-in-3}
    \subplotB{2-in-4}
    \subplotB{2-in-5}
    \subplotB{2-in-6}
    \subplotB{2-in-7}
    
    \subplotB{3-in-2}
    \subplotB{3-in-3}
    \subplotB{3-in-4}
    \subplotB{3-in-5}
    \subplotB{3-in-6}
    \subplotB{3-in-7}
    
    \subplotB{4-in-2}
    \subplotB{4-in-3}
    \subplotB{4-in-4}
    \subplotB{4-in-5}
    \\
    \subplotB{5-in-2}
    \subplotB{5-in-3}
    \subplotB{5-in-4}
    \subplotB{5-in-5}
    \end{minipage}
    \smallColorScale
    \caption{Cat codes $4\times 7$ table.
    The axes' extent is the interval $[0,1/2]$, and the three black dots (from center to left) are the loci of depolarizing, BB84, and 2-Pauli channel families, respectively---for more details see \Cref{fig:surface-plot-expl}.
    The color scale indicates the distance between noise threshold and single-letter threshold.}
    \label{fig:cat codes}
\end{figure}

Concatenated codes, or \emph{cat codes}, arise from concatenating a $Z$-type repetition code $\ket 0^{\ox n_1} + \ket 1^{\ox n_1}$ with an $X$-type repetition code $\ket +^{\ox n_2} + \ket -^{\ox n_2}$, where $\ket \pm = (\ket 0 \pm \ket 1)/\sqrt{2}$.
We call such a code an \${$n_1$-in-$n_2$} code.
The well-known 9-qubit Shor code \cite{Shor95} is a \${3-in-3} code.
Cat codes are stabilizer codes and hence local-Clifford-equivalent to graph states; in our paper, we simply refer to any state of this equivalence class as cat code.

Before giving the construction for the underlying graph of a cat code, we briefly review the procedure of concatenation of stabilizer codes.
Let $[n_1,k]$ and $[n_2,1]$ be qubit stabilizer codes with generators $S=\lbrace s_1,\dots,s_{n_1-k}\rbrace$ and $T=\lbrace t_1,\dots,t_{n_2-1}\rbrace$, respectively.
We call $S$ and $T$ the \emph{outer} and \emph{inner} code, respectively.
Furthermore, denote by $\bX$ and $\barZ$ the logical operators of the inner code $T$, and denote by $\lbrace |\bz\rangle , |\bo\rangle\rbrace$ the computational basis of the encoded qubit, i.e., the eigenstates of $\barZ$ corresponding to $+1$ and $-1$, respectively.
Then the concatenated stabilizer code $[n_1n_2,k]$ is a stabilizer code whose generators are $n_1$ copies of the $n_2-1$ generators of the inner code acting on the physical qubits within blocks of size $n_2$, plus the $n_1-k$ generators of the outer code acting on the \emph{logical} qubit of the inner code, for a total of $n_1(n_2-1) + n_1 - k = n_1n_2 - k$ generators.
The latter $n_1-k$ stabilizers are obtained by the replacements $X\leftarrow\bX$ and $Z\leftarrow\barZ$ in every $s\in S$.

As an example, consider Shor's 9-qubit code, which is a $[3,1]$ $Z$-repetition code with stabilizers $Z^1Z^2, Z^1Z^3$ and logical operators $\bX=X^1X^2X^3, \barZ = Z^1Z^2Z^3$, concatenated with a $[3,1]$ $X$-repetition code with stabilizers $X^1X^2, X^1X^3$ and logical operators $\bX=X^1X^2X^3, \barZ = Z^1Z^2Z^3$.
This results in a $[9,1]$ code by the above discussion, with inner and outer code stabilizers 
\begin{align}
    \begin{aligned}
    Z^1Z^2,\quad Z^1Z^3,\quad Z^4Z^5,\quad Z^4 Z^6,\quad Z^7 Z^8,\quad Z^7 Z^9\\
    X^1X^2X^3X^4X^5X^6,\quad X^1X^2X^3 X^7 X^8 X^9
    \end{aligned}
    \label{eq:shor-code-stabilizers}
\end{align} 
and logical operators
\begin{align}
    \bX &= X^1X^2X^3X^4X^5X^6 X^7 X^8 X^9 & \barZ &= Z^1 Z^2 Z^3 Z^4 Z^5 Z^6 Z^7 Z^8 Z^9.
    \label{eq:shor-code-logical-ops}
\end{align}
To use the Shor code as an input state to $k\sys=9$ copies of a Pauli channel, we encode it in half of a maximally entangled state $\ket\phi = \frac{1}{\sqrt{2}}(\ket 0_R \ket \bz_A + \ket 1_R \ket \bo_A$, where $\ket\bz_A$ and $\ket\bo_A$ are the $\pm 1$ eigenstates of the logical $\barZ$ in \eqref{eq:shor-code-logical-ops}, respectively.
The state $\ket\phi$ is again a stabilizer state with stabilizer generators $X^1 \bX^l$ and $Z^1 \barZ^l$ (where we denote the logical qubit encoded in the 9-qubit code by $l$).
Hence, we add the all $X$- and all $Z$-stabilizers (defined on $10$ qubits) to the list in \eqref{eq:shor-code-stabilizers} and turn this stabilizer state into an LU-equivalent graph state using the procedure outlined in \Cref{app:LU-equivalence}.
This yields the graph $\Gamma_{\mathrm{Shor}}$ together with its adjacency matrix (where we replaced $0$ with $\cdot$ for readability) for the graph state LU-equivalent to the Shor code:
\begin{align}
\raisebox{-1.61cm}{\includegraphics[height=3.5cm]{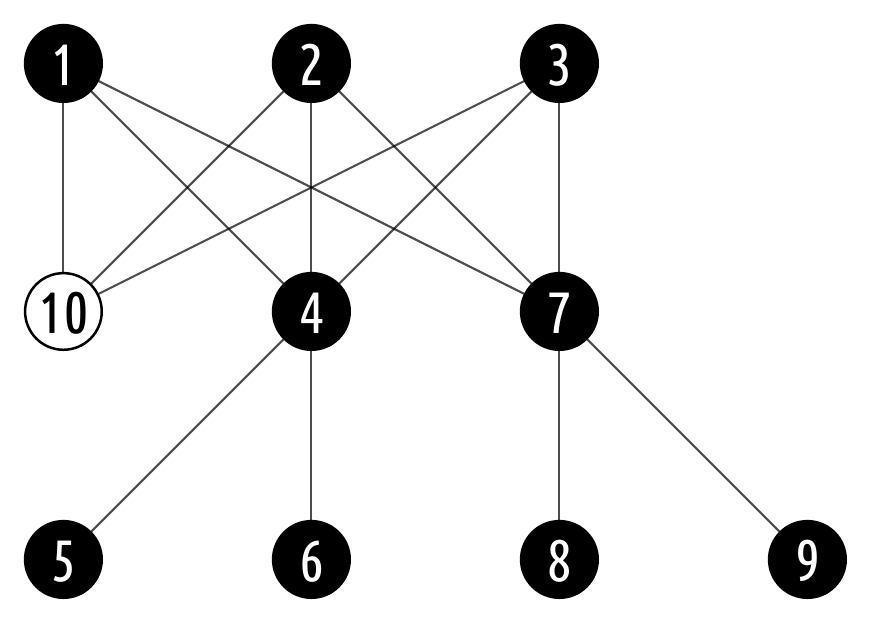}} \!\!\!\!\!\!\!\! = 
    \Gamma_{\mathrm{Shor}} = \begin{pmatrix}
 \cdot & \cdot & \cdot & 1 & \cdot & \cdot & 1 & \cdot & \cdot & 1 \\
 \cdot & \cdot & \cdot & 1 & \cdot & \cdot & 1 & \cdot & \cdot & 1 \\
 \cdot & \cdot & \cdot & 1 & \cdot & \cdot & 1 & \cdot & \cdot & 1 \\
 1 & 1 & 1 & \cdot & 1 & 1 & \cdot & \cdot & \cdot & \cdot \\
 \cdot & \cdot & \cdot & 1 & \cdot & \cdot & \cdot & \cdot & \cdot & \cdot \\
 \cdot & \cdot & \cdot & 1 & \cdot & \cdot & \cdot & \cdot & \cdot & \cdot \\
 1 & 1 & 1 & \cdot & \cdot & \cdot & \cdot & 1 & 1 & \cdot \\
 \cdot & \cdot & \cdot & \cdot & \cdot & \cdot & 1 & \cdot & \cdot & \cdot \\
 \cdot & \cdot & \cdot & \cdot & \cdot & \cdot & 1 & \cdot & \cdot & \cdot \\
 1 & 1 & 1 & \cdot & \cdot & \cdot & \cdot & \cdot & \cdot & \cdot
    \end{pmatrix}
    \label{eq:shor-code}
\end{align}

The procedure above can be generalized to arbitrary \${$n_1$-in-$n_2$} concatenated (``cat'') codes as follows.
We take a single environment vertex $v\env$ and disjoint sets of system vertices $A$ with $|A|=n_1$, $B$ with $|B|=n_2-1$, and $C^i$ for $i=1,\ldots,n_2-1$ with $|C^i|=n_1-1$.
We fully connect $\{v\env\}$ with $A$, and $A$ with $B$, such that the vertices $\{v\env\} \cup A$ form a complete bipartite graph with $B$. Finally, we fully connect the $i$\textsuperscript{th} vertex $b^i$ in $B$ with $C^i$.
Overall the graph has $n_1+(n_2-1)+(n_2-1)(n_1-1)=n_1n_2$ system vertices, and a single environment vertex, $v\env$.
For $n_1=1$ this construction yields a star graph on $n_2+1$ vertices, and hence a \${1-in-$n_2$}-cat code is a repetition code as defined in \Cref{sec:rep-codes}.

For the Shor code in \Cref{eq:shor-code} above, a \${3-in-3}-cat code, we have $A = \lbrace 1,2,3\rbrace$, $B = \lbrace 4,7\rbrace$, and $C^1 = \lbrace 5,6\rbrace$, $C^2= \lbrace 8,9\rbrace$.
We give a series of other examples: the \${2-in-2}, \${2-in-5}, \${5-in-2} and \${3-in-4}-cat codes have the graphs
\[
    \includegraphics[height=1.3cm]{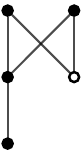}
    \quad
    \includegraphics[height=1.3cm]{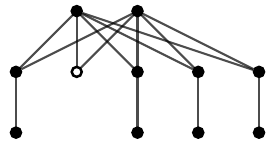}
    \quad
    \includegraphics[height=1.3cm]{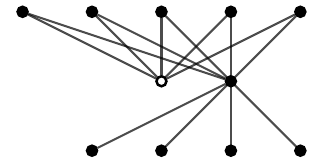}
    \quad
    \includegraphics[height=1.3cm]{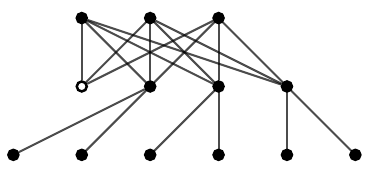}.
\]
Larger codes follow the same pattern; the \${5-in-5} code, for instance, has $25$ system vertices, and its graph is
\begin{align}
    \includegraphics[height=1.5cm]{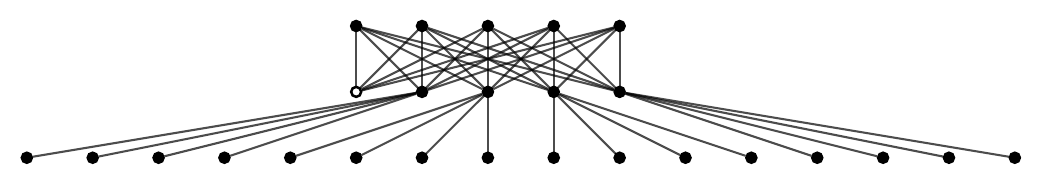}.
    \label{eq:5-in-5}
\end{align}

It is evident that the symmetry group of cat codes factors in a clean fashion with respect to these vertex subsets: if $\Gamma_{n_1,n_2}$ denotes the graph, we have
\[
\Aut(\Gamma_{n_1,n_2}) = S_{A} \times \left(S_{M^i} \cup \prod_{i=1}^{n_2-1} S_{C^i} \right),
\]
where $S_{M^i}$ denotes the symmetric group on the vertex set of the branches between $B$ and the $C^i$, i.e.\ $M^i := \{ \{ b^i \} \cup C^i : i = 1, \ldots, n_2-1 \}$.
A straightforward calculation thus gives the size of the automorphism group as $|\Aut(\Gamma_{n_1,n_2})| = n_1! (n_2-1)! (n_1-1)!^{(n_2-1)}$.

\Cref{fig:cat codes} shows a $4\times 7$-matrix plotting the threshold surfaces of \${$n_1$-in-$n_2$} cat codes for $2\leq n_1\leq 5$ (along the rows) and $2\leq n_2\leq 7$ (along the columns).
Note that \${1-in-$n_2$} cat codes are in fact repetition codes, which we discussed in \Cref{sec:rep-codes}.
It is evident from \Cref{fig:cat codes} that \${2-in-$n_2$} cat codes provide no advantage over the single-letter optimal code except for heavily $Z$-biased noise and $X$-biased noise (for $n_2\geq 4$).
Interestingly, the advantage for $X$-biased noise disappears for $n_1\geq 3$, and the increased threshold region extends from $Z$-biased noise over the unbiased center (corresponding to the depolarizing channel) to the $Y$-biased noise corner.
In particular, the cat codes with $n_1=3,5$ provide substantial threshold increases.
For unbiased noise (and the region around it), the increase peaks (in our analysis) at the \${5-in-5} cat code, which provides the optimal threshold for depolarizing noise up to $k\sys = 25$ system vertices (\cite{DSS98}; see also \Cref{tab:all-lvl-dep}). 

\subsection{Tree Codes}\label{sec:tree-codes}
\begin{table}[t]
    \centering
    \begin{tabular}{r l lll}
        \toprule
        \multirow{2}{*}{$k\sys$} & \multirow{2}{*}{code} & \multicolumn{3}{c}{threshold $p$ for} \\
        & & $\mathbf{p}_{\mathrm{dep}}$ & $\mathbf{p}_{\mathrm{two}}$ & $\mathbf{p}_{\mathrm{BB84}}$ \\
         \midrule
         1 & \${single-letter} & \textbf{0.189\,289\,69} & \textbf{0.227\,092\,195} & \textbf{0.110\,027\,864} \\
         5 & \${1-in-5} & \textbf{0.190\,356\,09} & 0.226\,678\,536\,079 & \textbf{0.112\,104\,217\,521}\\
         13 & \${T$_{13}$}  & \textbf{0.190\,402\,60} & 0.226\,568\,351\,666 & 0.112\,019\,072\,317 \\
         15 & \${T$_{15}$}  &  \textbf{0.190\,433\,58} & 0.226\,519\,095\,093 & 0.111\,945\,765\,086 \\
         16 & \${T$_{16}$}  & \textbf{0.190\,455\,95} & 0.226\,649\,487\,254 & 0.112\,065\,508\,789 \\
          & \${T$_{17}$}  & 0.190\,421\,11 & 0.226\,387\,237\,914 & 0.111\,813\,860\,915 \\
         18 & \${T$_{18}$}  & \textbf{0.190\,499\,74} & 0.226\,623\,046\,487 & 0.112\,011\,792\,424 \\
         & \${T$_{19}$}  & 0.190\,494\,51 & 0.226\,710\,682\,919 & 0.112\,101\,135\,434 \\
         21 & \${T$_{21}$}  & \textbf{0.190\,549\,41} & 0.226\,705\,344\,923 & 0.112\,063\,333\,384 \\
         25 & \${5-in-5}  & \textbf{0.190\,561\,422\,871} & 0.226\,836\,275\,260 & \textbf{0.112\,202\,204\,461} \\
         30 & \${5-in-6}  & \textbf{0.190\,597\,183\,933} & & \\
         \bottomrule
    \end{tabular}
    \caption{A selection of cat and $2$-level codes and their thresholds for depolarizing noise (third column, $\mathbf{p}_{\mathrm{dep}} = (1-p,p/3,p/3,p/3)$), 2-Pauli noise (fourth column, $\mathbf{p}_{\mathrm{two}} = (1-p,p/2,0,p/2)$), and the BB84 channel (fifth column, $\mathbf{p}_{\mathrm{BB84}} = ((1-p)^2,p-p^2, p^2, p-p^2)$). The naming for the $2$-level codes is given in \Cref{fig:all-lvl-legend}. In bold are the best known thresholds for any code with $\le k\sys$ system vertices.
    The single-letter, \${1-in-5}, and \${5-in-5} thresholds are confirmed in other works (see, e.g., \cite{DSS98,FW08}).}
    \label{tab:all-lvl-dep}
\end{table}

\begin{figure}
    \hspace*{-0.9cm}
    \begin{minipage}{18cm}
    \centering
    \subplotD[T$_{13}$]{T13}
    \subplotD[T$_{15}$]{T15}
    \subplotD[T$_{16}$]{T16}
    \subplotD[T$_{17}$]{T17}
    \subplotD[T$_{18}$]{T18a}
    \subplotD[T$_{19}$]{T19a}
    \subplotD[T$_{21}$]{T21}
    
    \subplotAA{all-lvl.vs.single}[all 2-level vs.\ single-letter]
    \subplotAA{all-lvl.vs.5-in-5}[all 2-level vs.\ \${5-in-5}]
    \subplotAA{all-lvl.swatch}
    \end{minipage}
    \smallColorScale
    \caption{2-level codes.
    The axes' extent is the interval $[0,1/2]$, and the three black dots (from center to left) are the loci of depolarizing, BB84, and 2-Pauli channel families, respectively---for more details see \Cref{fig:surface-plot-expl}.
    The first two rows show the noise thresholds vs.\ the single-letter threshold for the codes defined in \Cref{fig:all-lvl-legend}.
    In the third row, the first two plots are colored according to the relative threshold of the best code in the entire tree graph code family vs.\ single-letter (left) and vs.\ the \${5-in-5} cat code (middle).
    In all these plots, the color scale indicates the distance between the respective thresholds.
    The rightmost plot in the third row shows the same surface indexed by colors referring to the highest CI threshold graph state codes; for the corresponding color legend see \Cref{fig:all-lvl-legend}.
    }
    \label{fig:all-lvl}
\end{figure}

\begin{figure}
    \centering
    \begin{tikzpicture}[]
        \node at (0, 0) {\includegraphics[width=5cm, trim=0 2cm 0 2cm, clip]{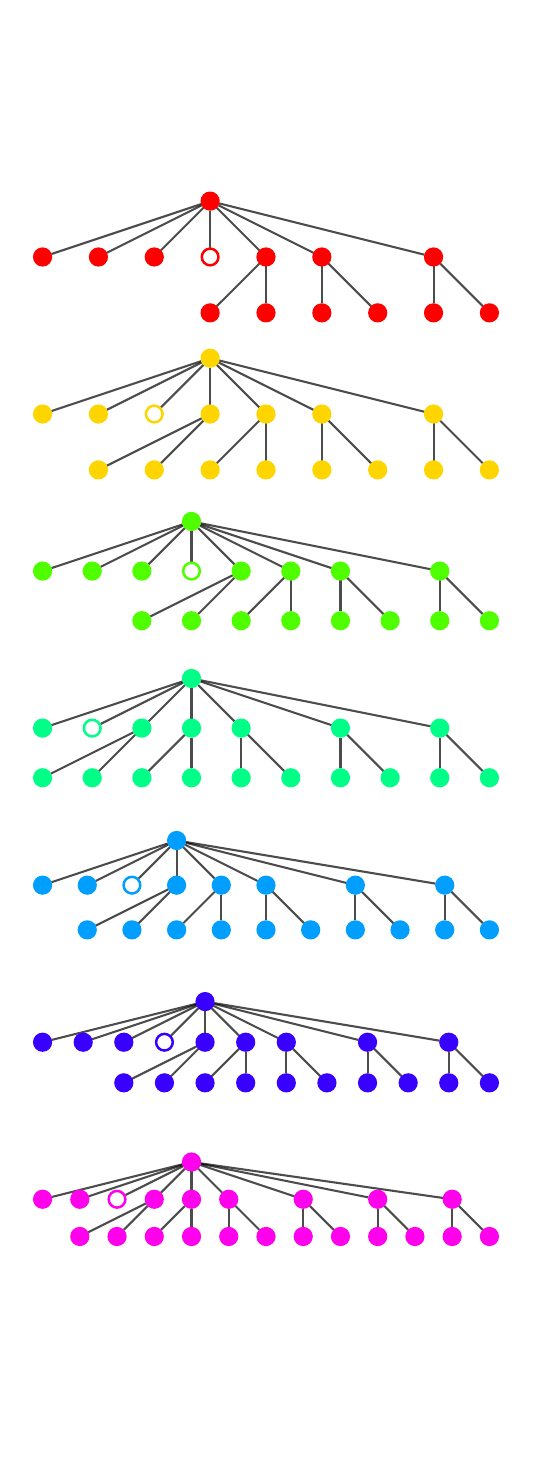}};
        \foreach \y/\l/\ll in {0/13/,1/15/,2/16/,3/17/,4/18/,5/19/,6/21/}{%
            \node[left] at (-3,4.2-1.47*\y) {\${T$_{\l\mathrm{\ll}}$}};
        }
    \end{tikzpicture}
    \caption{Color legend for 2-level codes in \Cref{fig:all-lvl}.
    The system vertices are filled with solid color, the (single) environment vertex is filled white.}
    \label{fig:all-lvl-legend}
\end{figure}

Another family of codes for which we have a sufficiently large symmetry group are graph states associated with tree graphs.
In particular, we consider shallow trees which due to their restricted variety of local neighborhoods exhibit large numbers of equivalent vertices.
While analyzing thresholds for the depolarizing channel alone, we found a particularly promising class of trees to be those of depth $2$, i.e.\ a star graph, where each of the spikes is branching at most once.
For all those tested $2$-level graphs, we found that a single environment vertex attached at the root of the tree gives the highest thresholds in the case of depolarizing noise:
for $5<k\sys<25$, those gave the highest thresholds of any codes we tested of up to the given system vertex count (see \Cref{tab:all-lvl-dep}).

The size of the automorphism group for a tree can be computed recursively, starting at the root node; instead of giving a general formula we point out that the best-performing $2$-level codes in \Cref{fig:all-lvl-legend} appear to have a particularly large symmetry group, as all the branches are equivalent.

The most promising $2$-level codes we found are given in \Cref{fig:all-lvl-legend} with a threshold comparison against the single-letter case and a \${5-in-5} code in \Cref{fig:all-lvl}.
Evidently, the $2$-level codes perform particularly well for $X$-biased noise, outperforming both the optimal single-letter code (for smaller bias towards $X$-noise) and the \${5-in-5} cat code (for larger bias).
A comparison of all code families in \Cref{fig:code-familiy-comparison} shows that tree graphs outperform both repetition and cat codes for a large region of $X$-biased noise in the Pauli channel simplex.
We also list the thresholds of the $2$-level codes for depolarizing noise in \Cref{tab:all-lvl-dep}, comparing it to the \${1-in-5} and \${5-in-5} codes.
Finally, \Cref{tab:summary-num-thresholds} contains an overview of the best and worst thresholds for the three code families in the whole Pauli channel simplex (see also \Cref{fig:summary-thresholds}), as well as the best thresholds for depolarizing noise, BB84 channel, and the 2-Pauli channel.

\begin{table}[t]
	\centering
	\begin{tabular}{r l l lll}
		\toprule
		\multirow{2}{*}{code family} & \multirow{2}{*}{minimum} & \multirow{2}{*}{maximum} & \multicolumn{3}{c}{$\Delta p$} \\
		& & & $\mathbf{p}_{\mathrm{dep}}$ & $\mathbf{p}_{\mathrm{two}}$ & $\mathbf{p}_{\mathrm{BB84}}$ \\
		\midrule
		\${1-in-$k$}, $k=2,\ldots,60$ & \makecell[l]{ $-0.006256$ \\ \${1-in-2} } & \makecell[l]{ $0.026737$ \\ \${1-in-60} } & \makecell[l]{ $1.066\times 10^{-3}$ \\ \${1-in-5} } & \makecell[l]{$-0.414 \times 10^{-3}$ \\ \${1-in-5}} & \makecell[l]{ $2.080 \times 10^{-3}$ \\ \${1-in-7} } \\[4mm]
		concatenated$^\dagger$ & \makecell[l]{ $-0.007456$ \\ \${2-in-7} } & \makecell[l]{ $0.018999$ \\ \${3-in-7} } & \makecell[l]{$1.308 \times 10^{-3}$ \\ \${5-in-6} } & \makecell[l]{$-0.219\times 10^{-3}$ \\ \${5-in-6}} & \makecell[l]{ $ 2.194 \times 10^{-3}$ \\ \${5-in-6} } \\[4mm]
		2-level$^\ddagger$ & \makecell[l]{ $-0.010638$ \\ \${T$_{17}$} } & \makecell[l]{ $0.019395$ \\ \${T$_{21}$} } & \makecell[l]{ $1.260 \times 10^{-3}$ \\ \${T$_{21}$} } & \makecell[l]{ $-0.382 \times 10^{-3}$ \\ \${T$_{19}$} } & \makecell[l]{ $2.087 \times 10^{-3}$ \\ \${T$_{19}$} } \\
		\bottomrule
	\end{tabular}
	\caption{Error thresholds for select code families; minima and maxima are located at the blue and red dots in \Cref{fig:summary-thresholds}. The last three columns correspond to the thresholds for the depolarizing channel ($\mathbf{p}_{\mathrm{dep}}=(1-p,p/3,p/3,p/3)$), the 2-Pauli channel ($\mathbf{p}_{\mathrm{two}}=(1-p,p/2,0,p/2)$), and the BB84 channel ($\mathbf{p}_{\mathrm{BB84}}=((1-p)^2,p-p^2,p^2,p-p^2)$), respectively; they are located at the black dots from the center outwards in \Cref{fig:summary-thresholds} (see also \Cref{fig:surface-plot-expl}). Here, $\Delta p$ is the difference between the best threshold for the given code and the single-letter threshold given by the hashing bound \eqref{eq:hashing-bound}.\\
		${}^\dagger$The family includes all \${$n_1$-in-$n_2$}-codes, for $n_1=2,\ldots,5$  and $n_2=2,\ldots,7$ (except \${5-in-7}) (cf.\ \Cref{fig:cat codes}).
		${}^\ddagger$The family includes all 2-level codes given in \Cref{fig:all-lvl-legend}.
		For a list of select code thresholds see \Cref{tab:all-lvl-dep}.
	}
	\label{tab:summary-num-thresholds}
\end{table}

\subsection{Rate Comparisons}\label{sec:rates}

So far we have focused on a special form of superadditivity of coherent information where the error threshold is increased relative to the single-letter threshold.
This is a fundamental question in quantum information theory since positive quantum capacity of a channel guarantees the theoretical ability to correct errors (modeled by that channel), albeit at a potentially miniscule rate.
From a practical point of view, one is also interested in the \emph{magnitude} of the rate itself, which governs the overhead of an error correction code, i.e., how many physical qubits are needed to protect logical qubits from noise.
In this section we study how the rates of repetition codes, cat codes, and tree graph codes compare to the optimal single-letter rates given in \eqref{eq:hashing-bound} \emph{below} the threshold.

In order to visualize rates within the Pauli simplex displayed in \Cref{fig:antidegradability}, we evaluate the CI rate on layers aligned with the $y$ axis, but with a specific geared ratio of $x=fz$ for various chosen $z$ as shown in \Cref{fig:rate-layers}.

\Cref{fig:rates-A,fig:rates-B,fig:rates-C} show rate comparisons for the three code families---repetition, concatenated, and tree codes---vs.\ the single-letter CI.
We have seen in the sections above that repetition codes exhibit the largest threshold increases overall, especially for noise biased towards a high $X:Z$ ratio.
It is therefore not surprising that the rates of said codes---as compared to the single-letter rate---reach an advantages of $\approx 0.01$ in a significant volume below the noise threshold.
The volume of superadditivity decreases for the family of cat codes, and is smallest for tree graph codes.

The relatively small region of superadditivity in \Cref{fig:rates-A,fig:rates-B,fig:rates-C} is not surprising, as all code families studied in this paper (repetition, cat, and tree graph codes) are rank-2 codes: they have a single purifying qubit, and hence the marginal input state on the $k\sys$ channel qubits has rank at most 2 by Schmidt decomposition, and a maximal entropy of $1$.
These low-rank codes are well-suited for increasing noise thresholds, albeit at the expense of a poor rate of at most $1/k\sys$ for noiseless channels and decreasing further for noisy channels.
An interesting question is whether the region of superadditivity relative to the single-letter rate in \Cref{fig:rates-A,fig:rates-B,fig:rates-C} can be increased using higher-rank graph states, i.e., graphs with more than one purifying qubit.
This analysis would likely extend superadditivity effects to the mid-noise regime away from the threshold.
In the low-noise regime, where the channel is close to a noiseless channel, the results of \cite{LLS17} show that superadditivity can only be limited, and the quantum capacity is essentially given by the single-letter coherent information (up to small corrections in the noise parameter).
We leave an in-depth investigation of the mid-noise regime for future work.

\begin{figure}
	\hspace*{-0.9cm}
	\begin{minipage}{18cm}
		\includegraphics[width=\linewidth]{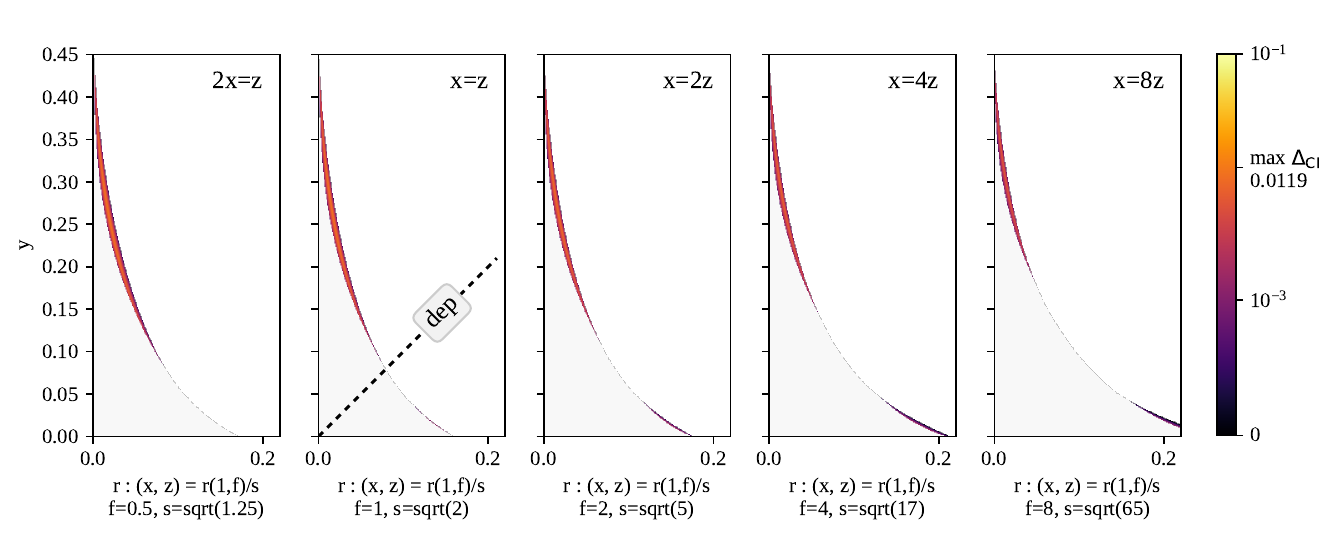}\\[.5cm]
	\end{minipage}
	\caption{
		Rate comparison of all repetition codes \${1-in-$k$} for $k=2,\ldots,60$ vs.\ the optimal single-letter rate.
		Shown is the CI difference of the two codes within planes that cut through the Pauli simplex (shown in \Cref{fig:rate-layers}), where the planes are parametrized by $(p_1,p_2,p_3) = (r/s,y,r f/s)$---i.e.\ they align with the $y$ axis, and are slanted with a specific ratio in the other two directions via $x=f z$ for specific choices of $f>0$ shown in the top right corner of each plot.
		We only draw the area where the rate difference is positive; the color scale and maximum rate difference are given on the right.
		Shaded in light grey is the area where the repetition code CI is non-negative but below the single-letter rate.
		For $x=z$, the dashed line indicates the family of depolarizing channels.}
	\label{fig:rates-A}
\end{figure}

\begin{figure}
	\hspace*{-0.9cm}
	\begin{minipage}{18cm}
		\includegraphics[width=\linewidth]{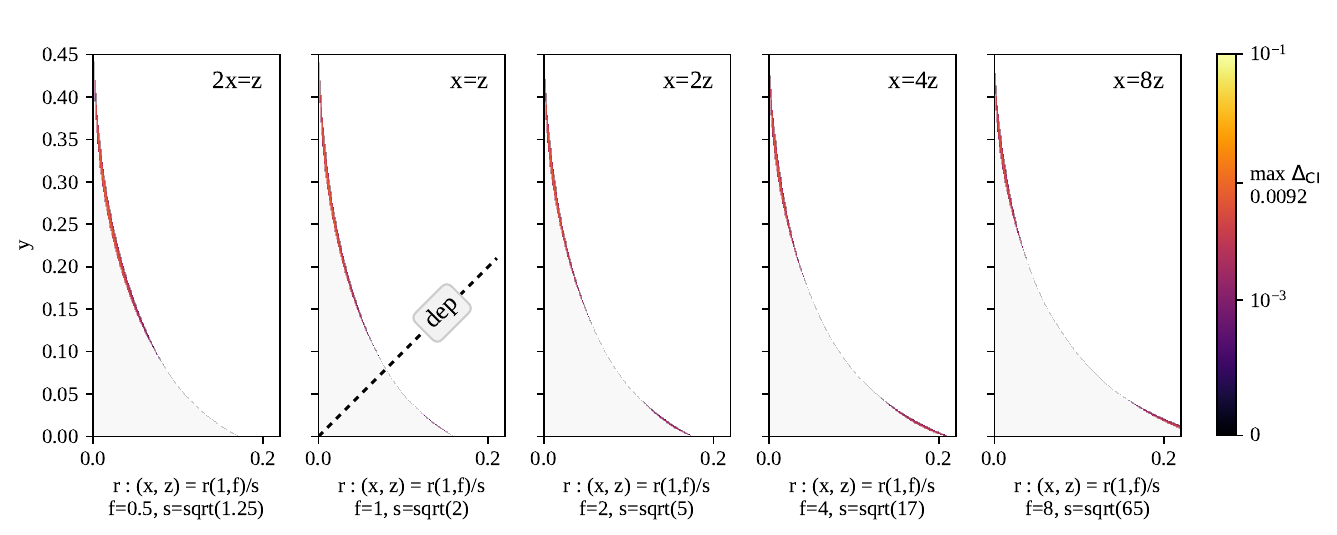}\\[.5cm]
	\end{minipage}
	\caption{All concatenated codes from \Cref{fig:cat codes} vs.\ single-letter rate.
		For a plot description see \Cref{fig:rates-A}.}
	\label{fig:rates-B}
\end{figure}

\begin{figure}
	\hspace*{-0.9cm}
	\begin{minipage}{18cm}
		\includegraphics[width=\linewidth]{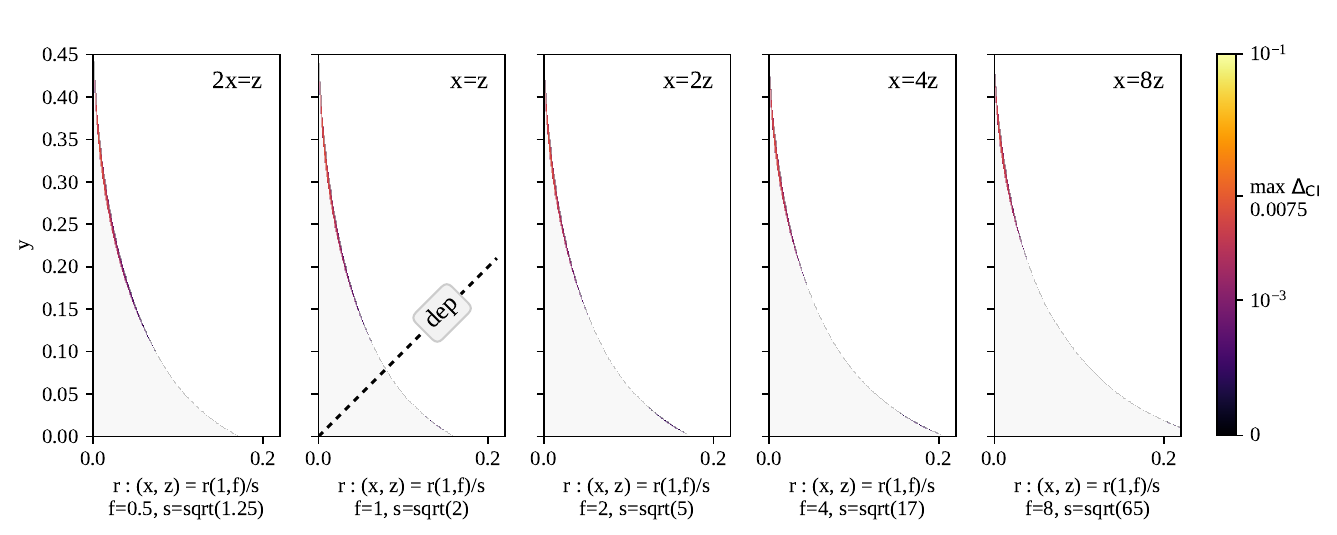}\\[.5cm]
	\end{minipage}
	\caption{All tree codes from \Cref{fig:all-lvl} vs.\ single-letter CI.
		For a plot description see \Cref{fig:rates-A}.}
	\label{fig:rates-C}
\end{figure}

\begin{figure}[ht]
	\centering
	\begin{tikzpicture}
  \begin{axis}[
    width=.8\linewidth,
    view={135}{30},
    xmin=0,
    xmax=0.5,
    ymin=0,
    ymax=0.5,
    zmin=0,
    zmax=0.5,
    axis equal image,
    scale=1,
    myplane/.style={
        surf,
        domain=0:0.5,
        samples=6,
        y domain=0:0.5,
        samples y=6,
        shader=faceted interp,
        z buffer=sort,
        opacity=1
    },
    colormap/viridis,
    point meta=y,
    xtick={0,.1,.2,.3,.4,.5},
    ytick={0,.1,.2,.3,.4},
    ztick={.1,.2,.3,.4,.5},
    xlabel={$p_1$},
    ylabel={$p_2$},
    zlabel={$p_3$},
    ]
    \foreach \f in {.125,.25,.5,8,4,2,1}{
        \addplot3[myplane] ({x / sqrt(1 + \f*\f)}, y, {\f*x / sqrt(1 + \f*\f)});
    };
  \end{axis}
\end{tikzpicture}
	\caption{Layer cuts through the simplex of Pauli channels; on the surfaces shown, we evaluate the CI for various graph state codes. The results are shown in \Cref{fig:rates-A,fig:rates-B,fig:rates-C}.
	}
	\label{fig:rate-layers}
\end{figure}
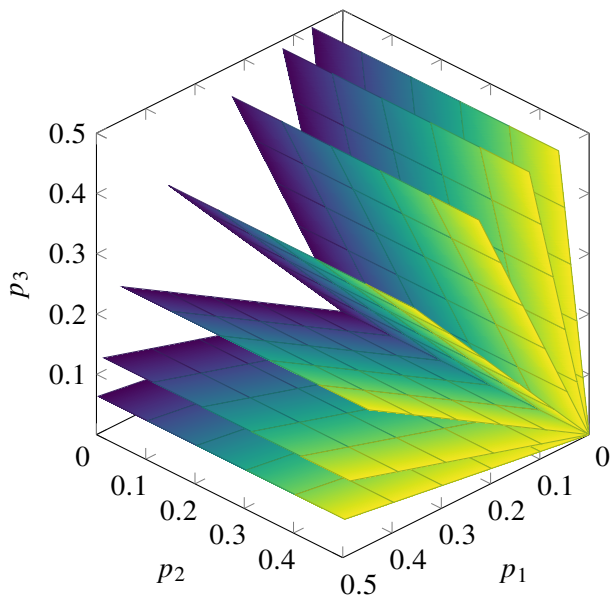

\section{Discussion}\label{sec:discussion}

\begin{figure}[t]
    \centering
	\subplotAA{all-rep.vs.single}[all rep vs.\\single-letter]
	\subplotAA{all-cat.vs.single}[all cat vs.\\single-letter]
	\subplotAA{all-lvl.vs.single}[all 2-level vs.\\single-letter]\\
	%
	\raisebox{5pt}{$\Delta x$} \smallColorScale\phantom{ $\Delta x$}
	\caption{Threshold plots for all repetition codes (up to \${1-in-60}), concatenated codes (all \${$n_1$-in-$n_2$}-codes, for $n_1=2,\ldots,5$  and $n_2=2,\ldots,7$ for $n_1\le 3$, otherwise $n_2=2,\ldots,5$), and 2-level codes (given in \Cref{fig:all-lvl-legend}), vs.\ the single-letter threshold.
		The black dots indicate the location of the depolarizing, BB84, and 2-Pauli channel families (from center to outside); the red and blue dots the respective global threshold maxima and minima (cf.\ \Cref{tab:summary-num-thresholds}).
	}
	\label{fig:summary-thresholds}
\end{figure}

\begin{figure}[t]
    \centering
    \surfacePlot{10cm}{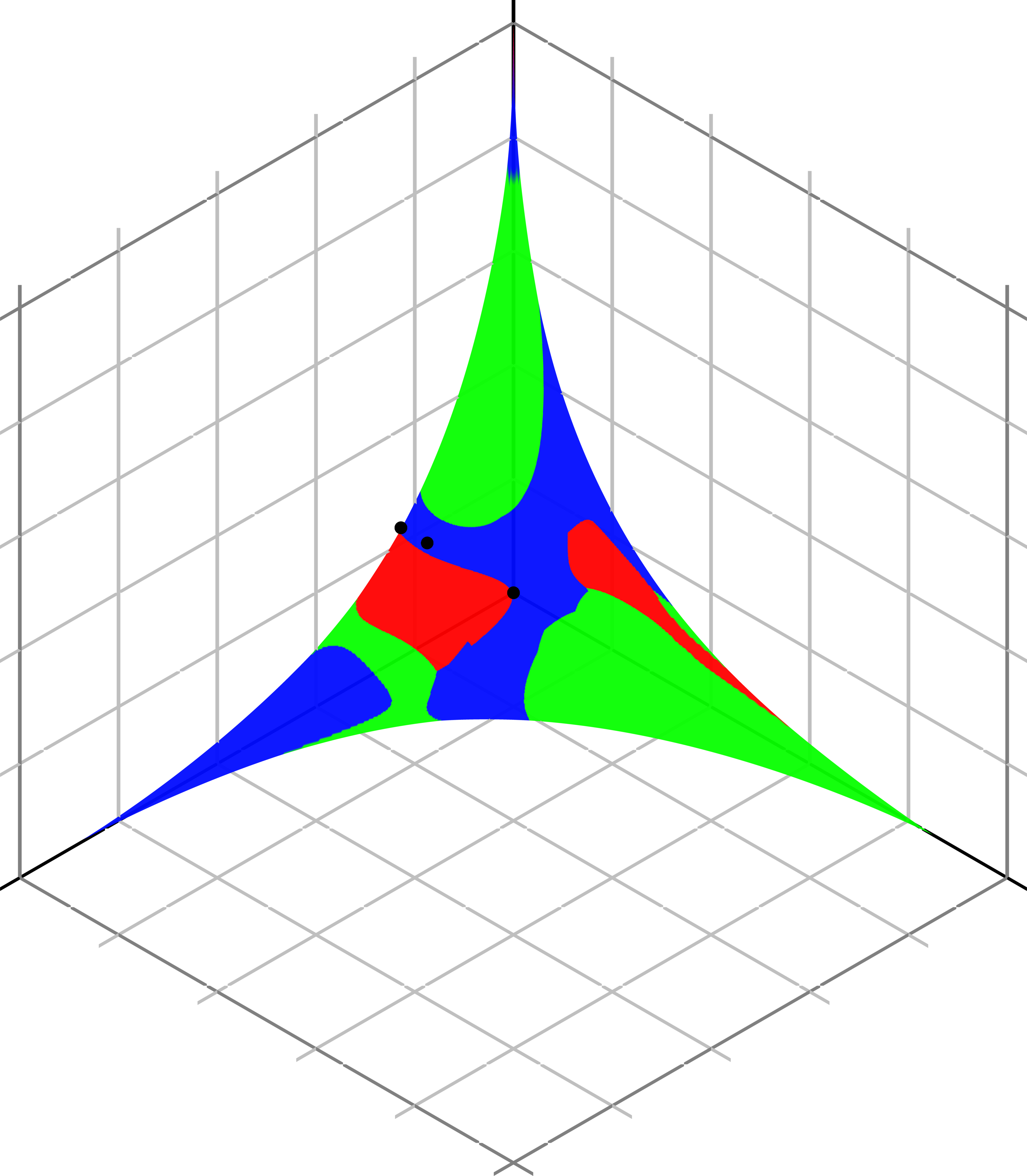}[1][%
        \node (dep) at (0,-.01) {};
        \node (bb) at (-.083,.0405) {};
        \node (pp) at (-.1085,.063) {};
        \node[scale=.5, fill=black, text=white, rectangle callout, callout absolute pointer={(dep)}, below of=dep] {dep};
        \node[scale=.5, fill=black, text=white, rectangle callout, callout absolute pointer={(bb)}, below of=bb] {BB84};
        \node[scale=.5, fill=black, text=white, rectangle callout, callout absolute pointer={(pp)}, above of=pp] {2P};
        \node[right] at (.5,-.3) {$0$};
        \node[right] at (.5,.26) {$0.5$};
        \node[left] at (-.5,-.3) {$0$};
        \node[left] at (-.02,-.58) {$0.5$};
        \node[right] at (.02,-.58) {$0.5$};
    ][]
    \caption{Repetition (green), concatenated (blue) and $2$-level (red) code family thresholds compared.
    Especially around the $2$-Pauli, BB84 and depolarizing channels biased towards $x$ noise our new code family outperforms the other two.}
    \label{fig:code-familiy-comparison}
\end{figure}

In this work, we studied various classes of graph state codes in the context of Pauli noise channels, and how their coherent information (CI) thresholds relate to the noise parameters.
For graphs with symmetries, we derived an algorithm to obtain an analytical expression for the CI, exploiting the graph state's automorphism group to significantly speed up the computation (\Cref{th:symm-lambda-state-sum,th:symm-lambda-state-tr-sum}); the speedup is most pronounced for graphs with large symmetry groups.

To quantitatively assess the codes' error correction potential in various noise regimes, we numerically evaluated the CI threshold over the entire simplex of Pauli channels for individual noise probabilities $\le 1/2$.
To this end, we developed a numerical bisection method that allows us to compute these thresholds in a quick fashion, based on the analytic CI expression obtained before.

The most well-studied Pauli channels are unbiased depolarizing noise, the BB84 channel (corresponding to independent bit and phase flips), and 2-Pauli noise (where only two out of the three possible Pauli errors occur).
For these channel models significant superadditivity of coherent information is achieved e.g.\ by nested concatenation of simpler codes \cite{DSS98,SS07,FW08}.
For more general biased noise, we find much more significant superadditivity; for example, repetition codes achieve a maximum possible superadditivity of $\delta p\approx 0.0267$ relative to the single-letter threshold for heavily biased noise.
This is almost two orders of magnitude beyond the superadditivity of $\delta p \approx 0.0007$ achieved by the \${5-in-5} code.
We find a maximum superadditivity of $\approx 0.0190$ for concatenated codes, and $\approx 0.0194$ for the codes based on $2$-level tree graphs introduced in this paper.
These findings are summarized in \Cref{fig:summary-thresholds}.
All maximum threshold increases were reached for $y$-biased noise, i.e., the region where $Y$ errors dominate.
Note however that a different assignment of the graph state stabilizers achieves arbitrary rotations and reflections in the $xyz$-hyperplane (as explained in \Cref{sec:rep-codes}), and thus our results apply to any type of biased noise.

For a given stabilizer state affected by i.i.d.\ Pauli noise, error syndrome measurements give way to a decomposition of the coherent information in terms of \emph{induced} single-qubit Pauli channels whose bias is determined by the structure of the stabilizer group \cite{DSS98,SS07}.
On the other hand, it is less obvious to determine which code performs best for a given Pauli channel.
Our results presented in \Cref{sec:results} allow us to answer this question for the code families (repetition, cat, tree graph codes) discussed in this paper, which we summarize in \Cref{fig:code-familiy-comparison}.
This plot shows that around depolarizing, BB84 and 2-Pauli noise the concatenated code family has the highest threshold; towards the $Y$ and $Z$ legs repetition codes are best.
Yet for large intermediate regions in the $ZX$ and $ZY$ plane, the $2$-level codes perform best.

\subsection{Noise Models in NISQ Devices}\label{sec:nisq}

The context of codes we discussed in this paper is that of communication, i.e.\ where entanglement is generated by sending part of a graph state through a noisy channel.
States that exhibit positive coherent information after having been affected by noise can e.g.\ be used to extract a secret key in quantum key distribution protocols \cite{BB84,Bruss6State,ShorPreskill,Lo6State,SRS08,KernRenes}.
More generally, a central goal in designing quantum storage and quantum computation devices is to maintain coherence of entangled states affected by environmental noise.
Usually, the noise is assumed to be uniform over the Pauli errors $X$, $Y$, $Z$, i.e., unbiased (depolarizing) noise.

A fundamental question is how reasonable this assumption of unbiased noise is.
For all but the most trivial tasks, data will have to be passed through a quantum device which, if not fully error-corrected, will be noisy.
This is not an issue if the device itself features full error correction: data is then guaranteed to stay in a coherent state throughout.
Yet the near-term prospects of quantum computing---dubbed the NISQ (noisy intermediate-scale quantum) era \cite{nisq}---comprise devices of $10$-$100$ qubits of various interaction topologies,
and with circuit depths limited to $\approx 100$ gates.
On such devices, achieving full-fledged error correction is unfeasible, and will remain a challenge for some time to come.

The way these limited near-term capabilities factor into the noise model is twofold.
First, because there is no error correction, individual qubits will experience different levels and different types of noise originating from cross-talk across the chip, e.g.\ depending on whether a qubit is located towards the center of the chip, or at an edge.
On average, it is thus conceivable that the noise channel modeling the device is not unbiased depolarizing noise, but dominated to some extent by a different Pauli error ratio---or even non-Pauli noise obtained from slightly perturbing Pauli channels; an analysis of the error thresholds of such close-to-Pauli channels was partly carried out in \cite{Jackson17}.

Secondly, the way physical qubits are implemented factors into the induced gate noise.
Any gate-level unitary on qubits is truly a hardware-level quantum channel that implements the coupling by some control pulse; depending on how the coupling is implemented, it is not obvious why the qubit-level noise should be unbiased.
In fact, many proposed architectures for quantum computation---such as superconducting qubits \cite{superconducting}, quantum dots \cite{quantumdots}, or trapped ions \cite{trappedions,Ospelkaus2008}---have noise biased towards one of the Pauli errors.

For such biased noise models, it is known that one can obtain significantly higher error correction thresholds than for unbiased noise, e.g.\ with a tensor network decoder for the surface code \cite{Tuckett2018}.
Yet even without error correction there exist circuit level procedures that render biased noise into weak but unbiased noise \cite{Aliferis2008,Wallman2016}, so it is arguable whether other non-depolarizing noise models
are worth the effort to construct tailored codes for them. 
However, the incurred overhead of such a transpiling procedure for error mitigation is prohibitive for any near-term quantum application with its limited circuit depth: any prolongation of the computation
would render the circuit much more noisy than initially, undermining any potential benefit from unbiasing the noise in first place.

So what type of block code is optimal for a specific device and communication task setup?
In particular if the overall noise level is close to the zero-CI threshold, this choice will begin to matter.
For instance, if the communication task is that of QKD and we choose to implement the BB84 protocol, the quantum devices used for encoding and decoding might shift the overall channel away from an exact BB84 channel,
and bias it e.g.\ towards a dominating $Z$ error.
In that case, concatenated codes would be a good choice (see \Cref{fig:code-familiy-comparison}).
On the other hand, if $X$ noise gets slightly more pronounced---e.g.\ a relative increase of $\approx 5\%$---then picking the \${T$_{21}$} code is a better option.

\subsection{Open Problems}
In light of the discussion above about more realistic noise models, it would be interesting to devise an algorithm computing the coherent information of graph states under non-i.i.d.\ channels, which potentially include memory effects between qubits such as nearest neighbor interactions.
Depending on the interaction topology, this type of noise is likely to lack the full permutation symmetry of i.i.d.\ channels, and hence poses a challenge to exploiting symmetries to speed up the computation of the coherent information as in \Cref{alg:symm-lambda} in \Cref{sec:symmetries}.

We conclude with a few open questions that will be the subject of further study.
As mentioned in \Cref{sec:rates}, we focused our analysis on low-rank graph state codes that increase error thresholds relative to the single-letter threshold, noting that they also boost coherent information rates for Pauli channels in a small region below the error threshold.
In order to study more general rate increases for Pauli channels in the mid-noise regime, it would be interesting to consider other families of graph states with a larger rank, i.e., for which $k\env \approx k\sys$.
Our analysis of error thresholds for Pauli channels can also readily be extended to other graph state families beyond cat codes and tree graph codes.
In both cases above, the challenge is to find graphs that have both large symmetry groups and interesting error-correcting properties. 
While the former is necessary for our algorithm to yield a computational speed-up, the latter guarantees that these graph states achieve superadditivity of coherent information.

\section*{Acknowledgements}
We would like to thank Graeme Smith for helpful feedback, and Brendan McKay for many enlightening conversations about computational graph theory and his excellent \tool{nauty} toolkit.
J.\,B.\ would like to thank the Draper's Research Fellowship at Pembroke College for their support.
F.\,L.\ acknowledges support from NSF Grant No.~PHY 1734006 and the hospitality of DAMTP at the University of Cambridge.
We are grateful for an \emph{AI Grant} that enabled us to perform the numerical studies within this paper.

\clearpage
\linespread{1}
\printbibliography

\linespread{1.13}
\appendix
\section{LU-Equivalence of an Arbitrary Stabilizer State to~a~Graph~State}\label{app:LU-equivalence}
In this appendix we explain in detail how to obtain the graph state that is LU-equivalent to a given stabilizer state. 
In the following, we adopt the notation of \cite{EEC08}.

To show the equivalence, it is convenient to use the binary representation of the Pauli group, in which the Pauli matrices are assigned vectors in $\bZ_2^2$ as follows:
\begin{align}
r(I) &\coloneqq (0,0) & r(X) &\coloneqq (1,0) & r(Y) &\coloneqq (1,1) & r(Z) & \coloneqq (0,1).\label{eq:binary-identification}
\end{align}
Ignoring phases, multiplication of Pauli matrices corresponds to addition in $\bZ_2$ in the binary representation.
Using this correspondence we assign to an element $P\in \cP_n$ in the $n$-qubit Pauli group a vector in $v\in\bZ_2^{2n}$ by setting $v_i = r(P_i)_1$ and $v_{i+n} = r(P_i)_2$.
As an example, the Pauli operator $X^1 Z^2 Y^4$ corresponds to the vector $v = (1 0 0 1 | 0 1 0 1)$.

For a stabilizer state on $n$ qubits with $n$ stabilizer generators, consider now the $n \times 2n$ \emph{generator matrix} whose rows are the binary representations of the stabilizer generators.
Multiplying a generator by another one does not change the stabilizer group; by the discussion above, this corresponds to adding rows modulo 2 in the generator matrix.
Moreover, exchanging the labels of the qubits $i$ and $j$ corresponds to simultaneously exchanging the columns $i$ and $j$ and $n+i$ and $n+j$.
Using these two operations, every generator matrix corresponding to a set of $n$ stabilizer generators describing a given stabilizer state on $n$ qubits can be brought into the \emph{canonical form} (or \emph{standard form}) \cite{NC00}
\begin{align}
\left(\begin{array}{cc|cc}
I_k & A & B & 0\\
0 & 0 & A^T & I_{n-k}
\end{array}\right),\label{eq:canonical-form}
\end{align}
where $B$ is a binary symmetric $k\times k$-matrix, $A$ is an arbitrary binary $k\times (n-k)$-matrix, and $I_j$ denotes the $j\times j$-identity matrix.
The canonical form of a graph state $\ket{\Gamma}$ with stabilizers \Cref{eq:graph-state-stabilizers} is given by
\begin{align}
\left( \begin{array}{c|c}
I_n & \Gamma
\end{array}\right),\label{eq:canonical-form-graph-state}
\end{align}
where by $\Gamma$ we also denote the \emph{adjacency matrix} of the graph $\Gamma$, i.e., $\Gamma_{ij} = 1$ if $\{ i,j\}\in E$ and $\Gamma_{ij}=0$ otherwise.

To convert a given generator matrix in canonical form \Cref{eq:canonical-form} to a graph state with generator matrix \Cref{eq:canonical-form-graph-state}, we recall the Clifford unitaries
\begin{align*}
H &\coloneqq \frac{1}{\sqrt{2}} \begin{pmatrix}
1 & 1\\1 & -1
\end{pmatrix} & 
S &\coloneqq \begin{pmatrix}
1 & 0\\ 0 & i
\end{pmatrix},
\end{align*}
which satisfy the relations $H X H = Z$ (and hence $X = HZH$) and $SYS^\dagger = -X$.
Applying the Hadamard matrix $H$ to qubit $j$ corresponds to exchanging the columns $j$ and $n+j$ in the generator matrix. 
Hence, applying $H$ to the last $n-k$ qubits achieves the transformation
\begin{align*}
\left(\begin{array}{cc|cc}
I_k & A & B & 0\\
0 & 0 & A^T & I_{n-k}
\end{array}\right)
\longrightarrow
\left(\begin{array}{cc|cc}
I_k & 0 & B & A\\
0 & I_{n-k} & A^T & 0
\end{array}\right).
\end{align*}
A valid adjacency matrix of a simple graph has $0$'s on the diagonal, and hence we are left with removing any $1$'s on the diagonal of $B$.
This can be achieved by applying $S$ to all qubits $j\leq k$ for which $B_{jj}=1$, and we are done.
Since the procedure outlined above only involves single-qubit Clifford unitaries, we can in fact assert that any stabilizer state is local-Clifford (LC) equivalent to a graph state.

\section{Subroutines for Full Coherent Information Algorithm}\label{app:subroutines-for-full-alg}

\Cref{alg:coherent-information-full-algorithm} makes use of the following subroutines:
\begin{itemize}
    \item \textproc{GetUSubsets}($k,r$)\\
    Input: $k,r\in\bN$\\
    Output: Three $(k+r)\times 4^k$-matrices, whose columns are the binary vector representations of all possible tuples $(U_1,U_2,U_3)$ satisfying $U_i\subset \lbrace 1,\dots,k\rbrace$ and $U_i\cap U_j=\emptyset$ for $i\neq j$. The $U_i$ are regarded as subsets in $\lbrace 1,\dots,k+r\rbrace$, and hence the $k$-binary vectors are padded with $r$ zeroes.
    \item \textproc{GetUSubsetsCard}($k$)\\
    Input: $k\in\bN$\\
    Output: Three row vectors of length $4^k$ storing the cardinalities of all possible tuples $(U_1,U_2,U_3)$ satisfying $U_i\subset \lbrace 1,\dots,k\rbrace$ and $U_i\cap U_j=\emptyset$ for $i\neq j$. 
    \item \textproc{BinaryToDecimal}($M$)\\
    Input: $M\in \cM(\bF_2)$\\
    Output: Row vector whose $i$-th entry is the decimal number corresponding to the binary string in the $i$-th column of $M$.
    \item \textproc{Subsets}($k$)\\
    Input: $k\in\bN$\\
    Output: $k\times 2^k$-matrix whose columns are the binary vectors representing the subsets of $\{ 1,\dots,k\}$.
    \item \textproc{PermVector}($v,\pi$)\\
    Input: $v\in\bR^n$, $\pi\in S_n$\\
    Output: $v' = (v_{\pi(1)},\dots,v_{\pi(n)})$
    \item \textproc{ShannonEntropy}($\mathbf{q}$)\\
    Input: Probability distribution $\mathbf{q} = (q_1,\dots,q_n)$\\
    Output: $-\sum_{i=1}^n q_i \log_2 q_i$.
\end{itemize}

\section{Antidegradability of quantum channels}\label{sec:proof-monotonicity}
In order to prove that antidegradable quantum channels have vanishing quantum capacity, we introduce the \emph{quantum relative entropy} $D(\rho\|\sigma)$, defined for a quantum state $\rho$ and a positive semidefinite operator $\sigma$ as
\begin{align*}
    D(\rho\|\sigma) = \begin{cases} \tr(\rho\log\rho - \rho\log\sigma) & \text{if }\supp\rho\subset\supp\sigma\\
    \infty & \text{otherwise},\end{cases}
\end{align*}
where $\supp\rho \coloneqq (\ker\rho)^\perp$.
The relative entropy satisfies the data-processing inequality \cite{lindblad1975completely}:
For any quantum channel $\cN$, 
\begin{align}
    D(\rho\|\sigma) \geq D(\cN(\rho)\|\cN(\sigma)).\label{eq:dpi}
\end{align}
The coherent information $I_c(\psi,\cN)$ of a pure state $\psi_{RA}$ and a quantum channel $\cN\colon A\to B$ can be expressed in terms of the quantum relative entropy as
\begin{align*}
    I_c(\sigma,\cN) = D(\cN(\psi_{RA}) \| \one_R\ox \cN(\psi_A) ).
\end{align*}

Recall that a channel $\cN\colon A\to B$ with complementary channel $\cN^c\colon A\to E$ is antidegradable if there exists another quantum channel $\cA\colon E\to B$ such that $\cN = \cA\circ \cN^c$.
Antidegradable channels always have non-positive coherent information on pure states $\psi$:
\begin{align}
    I_c(\psi,\cN) = I_c(\psi, \cA\circ\cN^c) \leq I_c(\psi, \cN^c) = -I_c(\psi,\cN),\label{eq:adg}
\end{align}
where the inequality follows from the data-processing inequality \eqref{eq:dpi}.
For the last equality, recall the definition of the complementary channel $\cN^c(\cdot) = \tr_B V\cdot V^\dagger$, where $V\colon \cH_A\to\cH_B\otimes \cH_E$ is a channel isometry satisfying $\cN(\cdot) = \tr_E V\cdot V^\dagger$.
The marginals of a pure bipartite quantum state have the same spectrum by Schmidt decomposition, and applying this argument to the pure state $(\one_R\otimes V)\ket{\psi}_{RA}$ on the systems $RBE$ yields the identity
\begin{align*} 
I_c(\psi,\cN) = S(\cN(\psi_A)) - S(\cN(\psi_{RA})) = S(\cN^c(\psi_{RA})) - S(\cN^c(\psi_A)) = -I_c(\psi, \cN^c).
\end{align*}
\Cref{eq:adg} shows that $I_c(\psi,\cN) \leq 0$ for all $\psi$.
Since $\cN^{\ox n}$ is antidegradable for all $n\in\bN$ if $\cN$ is, we also have $I_c(\psi_n,\cN^{\ox n}) \leq 0$ for all $n\in\bN$ and pure states $\psi_n$, and hence $Q(\cN)=0$ for antidegradable $\cN$ by the quantum capacity theorem in \Cref{eq:quantum-capacity}.

We now give the proof of \Cref{lem:x-param-monotonicity}:

\begin{proof}[Proof of \Cref{lem:x-param-monotonicity}]
    	The necessary and sufficient condition \eqref{eq:pauli-antidegradable} for antidegradability of the Pauli channel $\cN_{\mathbf{p}_x}$ with $\mathbf{p}_x = \left(1-x, x p_1, x p_2, x p_3\right)$ reads
    	\begin{align*}
    	1 \geq 2\left((1-x)^2 + x^2(p_1^2+p_2^2+p_3^2)\right) - 8\sqrt{(x^3-x^4) p_1p_2p_3} \eqqcolon f(x).
    	\end{align*}
    	That is, $\cN_{\mathbf{p}_x}$ is antidegradable if and only if $f(x) \leq 1$.
    	We have $f(0) = 2$ for any $(p_1,p_2,p_3)$ (note that $\cN_{\mathbf p_0}$ is the identity channel for any $(p_1,p_2,p_3)$).
    	First, let $(p_1,p_2,p_3)$ be deterministic, e.g., $(p_1,p_2,p_3) = (1,0,0)$.
    	Then $f(x) = 2((1-x)^2+x^2) = 1$ if and only if $x=1/2$.
    	
    	Now, let $(p_1,p_2,p_3)$ be non-deterministic, then $f(1/2) <1$:
    	\begin{align*}
    	f(1/2) &= 2\left( \frac{1}{4} + \frac{1}{4} (p_1^2+p_2^2+p_3^2)\right) - 2\sqrt{p_1p_2p_3}\\
    	&\leq \frac{1}{2}\left(1 + p_1^2+p_2^2+p_3^2\right)\\
    	&< \frac{1}{2}\left(1 + (p_1+p_2+p_3)^2\right) = 1.
    	\end{align*}
    	Moreover, a similar argument shows that 
    	\begin{align*}
    	f'(x) \leq 2\left(-2(1-x) + 2x(p_1^2 + p_2^2 + p_3^2)\right) < -4 + 8x \leq 0
    	\end{align*}
    	for $0\leq x\leq 1/2$.
    	Hence, $f(0)>1$, $f(1/2)<1$, and $f$ is strictly monotonically decreasing on $[0,1/2]$, which proves the lemma.
\end{proof}

\section{Exhaustive Search}\label{sec:exhaustive}
\begin{figure}
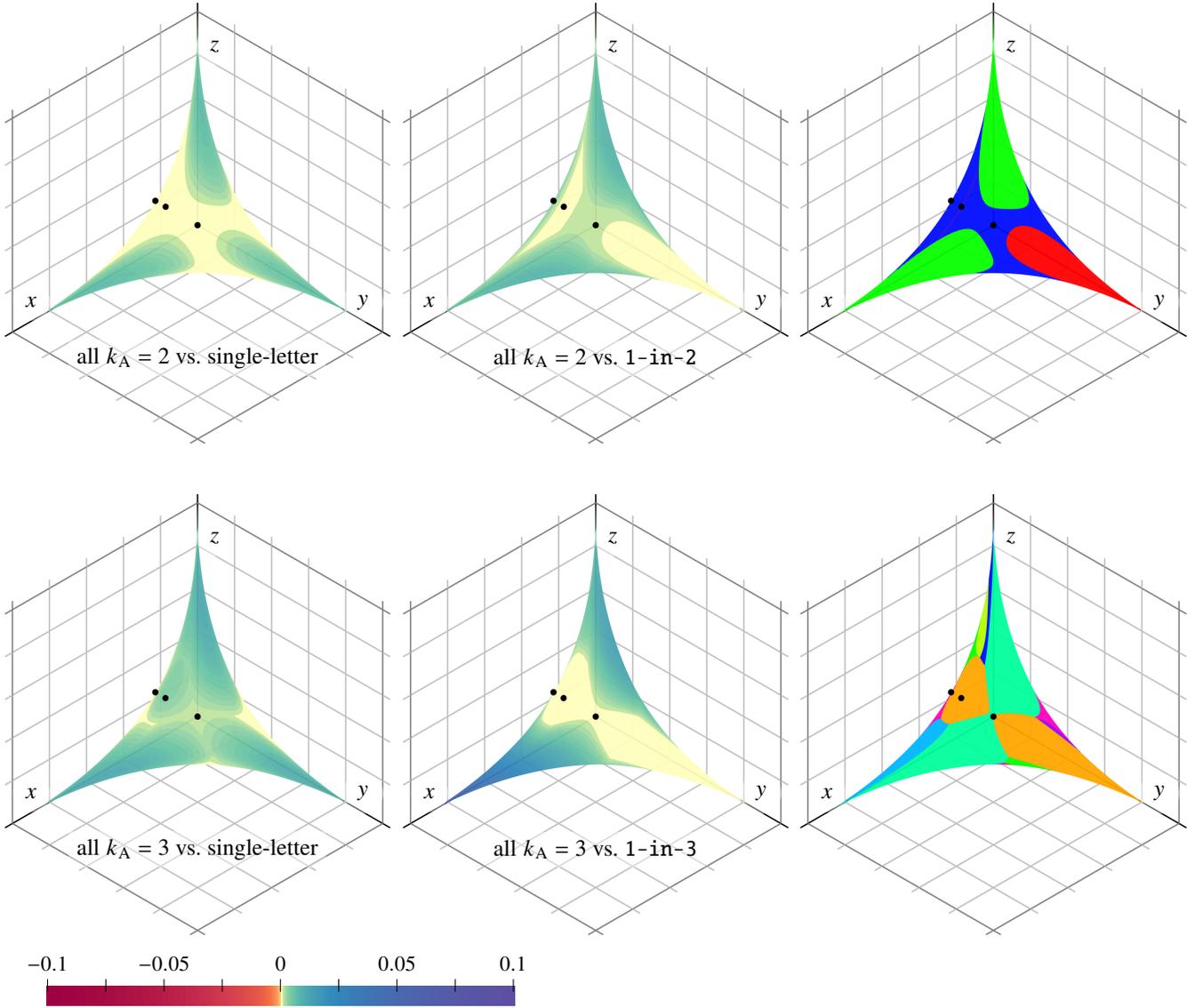

	\hspace*{-0.9cm}
	\begin{minipage}{18cm}
		\centering
		\subplotAA{all2.vs.single}[all $k\sys=2$ vs.\ single-letter]
		\subplotAA{all2.vs.1-in-2}[all $k\sys=2$ vs.\ \${1-in-2}]
		\subplotAA{all2.swatch}
		
		\subplotAA{all3.vs.single}[all $k\sys=3$ vs.\ single-letter]
		\subplotAA{all3.vs.1-in-3}[all $k\sys=3$ vs.\ \${1-in-3}]
		\subplotAA{all3.swatch}
	\end{minipage}
	\smallColorScale
	\caption{All possible graph codes with connected graphs for $k\sys \in \{ 2,3 \}$, continued in \Cref{fig:all-codes-B} for $k\sys \in \{4,5\}$---in particular, the plots for $k\sys=3$ do \emph{not} contain the smaller graph codes.
		The axes' extent is the interval $[0,1/2]$, and the three black dots (from center to left) are the loci of depolarizing, BB84, and 2-Pauli channel families, respectively---for more details see \Cref{fig:surface-plot-expl}.
		The first two columns are colored according to the relative threshold of the best code in the code family: the first column shows the threshold vs.\ the single-letter threshold, the second column vs.\ the threshold of a \${1-in-$k$} repetition code.
		The color scale indicates the distance between noise threshold and single-letter threshold.
		The rightmost column shows the same surfaces indexed by colors referring to the highest CI threshold graph state codes; for the corresponding color legend see \Cref{fig:all-codes-legend-2,fig:all-codes-legend-3}, respectively.
	}
	\label{fig:all-codes-A}
\end{figure}

\begin{figure}
	\hspace*{-0.9cm}
	\begin{minipage}{18cm}
		\centering
		\subplotAA{all4.vs.single}[all $k\sys=4$ vs.\ single-letter]
		\subplotAA{all4.vs.1-in-4}[all $k\sys=4$ vs.\ \${1-in-4}]
		\subplotAA{all4.swatch}
		
		\subplotAA{all5.vs.single}[all $k\sys=5$, $k\env\leq 3$ vs.\ single-letter]
		\subplotAA{all5.vs.1-in-5}[all $k\sys=5$, $k\env\leq 3$ vs.\ \${1-in-5}]
		\subplotAA{all5.swatch}
	\end{minipage}
	\caption{Continuation of \Cref{fig:all-codes-A}, for $k\sys\in\{4,5\}$ and $k\tot\le 8$ (in both cases).
		The color legends for the last column are in \Cref{fig:all-codes-legend-4,fig:all-codes-legend-5}.
	}
	\label{fig:all-codes-B}
\end{figure}

\begin{figure}
	\centering
	\includegraphics[width=.4\linewidth]{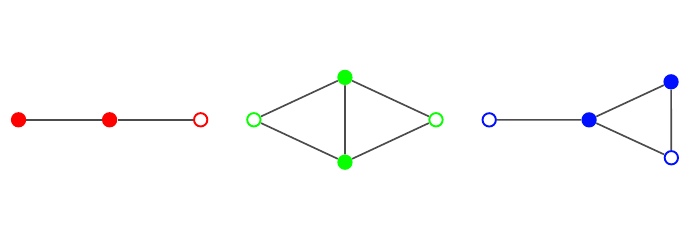}
	\caption{Color legend for all $k\sys=2$ codes in \Cref{fig:all-codes-A}.
		The system vertices are filled with solid color, the environment vertices in white.}
	\label{fig:all-codes-legend-2}
\end{figure}
\begin{figure}
	\centering
	\includegraphics[width=\linewidth]{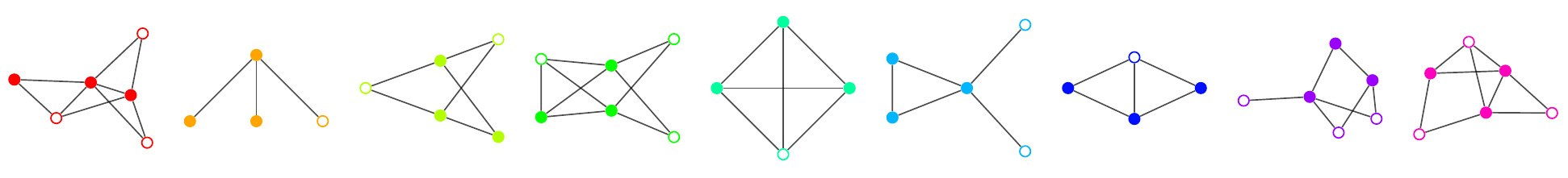}
	\caption{Color legend for all $k\sys=3$ codes in \Cref{fig:all-codes-A}.
		The system vertices are filled with solid color, the environment vertices in white.}
	\label{fig:all-codes-legend-3}
\end{figure}
\begin{figure}
	\centering
	\includegraphics[width=\linewidth]{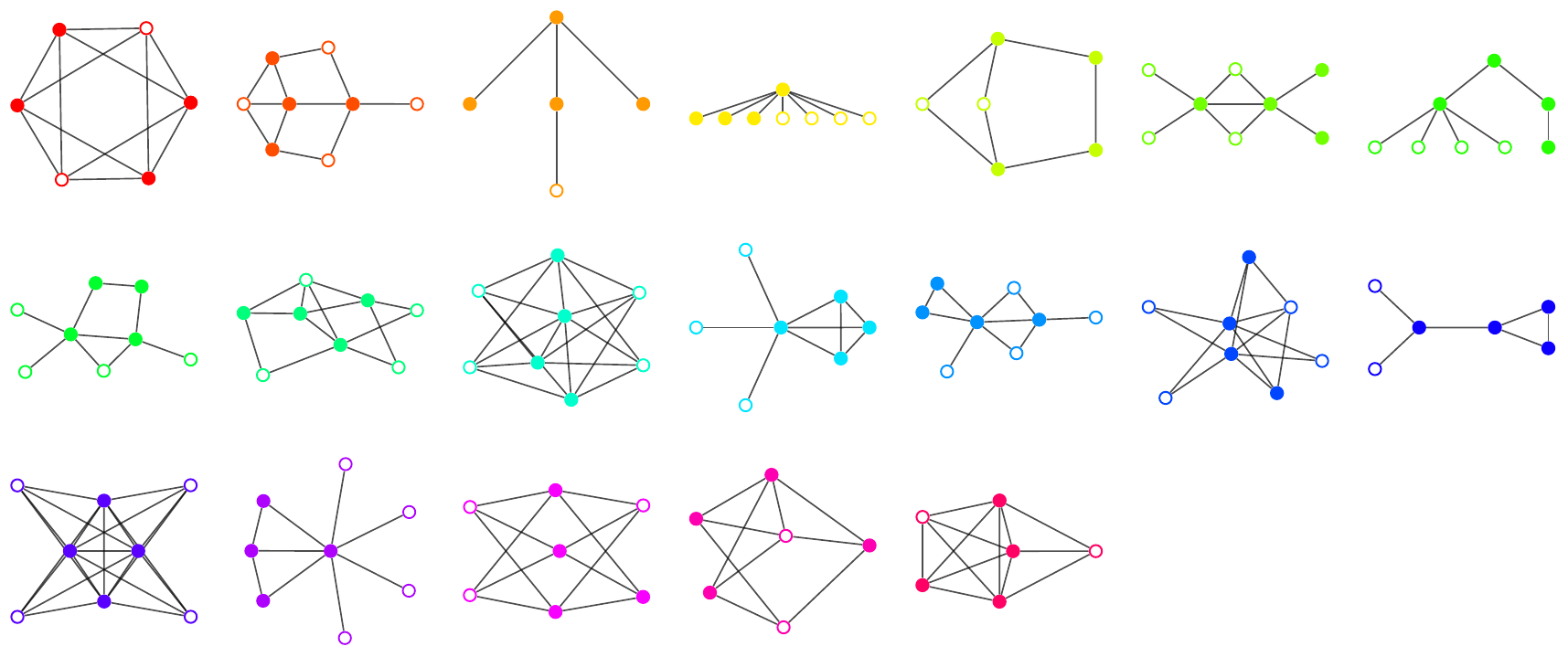}
	\caption{Color legend for all $k\sys=4$ codes in \Cref{fig:all-codes-B}.
		The system vertices are filled with solid color, the environment vertices in white.}
	\label{fig:all-codes-legend-4}
\end{figure}
\begin{figure}
	\centering
	\includegraphics[width=\linewidth]{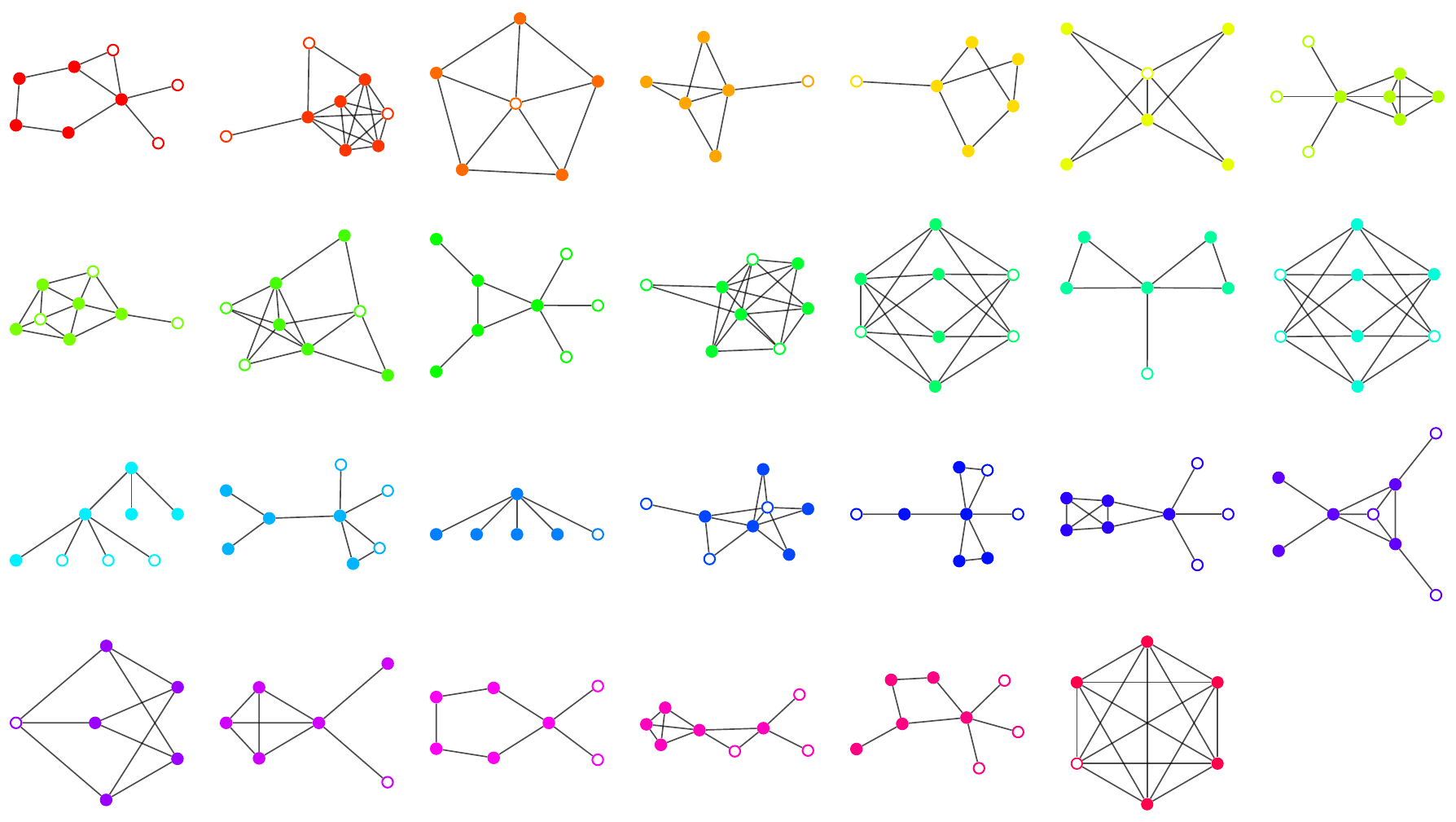}
	\caption{Color legend for all $k\sys=5$ codes with $k\env\leq 3$ in \Cref{fig:all-codes-B}.
		The system vertices are filled with solid color, the environment vertices in white.}
	\label{fig:all-codes-legend-5}
\end{figure}

\begin{table}[t]
	\centering
	\begin{tabular}{r|llllll}
		\toprule
		$k\sys$ & 2 & 3 & 4 & 5 & 6 & 7  \\
		\midrule
		$\#_\Gamma$ of graphs \cite{A001349} & 1 & 6& 112& 11117& 11716571& $>164 \times 10^9$ \\
		$\#_{\ket\Gamma}$ of graph states (\Cref{eq:graph-ct1}) & 42 & 2474 & 809830 & $>298\times 10^6$ & $>152\times 10^{12}$ & $ 100\times 10^{18}$ \\
		filtered graph states & 32 & 256 & 3072 & 43008 & 917504 &  27951104 \\ 
		\bottomrule
	\end{tabular}
	\caption{Number of graph states to analyse in the exhaustive search before and after filtering out redundancies.}
	\label{tab:graph-counts}
\end{table}

As part of our numerical study we also performed a brute-force search over all small graph states and collected their respective CI thresholds.
This is a challenging problem: the number of connected graphs on $k$ vertices $\#_\Gamma(k)$ grows exponentially in $k\sys$, see \Cref{tab:graph-counts}.
On top of that, any subset of the total vertices can act as system vertices.
This means that with $k\sys < k\tot \le 2k\sys$ vertices overall, there are
\begin{align}\label{eq:graph-ct1}
	\#_{\ket\Gamma} = \sum_{k=k\sys+1}^{2k\sys} \binom{k}{k\sys} \#_\Gamma(k)
\end{align}
different graph states to consider; as listed in \Cref{tab:graph-counts}, already for $k\sys=5$ this is more than $10^8$ states.

However, since $\Gamma\envtoenv$ does not factor into the expressions for $\lambda$ and $\lambda\sys$ in \Cref{eq:lambda-state-sum,eq:lambda-state-tr-sum}, we are overcounting: edges within the environment play no role.
Furthermore, permuting system and environment vertices in a synchronous fashion also leaves the CI invariant.
Finally, permutations \emph{only} within the environment---keeping the edges to the system vertices attached---are redundant as well.

These observations suggest the following counting:
\begin{enumerate}
	\item List all non-isomorphic system graphs on $k\sys$ vertices; call this set $S$.
	\item For each $\Gamma\sys\in S$, we iterate over all binary matrices of size $k\sys \times k\sys+1$; if the $i$\textsuperscript{th} row has an entry in column $j$, we draw an edge from the $i$\textsuperscript{th} system vertex to the $j$\textsuperscript{th} environment vertex.
	\item There are no edges within the environment only.
\end{enumerate}

This procedure can be optimized further by deleting isomorphic duplicates from the final list of graphs (note that even though we started with all non-isomorphic system graphs, we will generate redundant links between system and environment).
Yet even without this final optimization, the number of graphs to search over shrinks dramatically (cf.\ \Cref{tab:graph-counts}).

With this method, we managed to exhaustively search all graphs up to $k\sys=5$ system vertices, while restricting the overall vertex count $k\tot\le 8$---i.e.\ we performed a complete exhaustive search for $k\sys \leq 4$, and an exhaustive search for $k\sys=5$ and $k\env\leq 3$.
The constraining factor at this point is definitely memory, as e.g.\ the $\lambda$ and $\lambda\sys$ expressions take $\approx 3.5$ TB of memory in the latter case.

We have plotted the CI thresholds for our results for $k\sys\in\{2,3,4,5\}$ in \Cref{fig:all-codes-A,fig:all-codes-B}.
Noteworthy is that around the depolarizing channel the repetition codes perform best (given by both the complete and star graphs in \Cref{fig:all-codes-legend-3,fig:all-codes-legend-5}).
Deviating from unbiased noise, and depending on whether $ZY$ or $ZX$ noise dominates, the local Clifford invariance (which renders star and complete graphs equivalent) has to be broken; one of the choices performs better.

We note that since we restrict our search to connected graphs of one specific instance of $k\sys$, product codes---such as e.g.\ products of the optimal single-letter code---are not included in our search.
This explains why the code families as displayed can be worse than the single-letter threshold.

\end{document}